 \newtheorem{theorem}{Theorem}%[chapter]
 \newtheorem{proposition}[theorem]{Proposition}
 \newtheorem{lemma}[theorem]{Lemma}
 \newtheorem{corollary}[theorem]{Corollary}
 \theoremstyle{definition}
 \newtheorem{definition}[theorem]{Definition}
 \theoremstyle{remark}
\newcommand{\CCSm}{\ensuremath{\rm{CCS}^{-}}}
\newcommand{\HOCCSm}{\ensuremath{\rm{HOCCS}^{-}}}
\newcommand{\para}{\,|\,}
\newcommand{\n}[1]{\mbox{\rm n($#1$)}} %names
\newcommand{\fpv}[1]{\mbox{\rm fpv($#1$)}} %free process variables
\newcommand{\bpv}[1]{\mbox{\rm bpv($#1$)}} %bound process variables
\newcommand{\pv}[1]{\mbox{\rm pv($#1$)}} %process variables
\newcommand{\hosub}[2]{\{#1/#2\} }
\newcommand{\ve}[1]{\widetilde{#1}}
\newcommand{\st}[1]{\,{\xrightarrow{#1}}\, }
\newcommand{\wt}[1]{{\xLongrightarrow{#1}} }
\newcommand{\rc}[1]{{\color{red} #1}}
\newcommand{\R}{\ensuremath{\mathcal{R}} }
\newcommand{\sepp}{\vspace*{0.4cm}}
\newcommand{\nsepv}[1]{\vspace{0mm}}
\newcommand{\CCSMSB}{\sim_{ccs^{-}} } %strong bisimilarity
\newcommand{\NCCSMSB}{\not\sim_{ccs^{-}} } %neg strong bisimilarity
\newcommand{\CCSMWB}{\approx_{ccs^{-}} } %weak bisimilarity
\newcommand{\NCCSMWB}{\not\approx_{ccs^{-}} } %neg weak bisimilarity
\newcommand{\CCSMQSB}{\sim^q_{ccs^{-}} } %quasi-stong bisimilarity
\newcommand{\CCSMQSBV}{\sim^{qb}_{ccs^{-}} } %quasi-stong bisimilarity
\newcommand{\CCSMBB}{\approx_{ccs^{-}}^{br} } %branching bisimilarity
\newcommand{\HOCCSMSB}{\sim_{hoccs^{-}} } %strong bisimilarity
\newcommand{\HOCCSMWB}{\approx_{hoccs^{-}} } %weak bisimilarity
\newcommand{\HOCCSMQSB}{\sim^q_{hoccs^{-}} } %quasi-stong bisimilarity
\newcommand{\NHOCCSMQSB}{\not\sim^q_{hoccs^{-}} } %neg quasi-stong bisimilarity
\newcommand{\SE}{\equiv }
\newcommand{\DEF}{\stackrel{\textrm{def}}{=}} %definition
\newcommand{\nts}[1]{{\textcolor[RGB]{0,221,0}{#1}}}
\newcommand{\ntstrmd}[1]{}
\newcommand{\ntsvv}[1]{{\textcolor[RGB]{0,0,219}{#1}}}
\newcommand{\bctrmd}[1]{{#1}}
\newcommand{\tdup}[1]{}%remove the related comments
\definecolor{RED}{rgb}{1,0,0}\definecolor{BLUE}{rgb}{0,0,1}
\title{On Bisimulation in Absence of Restriction}
\author{Xian Xu
%\thanks{The paper is an expanded version of the work presented at EXPRESS/SOS 2016, EPTCS 8704: 78-92, 2016. }
%thanks{This work has been supported by NSF of China (61872142, 62072299, 61772336, 61572318, 61261130589, 61472239) and project ANR 12IS02001 PACE.}
\thanks{The author is supported by NSF of China (61872142, 62072299) and partially by the project ANR 12IS02001 PACE.}
\institute{East China University of Science and Technology, Shanghai, China (200237)}
\email{xuxian@ecust.edu.cn}
%\and
%Qiang Yin \quad\qquad Huan Long
%\institute{Shanghai Jiao Tong University, China}
%\email{\quad yinqiang.sjtu@gmail.com \quad\qquad longhuan@sjtu.edu.cn}
}
\begin{document}
\maketitle
%\setcounter{page}{1}
%\setlength{\baselineskip}{14pt}

%\oo{TODO : make it a journal paper.
%\begin{itemize}
%\item[DONE] add all the skipped/brief/missing proofs, e.g., Lemma \ref{l:syn-pro-encoding}, Theorem \ref{factor-bigd-smalld} and its preparing lemmas (see ICE2013 paper \cite{Xu13});
%\item[DONE] add the detailed discussion for normal characterization for context bisimulation in $\HOPid$ (see ICE 2013 paper \cite{Xu13});
%\item[DONE] add the discussion of the variant encoding in Section \ref{s:conclusion};
%\item adjust and improve the whole paper to be consistent (including presentation): 
%\begin{itemize}
%\item[DONE] start from Section \ref{s:conclusion}; %, now at Section \ref{smalld-charac}; 
%\item[DONE] %[\bc{to continue from Section 2}] 
%then from the beginning, noticing the result on the variant encoding; 
%\end{itemize}
%\item[DONE] spell-check and maybe more checks;
%\item send.
%\end{itemize}
%}

%\bc{TODO :\rc{make it a (workshop) paper} (technical stuff nearly ok now); } %\bc{$\checkmark$}}
%\begin{itemize}
%\item introduction.  \bc{$\checkmark$}
%\item definition.  \bc{$\checkmark$}
%\item conclusion.  \bc{$\checkmark$}
%\item tidy up (remove the tiny comments by redefining ``$\backslash$tdup").  \bc{$\checkmark$}
%\item page limit and more readable (\rc{flow})  \,\&\,  remove colors and nts %\bc{$\checkmark$}
%\item \greycolor{Words: facilitate, demonstrate, be dedicated to, }
%\end{itemize}
%\sep
%\vspace*{-.5cm}

\iftoggle{headerOutlineON}{%
  % use outline
\nts{TODO: move related discussion materials from repo. ``hp\_amb\_int\_trial" to here (done)}\\
\nts{AND expand further discussion (done).} \\
\nts{AND tidy up to a pper (first-round done).}\\
\nts{AND ~ CHECK THRU again \& tidy up toward a formal paper! (done).}\\
\nts{AND ~ \rc{$\st{~~~~}$}CLEAN UP  (done).}\\
\nts{AND ~ \rc{$\st{~~~~}$}MOVE through AGAIN\& to submit! (done).}\\
\nts{AND ~ \rc{$\st{~~~~}$}prepare submitting! (doing).}\\
} {%
  % no outline
}%toggle hiding

\begin{abstract}
\noindent\emph{\textbf{Abstract}}~ 
We revisit the standard bisimulation equalities in process models free of the restriction operator. 
As is well-known, in general the weak bisimilarity is coarser than the strong bisimilarity because it abstracts from internal actions. In absence of restriction, those internal actions become somewhat visible, so one might wonder if the weak bisimilarity is still `weak'. We show that in both CCScore (i.e., Milner's standard CCS without $\tau$-prefix, summation and relabelling) %(process-passing) 
and its higher-order variant (named HOCCScore), %first-order (name-passing) 
%process models, 
the weak bisimilarity indeed remains weak, i.e., still strictly coarser than the strong bisimilarity, even without the restriction operator. These results can be extended to other first-order or higher-order process models. 
Essentially, this is  due to the direct or indirect existence of the replication operation, which can keep a process retaining its state (i.e., capacity of interaction). By virtue of these observations, we examine a variant of the weak bisimilarity, called quasi-strong bisimilarity.  This quasi-strong bisimilarity  requires the matching of internal actions to be conducted in the strong manner, as for the strong bisimilarity,  and the matching of visible actions to have no trailing internal actions. We exhibit that in CCScore without the restriction operator, the weak bisimilarity exactly collapses onto this quasi-strong bisimilarity, which is moreover shown to coincide with the branching bisimilarity. 
%Moreover
These results reveal that in absence of the restriction operation, some ingredient of the weak bisimilarity indeed turns into strong, particularly the matching of internal actions. 
%It is however not clear if this is the case for higher-order CCS without the restriction, basically due to the indirect definition of the replication operator, which draws on its power inherently from the machinery of process-passing. 

\vspace*{.1cm}
\noindent\emph{keywords}: Strong Bisimulation, Weak Bisimulation, Restriction, Higher-order, First-order, Processes %Parameterization, Encoding, Context bisimulation, Higher-order, First-order, Processes %, Name-passing, Process-passing,

\vspace*{.1cm}
\noindent\emph{2000 MSC}: 68Q85
\end{abstract}
\sepp
%\Keywords{Strong Bisimulation, Weak Bisimulation, Restriction, Higher-order, First-order, Process Model, Formal Method}

%----------------------------------------------------------------------------------------------
%% !TEX root = ./main.tex
%introduction

\section{Introduction}\label{s:introduction}

Process models study the behaviour of concurrent systems, particularly their equivalence or degree of similarity \cite{SW01a,San12}. Bisimulation equality, called bisimilarity, is the most exploited notion of such equivalence \cite{Mil89,San11}. For two concurrent systems, a strong bisimulation requires each action, whether external (i.e., visible) or internal (i.e., invisible), of one system to be precisely matched by the other. In contrast, a weak bisimulation, as its name suggests, allows the bisimulation to be observational. Namely, an external action of one system can be matched by the same action of the other, possibly mingled with some internal actions. As is well-known, the weak bisimulation equality is usually coarser than the strong bisimulation equality, because the former can hide some computation inside the system (e.g., implementation). A typical way to achieve such hiding is by the restriction operator, sometimes called the localization operator, which literally has the effect of concealing a port/channel name from being discovered. 
For example, in the language of pure CCS \cite{Mil89}, the following process $M$ has three concurrent  components $P,Q,R$ connected by the operation of parallel composition ($\para$), and the components $P,Q$ share a restricted (or local) name $m$, represented by the restriction operation $(m)$, that can be used to keep $R$ from knowing or sensing something happening between $P$ and $Q$, virtually forming a subsystem.
\[
M \DEF (m)(P\para Q) \para R
\]

% an example? added

Restriction is a frequent and powerful operator in process models. Intensionally it can hide from outside world the critical information, and extensionally it facilitates internal or silent movement (possibly forced synchronization) so that the intrinsic implementation details are transparent to observers. From the viewpoint of computation, it provides a recourse to Turing completeness as well as interactional completeness  in the measurement of computability \cite{SW01a,Fu15}. Concomitantly, some undecidability issues come along with the high computability, e.g., undecidability of bisimulations, making it sometimes succumb to efficient use in practice. So sometimes it is tempting to work without the restriction operator. Although this may bring about certain decrease in expressiveness, the resulting model can avoid being too powerful to be tractable, and in effect be advantageous for practical applications. 
Moreover, fortuitously some models without the restriction operator still turn out to be computational complete, e.g., higher-order processes \cite{LPSS10a,SW01a}. 
In another notable work, Hirschkoff et al. \cite{HP10} study a sub-calculus of the pi-calculus without the restriction and the choice operations but featured with a special top-level replication. The focus of that work is to provide a new congruence result for that sub-calculus, by means of a syntactic characterisation of the (strong) bisimilarity. In this work, by contrast, we are interested in the relationship between the strong and weak bisimilarities in absence of the restriction opertation. 
% that does not include sa new congruence result for the pi-calculus is : bisimilarity is a congruence in the .
% (i.e., Turing complete) \cite{LPSS10a,SW01a}\cite{Others?}. 

It is natural to consider this: if the restriction operator is removed (so one loses the power of hiding), %at least explicitly), 
would weak bisimilarity still be coarser than strong bisimilarity? Or would the `gap' between weak and strong bisimilarities become less pronounced? 
In this work, we look into this question and provide some answer. 
%In particular, we take the standard CCS and Higher-order CCS as the test bed, both without the restriction operator. 
For our purpose, we take CCScore and its higher-order variant (named HOCCScore) as the test bed. %, both without the restriction operator. In particular, for our purpose We denote by 
CCScore (respectively HOCCScore) represents the standard CCS by Milner\cite{Mil89} (respectively plain CHOCS by Thomsen \cite{Tho93}) with the basic first-order (respectively higher-order) concurrency formalism, and without the $\tau$-prefix, summation and relabelling.
In turn, \CCSm\ (respectively \HOCCSm) denotes CCScore (respectively HOCCScore) free of the restriction operation. It is worth noting that \CCSm\ and \HOCCSm\ still admit the replication operation (in the latter, it is a derived operation), otherwise these models would be far less interesting.  
We choose CCScore and HOCCScore because they contain the interesting minimal part of the original models suitable (and non-trivial) for our study. 
Though useful, summation and relabelling are not essential for our work here. %and would (at best) add to unnecessary involvedness. 
In particular, having the $\tau$-prefix would actually defeat the purpose of this work, since it can be deemed as an operation derivable from the restriction operation. To see this, think of the $\tau$-prefix $\tau.P$ as $(c)(c.P\para \overline{c}.0)$ where $(c)Q, c.Q, \overline{c}.Q$ denote the standard CCS restriction, input, and output respectively.

The main goal of this paper is to examine %process models without the restriction operator, particularly 
the bisimulation equalities in calculi \CCSm\ and \HOCCSm. % (called bisimilarity). 
We are interested in the relationship between the weak bisimilarity and the strong bisimilarity, particularly in absence of the restriction operation. %At first glance, this seems quite trivial, because the latter is bound to be strictly subsumed by the former. However, we mention in the passing that this might not always be the case. 
Importantly, the lack of the restriction operator makes the interaction completely exposed to the environment, so that two weakly bisimilar processes may be forced to behave in the manner of strong bisimilarity, from certain perspective. Moreover, the situation can further vary when it comes to the style of interaction, say first-order synchronization (i.e., implicit name-passing) or higher-order process-passing.   

% We demonstrate that in some first-order process models, weak bisimilarity does collapses onto strong bisimilarity. In contrast, in some higher-order models weak bisimilarity stay put, i.e., strictly coarser than strong bisimilarity. This gives evidence to and lend confidence to the following conjectural statement:
% \[
% \mbox{\emph{In first-order process models, weak bisimilarity is coincident with strong bisimilarity.}}
% \]

\paragraph*{\textbf{Contribution}} ~ 
We demonstrate that in \CCSm, the weak bisimilarity does somehow collapse onto a bisimulation equality called quasi-strong bisimilarity. In particular, this quasi-strong bisimilarity requests strong bisimulation on internal moves, and almost the same as the weak bisimilarity does for external actions. By `almost', we mean that in the matching of an external (i.e., visible) action $\alpha$, one can make a few internal ($\tau$) actions before $\alpha$ but none after it, i.e., $\wt{}\st{\alpha}$ in standard notation  (also known as the `delay' transition \cite{SW01a}). The quasi-strong bisimilarity, as it appears, strengthens the weak bisimilarity, and moves toward the strong bisimilarity. 
Moreover, as a corollary, we show that the quasi-strong bisimilarity actually concides with the branching bisimilarity, which in turn implies the coincidence between the weak bisimilarity and the branching bisimilarity.

Specifically, we prove in detail that in \CCSm, the weak bisimilarity can indeed be tightened, to be coincident with the quasi-strong bisimilarity. That being said, there appears to be still some distance from the strong bisimilarity. %, i.e., the quasi-strong bisimilarity or weak bisimilarity is still coarser than the strong bisimilarity. 
This is essentially attributed to the replication operator, which can introduce infinity that in turn generates some kind of state-preserving behaviour (if a process maintains all its interactional capability after doing some action, we refer to that action as state-preserving; otherwise it is state-changing). Consequently to some extent, replication %(or recursion) 
plays a role of realizing immutability in a concurrent system even without the restriction operator.

In contrast, in \HOCCSm\ %higher-order CCS without the restriction operator, 
the weak bisimilarity appears resilient to the removal of the restriction and stays put anyhow. That is, the weak bisimilarity retains being strictly coarser than the strong bisimilarity, and what is more, we do not know how to strengthen it to the quasi-strong bisimilarity as for \CCSm. This intrinsically develops from the complexity of process-passing, which enables one to encode recursion and yield richer behaviours subsequently. In that sense, erasing the restriction operator turns out to have little effect on the behavioural equivalences.  

This work can potentially help to precisely identify the boundaries between different notions of bisimulations, while clarifying further the features of the considered process models and the properties of the processes that are classified (differently) by the different bisimulations.
In the theoretical regard, 
the results of this work give evidence and %hopefully 
lend confidence to related study of %other first-order or higher-order process models 
concurrent models 
concerning the relationship between the weak and strong bisimilarities in a setting free of the restriction operator.
% In regard to application, the results of this work can hopefully help in choosing a suitable bisimulation (for instance, the less demanding one with clauses easy to handle or minimizing the matching bahevior) when it comes to concrete application scenarios.
In regard to application, the results of this work can hopefully help in choosing a suitable bisimulation (for instance, a less demanding one with clauses easy to handle) when it comes to practical scenarios.
Also the technical arguments for these results might be of independent interest.

%\nts{TODO from HERE...; see nt in the beginning.}

%\nts{TODO MORE? intro... maybe (not)?:!...}

\paragraph*{\textbf{Organization}} ~ 
The remainder of the paper is organized as follows. %...\nts{TO ADD} 
Section \ref{s:hoccsm} tackles the bisimilarity in \HOCCSm. %\higher-order CCS without the restriction operator. 
Section \ref{s:ccsm} deals with the \CCSm\ situation. In both sections, we first define the syntax and semantics of the corresponding calculus, and then present the main results with detailed discussion.  Section \ref{s:conclusion} concludes the paper and points to some future work.

%We mention in the passing the technical stuff that might be of independent interest.

%---------------------------
% Local Variables:
% mode: LaTeX
% TeX-master: "main.tex"
% End:

%% !TEX root = ./main.tex
%

% ...

% \sepp\sepp
% \fbox{\nts{\xxa{RE-ENGAGE : The following sections ~$\longrightarrow$ ~~ %(0) ADD the missing parts (e.g., Defs); 
% (1) TIDY UP! FOLLOW the $\clubsuit\clubsuit$}}}
% \sepp\sepp

% %\section{Noting down to the earth...}\label{s:introduction}

% ...

%\section{On bisimilarity in \HOCCSm}\label{s:hoccsm}
\section{On the bisimulation equality in \HOCCSm}\label{s:hoccsm}

%\HOCCSm\ stands for Higher-Order CCS without the restriction operator. 
We first define \HOCCSm, i.e., HOCCScore without the restriction operation, and then discuss the relationship between the strong and weak bisimulation equalities, that is, the bisimilarities.

\subsection{Calculus \HOCCSm}
%Definition of the higher-order CCS \HOCCSm. \nts{TO ADD: ADAPTP from related work (maybe that in ``hp\_comcan\_gl\_p1''; see ``hoccs\_ref.zip'' here in this repo.)}
%\sepp\sepp

%-----------------------------------------------------------------------

%We define the higher-order CCS without the restriction operator, notation \HOCCSm.
A \HOCCSm\ process is given by the following grammar. 
We denote names by lowercase letters, processes by uppercase letters, and process variables by $X,Y,Z$.
\[
\begin{array}{l}
P,P' ::= 0 \;\Big{|}\; X \;\Big{|}\;  a(X).P \;\Big{|}\; \overline{a}P'.P \;\Big{|}\; P\para P' 
%\,\Big{|}\, (c)T \,\Big{|}\,  !u(X).T \,\Big{|}\, !\overline{u}T'.T
\end{array}
\]

The operators have their standard meaning: input prefix $a(X).P$, output prefix $\overline{a}P'.P$, and parallel composition $P\para P'$.
%; restriction: $(c)T$ in which $c$ is bound. 
We stipulate parallel composition to have the least precedence.

A process variable $X$ occurring in $P$ is bound by input-prefix $a(X).P$ and free otherwise. 
We use $\fpv{\cdot}$, $\bpv{\cdot}$, $\pv{\cdot}$ respectively to denote free process variables, bound process variables and process variables in a set of processes. Additionally, we use $\n{\cdot}$ to denote the names in a set of processes. A name or process variable is fresh if it does not appear in the processes under consideration. 
Closed processes are those having no free variables and considered by default in discussion.
As usual, we use $a.0$ as a shortcut for $a(X).0$, and $\overline{a}.0$ for $\overline{a}0.0$; moreover, the trailing $0$ is often omitted. 
Sometimes for clarity, we may write %$\overline{u}(A)$ or 
$\overline{a}[A]$ for higher-order output.
A tilde $\ve{\cdot}$ represents a tuple.
A higher-order substitution $P\hosub{A}{X}$ replaces free occurrences of variable $X$ with $A$ and can be extended to tuples in the expected entry-wise way.

A context $C$, or $C[\cdot]$ to emphasize the hole in it, is a process with some subprocess replaced by the hole $[\cdot]$, and $C[A]$ is the process obtained by substituting $A$ for the hole.
%In particular, contexts can be extended to multihole ones in the standard way \cite{SW01a}; basically a multihole context has several holes each of which may occur a couple of times. By default, we consider one-hole contexts.
We denote by $E[\ve{X}]$ the process expression $E$ (possibly) with free occurrence of the variables $\ve{X}$, and $E[\ve{A}]$ stands for $E\hosub{\ve{A}}{\ve{X}}$. % but avoiding name-capture. 
Essentially, $E[X]$ can be treated as a multihole context \cite{SW01a}, and sometimes we also write $E[\cdot]$ in the discussion.

%A context $C[\cdot]$ and an $E[X]$ are different in some sort. Essentially, $E[X]$ is a kind of multihole context allowing multiple occurrences of the hole, and disallows name capture (whereas $C[\cdot]$ ignores name capture) when there is the restriction operator. Here since we are working without the restriction operator, $E[X]$ can be treated simply as a multihole context (see \cite{SW01a} for more details about contexts). So we sometimes also write $E[\cdot]$ in the discussion. %, bearing in mind its difference from a general context. 

The semantics of \HOCCSm\ (on closed processes) 
is as below. The symmetric rules are omitted.
\[
\begin{array}{ll}
\frac{\displaystyle }{\displaystyle a(X).P\st{a(A)} P\hosub{A}{X}}   & \qquad
\frac{}{\displaystyle \overline{a}A.P\st{\overline{a}A} P}  \\\\
\frac{\displaystyle P\st{\lambda} P'}{\displaystyle P\para Q\st{\lambda} P'\para Q}
& \qquad 
\frac{\displaystyle P\st{a(A)} P'\quad  Q\st{\overline{a}A} Q'}{\displaystyle P\para Q \st{\tau} P'\,|\,Q'}\; 
\end{array}
\]

% \[
% \begin{array}{ll}
% \frac{\displaystyle }{\displaystyle a(X).T\st{a(A)} T\hosub{A}{X}}  &
% \frac{}{\displaystyle \overline{a}A.T\st{\overline{a}A} T}  \\
% \frac{}{\displaystyle !\overline{a}A.T\st{\overline{a}A} T \para !\overline{a}A.T} &
% \frac{\displaystyle }{\displaystyle !a(X).T\st{a(A)} T\hosub{A}{X}\para !a(X).T } \\
% \frac{\displaystyle T\st{\lambda} T'}{\displaystyle (c)T\st{\lambda} (c)T'}{\scriptstyle c\not\in \n{\lambda}} \qquad &
% \frac{\displaystyle T\st{(\ve{c})\overline{a}A} T'}{\displaystyle (d)T\st{(d)(\ve{c})\overline{a}A} T'} {\scriptstyle d \in \fn{A}{-}\{\ve{c},a\}} \\
% \frac{\displaystyle T\st{\lambda} T'}{\displaystyle T\para T_1\st{\lambda} T'\para T_1} {\scriptstyle \bn{\lambda}\,\cap\, \fn{T_1}=\emptyset} \qquad &
% \frac{\displaystyle T_1\st{a(A)} T_1'\quad  T_2\st{(\ve{c})\overline{a}A} T_2'}{\displaystyle T_1\para T_2 \st{\tau}(\ve{c})(T_1'\,|\,T_2')}\;  \xxxx{\scriptstyle \ve{c}\,\cap\, \fn{T_1} = \emptyset} %&
% \end{array}
% \]

We denote by $\alpha,\lambda$ the actions: internal move ($\tau$), input ($a(A)$), output ($\overline{a}A$).
% in which $\ve{c}$ is some local names carried by $A$ during the output. %We always assume no name capture with resort to $\alpha$-conversion.
Operations $\fpv{\cdot}$, $\bpv{\cdot}$, $\pv{\cdot}$, $\n{\cdot}$ can be similarly defined on actions.
As usual, $\wt{}$ is the reflexive transitive closure of internal actions, % $\tau$, 
and $\wt{\lambda}$ is $\wt{}\xrightarrow{\lambda}\wt{}$. Also $\wt{\hat{\lambda}}$ is $\wt{}$ when $\lambda$ is $\tau$ and $\wt{\lambda}$ otherwise. We use $\st{\tau}_k$ to mean $k$ consecutive $\tau$'s.
%The notations $\wt{}$, $\wt{\lambda}$ and $\wt{\widehat{\lambda}}$ are similar to those in \FOPi. %We also reuse $\equiv$ for the structural congruence in \HOPi\ (and also \HOPiDd\ to be defined shortly) \cite{SW01a}, and this shall not raise confusion under specific context.
%``$P\wt{\hat{\lambda}} P'$" abbreviates ``there is a process $P'$ such that $P\wt{\hat{\lambda}} P'$".
 %$P\wt{}\cdot \mathcal{R}\, Q$ means $P\wt{} Q'$ and $Q' \,\mathcal{R}\, Q$ (i.e. $(Q',Q)\in \mathcal{R}$), where $\mathcal{R}$ is a binary relation.
%We say relation $\mathcal{R}$ is closed under (variable) substitution if $(E\hosub{A}{X},F\hosub{A}{X})\in \mathcal{R}$ for any $A$ whenever $(E,F)\in \mathcal{R}$.
%A process diverges if it can perform an infinite $\tau$ sequence.
For a binary relation \R, we use $P\,\R\, Q$ as a shortcut for $(P,Q)\in \R$. 
Sometimes we write $P\st{\lambda}\,\mathcal{R}\, P'$ to mean that there exists $P''$ such that $P\st{\lambda} P''$ and $P''\,\mathcal{R}\, P'$.

We denote by $\SE$ the standard structural congruence \cite{MPW92,SW01a}. It is the smallest equivalence relation satisfying $\alpha$-convertibility over (bound) process variables, the monoid laws and commutative laws for parallel composition. That is,
\[
\begin{array}{ll}
a(X).P \SE a(Y).P\hosub{Y}{X} ~~(Y \mbox{ fresh}) &\qquad  P\para 0 \SE P \\\\
P\para (Q\para R) \SE (P\para Q)\para R &\qquad  P\para Q \SE Q\para P
\end{array}
\]

%,  for both composition 
%and restriction, and a distributive law $(c)(P\para Q) \SE (c)P\para Q$ (if $c\notin \fn{Q}$).

%We work up-to $\alpha$-conversion and always assume no capture.
As is well-known, replication can be derived in \HOCCSm\cite{Tho93,SW01a,LPSS08}. That is, we can define 
\[
!P \DEF \overline{c}Q_{\scriptscriptstyle c,P} \para Q_{\scriptscriptstyle c,P}, \qquad\qquad Q_{\scriptscriptstyle c,P}\DEF c(X).(\overline{c}X \para X \para P) \qquad\qquad(c \mbox{ fresh})
\] Sometimes the name $c$ used to achieve such a replication is referred to as a replicator name. 
Further, we can also define the so-called guarded replication for a prefix $\phi$ (input or output).
% \[
% !\phi.P \DEF Q_{\scriptscriptstyle c,\phi,P} \para \overline{c}Q_{\scriptscriptstyle c,\phi,P}, \qquad\qquad 
% Q_{\scriptscriptstyle c,\phi,P} \DEF c(X).(\phi.(X\para P \para \overline{c}X))
% \]
\[
\begin{array}{l}
!^{\scriptscriptstyle g}\, \phi.P \DEF \overline{c}Q_{\scriptscriptstyle c,\phi,P} \para Q_{\scriptscriptstyle c,\phi,P}, \\\\ %\qquad\qquad 
Q_{\scriptscriptstyle c,\phi,P} \DEF c(X).(\phi.(\overline{c}X\para X\para P))
\end{array}
\]

% \nts{\Large TO ADAPT the folloiwng (in small font)... until double line ...}
% \sepp\sepp

% {\small

\subsection*{\textit{Context bisimulation}}

Throughout, we are relying on the following standard notion of context bisimulation \cite{San92,San94}. 
\begin{definition}[Context bisimulation]\label{context-bisimulation}
A symmetric relation $\mathcal{R}$ on (closed) \HOCCSm\ processes is a (weak) context bisimulation (respectively strong context bisimulation), if $P\,\mathcal{R}\, Q$ implies the following properties:
\begin{enumerate}
\item if $P \st{\alpha} P'$ in which $\alpha$ is $a(A)$ or $\tau$, then $Q \wt{\widehat{\alpha}} Q'$ (respectively $Q\st{\alpha} Q'$) for some $Q'$ and $P'\,\mathcal{R}\, Q'$.
% \item if $P \st{(\ve{c})\overline{a}A} P'$, then $Q \wt{(\ve{d})\overline{a}B} Q'$ for some $B$ that is of the same type as $A$, i.e., a process abstraction, a name abstraction or not an abstraction, and moreover for every $E[X]$ such that $\{\ve{c},\ve{d}\}\cap \fn{E}=\emptyset$ it holds that $(\ve{c})(E[A]\para P') \; \mathcal{R}\;  (\ve{d})(E[B]\para Q')$.
% \item if $P \st{(\ve{c})\overline{a}A} P'$ in which $A$ is a process abstraction, a name abstraction or not an abstraction, then $Q \wt{(\ve{d})\overline{a}B} Q'$ for some $B$ that is respectively a process abstraction, a name abstraction or not an abstraction, and moreover for every $E[X]$ such that $\{\ve{c},\ve{d}\}\cap \fn{E}=\emptyset$ it holds that 
% \begin{equation}
% (\ve{c})(E[A]\para P') \; \mathcal{R}\;  (\ve{d})(E[B]\para Q'). \nonumber\tag{*}%\eqnumber{\triangle}
% \end{equation}
\item if $P \st{\overline{a}A} P'$, then $Q \wt{\overline{a}B} Q'$ (respectively $Q \st{\overline{a}B} Q'$) for some $B$ and $Q'$, and for every $E[X]$ it holds that 
\begin{equation}
E[A]\para P' \; \mathcal{R}\;  E[B]\para Q' \nonumber\tag{*}%\eqnumber{\triangle}
\end{equation}
\end{enumerate}
The (weak) context bisimilarity (respectively strong context bisimilarity), denoted by $\HOCCSMWB$ (respectively $\HOCCSMSB$), is the largest context bisimulation (respectively strong context bisimulation).
\end{definition}

%Relation $\SCB$ denotes the strong context bisimilarity. 
It is well-known that both $\HOCCSMWB$ and $\HOCCSMSB$ are congruences \cite{San92,Tho93,San94,SW01a}. 
Relation $\HOCCSMWB$ (similar for $\HOCCSMSB$) can be extended to open processes in the usual way: suppose %$\ve{X}\subseteq \fpv{P,P'}$,
$\ve{X} = \fpv{P,P'}$, 
then $P\HOCCSMWB P'$ if and only if $P\hosub{\ve{A}}{\ve{X}} \HOCCSMWB P'\hosub{\ve{A}}{\ve{X}}$ for all closed $\ve{A}$. 
\vspace*{.1cm}

It should be clear that by the definitions, the following implications are true (see \cite{San92,SW01a}). %The implications of Lemma \ref{l:hoccsm_bisi_rels} are immediate by the definitions.
\begin{lemma}\label{l:hoccsm_bisi_rels}
It holds that $\SE \;\subseteq\; \HOCCSMSB \;\subseteq\; \HOCCSMWB$.
\end{lemma}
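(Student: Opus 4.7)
The plan is to establish the two inclusions $\SE \subseteq \HOCCSMSB$ and $\HOCCSMSB \subseteq \HOCCSMWB$ separately, in each case by exhibiting the smaller relation as a (strong or weak) context bisimulation and then appealing to the maximality of the bisimilarity on the right.

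For $\SE \subseteq \HOCCSMSB$ I would show directly that $\SE$ itself is a strong context bisimulation. Symmetry is automatic since $\SE$ is an equivalence, so the task reduces to a transition-preservation lemma: if $P \SE Q$ and $P \st{\alpha} P'$, then there exists $Q'$ with $Q \st{\alpha} Q'$ and $P' \SE Q'$. I would prove this by induction on the derivation of $P \SE Q$, checking each generating axiom ($\alpha$-conversion of bound process variables in input prefix, the monoid law $P \para 0 \SE P$, associativity, and commutativity of parallel) together with the contextual closure rules that make $\SE$ a congruence. Once this lemma is in place, the $\tau$/input clause of Definition~\ref{context-bisimulation} is immediate. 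For the output clause, given $P \st{\overline{a}A} P'$ I take $B := A$, so that $Q \st{\overline{a}A} Q'$ with $P' \SE Q'$; then for every $E[X]$ the condition $E[A] \para P' \SE E[A] \para Q'$ follows from the congruence of $\SE$ under parallel composition. The main obstacle, though modest, lies in the associativity case of the transition-preservation lemma, where a synchronisation may straddle the re-bracketing of a triple parallel; a short case analysis on which two components are communicating (and which is idle) suffices, and the residuals are always related by $\SE$ through the mirror re-bracketing.

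For the second inclusion $\HOCCSMSB \subseteq \HOCCSMWB$ the argument is by definition unfolding: any strong context bisimulation $\mathcal{R}$ is \emph{a fortiori} a weak context bisimulation, because $\st{\alpha}$ is contained in $\wt{\hat{\alpha}}$ for every action $\alpha$ (take empty $\tau$-flanks on either side), and because the contextual closure condition in the output clause is stated identically in the strong and weak definitions. Therefore $\HOCCSMSB$ is itself a weak context bisimulation, and hence contained in the maximal one, $\HOCCSMWB$.
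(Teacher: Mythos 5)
Your proposal is correct and matches the paper's (implicit) reasoning: the paper gives no proof at all, merely asserting that the inclusions are ``clear by the definitions'' with a pointer to the literature, and your transition-preservation argument for $\SE$ plus the definition-unfolding for $\HOCCSMSB \subseteq \HOCCSMWB$ is exactly the standard argument that citation stands in for. The only nit is that in the output case the matching label is determined only up to $\SE$ on the emitted object (since $\SE$ is closed under contexts, it can rewrite inside $A$), so one should take $B \SE A$ rather than literally $B := A$; the contextual clause then still closes under $\SE$ by congruence.
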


\subsection{The weak context bisimilarity is still weaker than strong bisimilarity in \HOCCSm}

%We denote by $\HOCCSMSB$ the strong context bisimilarity in \HOCCSm. 
The following lemma states that the weak context bisimilarity is strictly coarser than the strong context bisimilarity.
\begin{lemma}\label{l:hopi_bisi_ws_coin}
On (closed) \HOCCSm\ process we have $\HOCCSMSB \,\subsetneq\, \HOCCSMWB$. 
\end{lemma}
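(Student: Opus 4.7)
Since Lemma~\ref{l:hoccsm_bisi_rels} already gives $\HOCCSMSB \subseteq \HOCCSMWB$, my plan is to establish strictness by exhibiting a pair of closed \HOCCSm\ processes $P$ and $Q$ with $P \HOCCSMWB Q$ but $P \not\HOCCSMSB Q$.

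The strategy is to leverage the derived replication operator, whose canonical internal unfolding step is a silent transition that cannot be matched in a single strong step. A natural candidate pair is $P \DEF\, !\overline{a}0$ and $Q \DEF\, !\overline{a}0 \para \overline{a}0$, so that $P \st{\tau} Q$ arises from the internal $c$-sync in the unfolded form of $!\overline{a}0$. The strong non-bisimilarity follows from the observation that $Q \st{\overline{a}0} P$, whereas $P$ cannot perform $\overline{a}0$ in a single strong step: writing $P = \overline{c}[S] \para S$ with $S = c(X).(\overline{c}X \para X \para \overline{a}0)$, every $\overline{a}$ in $P$ is guarded behind a $c$-input, so the only immediate transitions of $P$ are $\overline{c}[S]$, $c(A)$, and the self-sync $\tau$. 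This yields $P \not\HOCCSMSB Q$ immediately.

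To establish $P \HOCCSMWB Q$, I would define a candidate weak context bisimulation $\mathcal{R}$ containing $(P, Q)$ and closed under the transitions of both sides together with the higher-order output clause~(*). The guiding intuition is that any extra $\overline{a}0$-component on the $Q$-side can be mimicked by $P$ through a $\tau$-unfolding of the ambient replication followed by the direct output. The matching proceeds by case analysis on the possible transition labels $\overline{c}[S]$, $c(A)$, $\tau$, and $\overline{a}0$, with each action of $Q$ either matched identically by $P$ or matched weakly after a preliminary $\tau$-unfolding, exploiting the congruence of $\HOCCSMWB$ together with the structural congruence laws of Lemma~\ref{l:hoccsm_bisi_rels}.

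The principal obstacle I foresee is the higher-order output clause~(*) on the replicator channel: once $P$ emits $\overline{c}[S]$, the residual $S$ has lost access to the full replication, so the naive absorption argument for the extra $\overline{a}0$ breaks down when $S$ is plugged into a context $E[X]$ that does not itself restore the replicator. Handling this requires refining $\mathcal{R}$ to track how the replication degrades across $c$-outputs, or alternatively selecting a more robust witnessing pair (for instance, embedding the absorbable component behind a guard that survives the sensitive $c$-transition) so that enough replicated structure remains on each side of any residual. This delicate point is exactly where the absence of restriction manifests and is the crux of the argument.
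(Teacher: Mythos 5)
Your choice of witness and your argument for the strong half are essentially the paper's: the paper also takes the pair $!P$ versus $!P\para P$ (with $P\DEF \overline{d}0.0$ in place of your $\overline{a}0.0$) and rules out $\HOCCSMSB$ by observing that $!P$ can only fire immediate actions on the replicator name $c$. The gap is in the weak half: you never establish $P\,\HOCCSMWB\,Q$; you describe a candidate relation and then explicitly defer the difficult point to ``refining $\mathcal{R}$'' or ``selecting a more robust witnessing pair''. Since exhibiting a concrete pair in $\HOCCSMWB\setminus\HOCCSMSB$ is the entire content of the strictness claim, the proposal as written does not prove the lemma.

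Moreover, the obstacle you flag is not an engineering nuisance for this pair --- it is fatal, and it already bites at the input clause, before the output clause you worry about. Write $R\DEF\overline{a}0.0$, $S\DEF c(X).(\overline{c}X\para X\para R)$, so that $P\SE\overline{c}[S]\para S$ and $Q\SE P\para R$. Consider $P\st{c(0)}\overline{c}[S]\para\overline{c}0\para 0\para R$: this residual has no inputs and no $\tau$-transitions, hence can emit $\overline{a}$ at most once along any weak trace. Every weak $c(0)$-answer of $Q$ first performs $j\geq 0$ unfolding $\tau$'s (each adding one unguarded copy of $R$, and these are the only $\tau$'s available) and then consumes the unique top-level $c$-input, ending in a stable state with $2+j$ unguarded copies of $R$, i.e.\ with two consecutive $\overline{a}$-outputs available. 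No such state is weakly context bisimilar to $P$'s residual, so in fact $P\,\NHOCCSMWB\,Q$: no refinement of $\mathcal{R}$ can succeed, and a genuinely different witness is required --- one where the absorbing $\tau$ leaves no observable surplus once the derived replication is consumed or bypassed. Be aware that the paper's own one-line justification of the weak half (``every action by $!P$ can be matched by $!P\para P$ because the former is a component of the latter'') glosses over exactly these residuals and is vulnerable to the same objection, so the instinct in your last paragraph is correct and should be pushed to its conclusion rather than set aside.
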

\begin{proof}
To see that in \HOCCSm, the weak bisimilarity is strictly coarser than the strong bisimilarity, i.e., $\HOCCSMWB \,\not\subseteq\, \HOCCSMSB$, we examine the `replication', which is reproduced below for convenience.
\[
!P \DEF \overline{c}Q_{\scriptscriptstyle c,P} \para Q_{\scriptscriptstyle c,P}, \qquad\qquad Q_{\scriptscriptstyle c,P}\DEF c(X).(\overline{c}X \para X \para P)
\]
We claim that 
\begin{itemize}
\item[(1)] $!P \;\HOCCSMWB\; !P\para P$; but 
\item[(2)] $!P \;\not\HOCCSMSB\; !P\para P$
\end{itemize}

For part (1) of the claim, it should be clear that every action by $!P$ can be matched by $!P\para P$ because the former is a component of the latter. Moreover, an action by $!P\para P$, say $!P\para P \st{\lambda} T$ where $\lambda$ can be $\tau$ or visible, can be matched by $!P \st{\tau} !P\para P \st{\lambda} T$. 

For part (2) of the claim, suppose that $P$ can make an action $P\st{\lambda'} P'$ other than those over $c$, e.g., $P\DEF \overline{d}0.0$. Then $!P\para P \st{\lambda'} !P\para P'$ cannot be matched by $!P$ because $!P$ can only fire immediate visible actions over $c$. 
\end{proof}

%\noindent\textit{Remark}~ 
%We remark that Lemma \ref{l:hopi_bisi_ws_coin} actually demonstrates that the second inclusion of Lemma \ref{l:hoccsm_bisi_rels} is strict, and such is essentially owing to the replication that can be derived in the higher-order setting of \HOCCSm. Indeed, the ability of \HOCCSm\ to encompass recursive behaviour somewhat compensates the absence of the restriction operator, in effect to attain some involved interactions. 
As demonstrated by Lemma \ref{l:hopi_bisi_ws_coin}, that the second inclusion of Lemma \ref{l:hoccsm_bisi_rels} is strict is essentially attributed to the fact that the replication that can be derived in \HOCCSm. Indeed, the capacity of \HOCCSm\ to encode recursive behaviour somewhat compensates the loss in expressiveness incurred by the absence of the restriction operator. 
%\xxc{For instance, concerning context bisimilarity the following two \HOCCSm\ processes are equal: $\overline{a}[!g].(!g\para !h)$ and $\overline{a}[!h].(!g\para !h)$ (one can safely assume that the replications $!g$ and $!h$ are built using different fresh replicator names).
%}
We remark that actually the first inclusion of Lemma \ref{l:hoccsm_bisi_rels} is strict as well. That is, $\SE \;\subsetneq\; \HOCCSMSB$, where the \emph{difference} between $\SE$ and $\HOCCSMSB$ results from some distributive law \cite{HP08}; see \cite{LPSS10a} for more details. 

\section{On the bisimilation equality in \CCSm}\label{s:ccsm}

In this section, we first define \CCSm, i.e., CCScore without the restriction operation. %\ which stands for CCScore without the restriction operator. It is exactly the standard pure CCS by Milner \cite{Mil89} from which the restriction operator is eliminated.
 Then we discuss the relationship between the strong and weak %bisimulation equalities, i.e., 
 bisimilarities. 
 In particular, we show that the strong bisimilarity is still strictly finer than the weak bisimilarity. However, unlike the situation for \HOCCSm, the distance between the strong bisimilarity and the weak bisimilarity can be shortened. % in a sense. %, in the absence of the restriction operator. 
 This is evidenced by the so-called quasi-strong bisimilarity, which requests more than the weak bisimilarity and moves closer to the strong bisimilarity, %as explained in Section \ref{s:introduction} (), 
 but still turns out to be coincident with the weak bisimilarity.   

\subsection{Calculus \CCSm}

% We denote by \CCSm\ the pure CCS without the restriction operator. 
% The %(standard) 
% syntax is given below. 
The syntax of \CCSm\ is given as follows.
We use capital letters to stand for processes.
\[
P,Q := 0 \;\Big{|}\; a.P \;\Big{|}\; \overline{a}.P \;\Big{|}\; P\para Q \;\Big{|}\; !P
\]

The operational semantics is also standard and presented below (we skip the symmetric rules).
\begin{mathpar}
%({Struc})\quad
\inferrule{ }{a.P \st{a} P} \and 
\inferrule{ }{\overline{a}.P \st{\overline{a}} P} \\
\inferrule{P \st{\overline{a}} P' \and Q \st{a} Q'}{P\para Q \st{\tau} P'\para Q'} \and
\inferrule{P \st{\gamma} P'}{P\para Q \st{\gamma} P'\para Q} \\ %\\
%\inferrule{P\para !P \st{\gamma} P'}{!P \st{\gamma} P'} \and
\inferrule{P\st{\gamma} P'}{!P \st{\gamma} P'\para !P} \and
\inferrule{P\st{a} P' \and P\st{\overline{a}} P''}{!P \st{\tau} P'\para P''\para !P}
\end{mathpar}

There are three kinds of actions (ranged over by $\alpha,\beta,\gamma$): input ($a$), output ($\overline{a}$), and internal ($\tau$). The $\tau$ action is often referred to as silent or invisible, and the others as visible.
We write $\overline{\gamma}$ for the complement of $\gamma$, i.e., $\overline{\gamma}$ is $\overline{a}$ if $\gamma$ is $a$ and $a$ if $\gamma$ is $\overline{a}$. %The same applies to prefixes. 
Sometimes we will omit the trailing $0$ in a process, e.g., $a, \overline{a}$ are shortcuts for $a.0, \overline{a}.0$ respectively. 
Like \HOCCSm, a context $C[\cdot]$ is a process with some subprocess replaced by a hole $[\cdot]$, and $C[Q]$ means substituting the hole in $C$ with process $Q$.
% We say that an action $\alpha$ is state-changing if $P \st{\alpha} P'$ and $P\,\NCCSMWB\, P'$; otherwise it is state-preserving. We distinguish between these two kinds of actions because they are significant for the incoming discussion.
We reuse $\SE$ to stand for the standard structural congruence for \CCSm\ \cite{Mil89,SW01a}, like in \HOCCSm\ except that the rule for $\alpha$-convertibility disappears.
A name is said to be fresh if it does not appear in the current processes. 
Other conventions in \HOCCSm\ are carried over here (e.g., $\n{\cdot}$) and we will use them in need without further notice; this shall not cause confusion in contexts. 
A process is divergent if it can fire an infinite sequence of $\tau$ actions, e.g., $!(\overline{a} \para a)$ and $!\overline{a} \para !a$. 
Said otherwise, if a process is not divergent, then it can only engage finitely many internal actions. %interactions.

\subsection*{\textit{Bisimulation}}

We now define the bisimulations, strong and weak. It should be noted that the bisimulations here explicitly consider divergence property \cite{SW01a,San11,San12}. We say that a relation \R\ is divergence-sensitive, %\bc{(or co-divergent ?)},
if for every $P \,\R\, Q$, $P$ diverges if and only if $Q$ does.
%Divergence was initially not considered in the original bisimulation in CCS, but later found its importance \cite{SW01a,San12}. 
%To allow for more flexibility and make the results more interesting, we impose divergence sensitivity in the bisimulation of this work. Besides, divergence sensitivity appears to be a strong requirement to some extent, since it requires a pair of processes to simultaneously diverge or not unconditionally.
We impose divergence sensitivity in the bisimulation as it makes sense to distinguish between divergent and non-divergent processes, particularly from the standpoint of programming languages. Specifically, divergence sensitivity requires a pair of processes to simultaneously diverge or not unconditionally.

% \sepp\sepp
% \fbox{\nts{\large \xxa{RE-RE-RE-ENGAGE FROM \rc{$\clubsuit\clubsuit$ HERE HERE HERE!!!}: 
% %(0) ADD the missing parts (e.g., Defs); 
% }}}
% \sepp\sepp

\begin{definition}
A symmetric binary relation \R on \CCSm\ processes is a strong (respectively weak) bisimulation if it is divergence-sensitive and whenever $P \R Q$, it holds that
\begin{itemize}
%\item \R is divergence-sensitive \bc{(or co-divergent ?)}.
% \item if $P\st{a} P'$, then $Q\st{a} Q'$ (respectively $Q\wt{a} Q'$) and $P'\R Q'$;
% \item if $P\st{\overline{a}} P'$, then $Q\st{\overline{a}} Q'$ (respectively $Q\wt{\overline{a}} Q'$) and $P'\R Q'$;
% \item if $P\st{\tau} P'$, then $Q\st{\tau} Q'$ (respectively $Q\wt{} Q'$) and $P'\R Q'$.
\item if $P\st{\alpha} P'$, then $Q\st{\alpha} Q'$ (respectively $Q\wt{\widehat{\alpha}} Q'$) and $P'\R Q'$;
\end{itemize}
Two processes $P$ and $Q$ are strongly (respectively weakly) bisimilar, notation $P \,\CCSMSB\, Q$ (respectively $P \,\CCSMWB\, Q$), if there exists some strong (respectively weak) bisimulation \R such that $P \,\R\, Q$.
\end{definition}
\sepp

We call $\CCSMWB$ and $\CCSMSB$ the weak bisimilarity and strong bisimilarity respectively. As is well-known, they are equivalence relations and congruences.
%adept at preserving the operations of the calculus, i.e., both of them are congruences, particularly as opposed to the involved situation in the higher-order paradigm (viz., relatively hard to prove congruence properties). 
In the discussion of the bisimilarities, we may use the up-to techniques to build bisimulations, e.g., bisimulation up-to context.  These are well-established proof method for process models; see \cite{SW01a} for a comprehensive introduction. 
For the sake of convenience, we give the definition of the (weak) bisimulation up-to context. We shall see how it is used in the passing. % upon using.
\begin{definition}\label{def:up-to-context-ccs}
A symmetric binary relation \R on \CCSm\ processes is a (weak) bisimulation up-to context if it is divergence-sensitive and whenever $P \R Q$, it holds that
\begin{itemize}
\item if $P\st{\alpha} P'$, then $Q\wt{\widehat{\alpha}} Q'$ and there exist some context $C$, $P_1$ and $Q_1$ such that 
\[
P' \SE C[P_1], \quad Q' \SE C[Q_1], \quad \mbox{ and }\quad P_1 \,\R\, Q_1
\]
\end{itemize}
\end{definition}
As the following lemma states, if a relation is a bisimulation up-to context, then it is subsumed by the weak bisimilarity. The proof of this lemma amounts to showing that the weak bisimilarity is contextual (i.e., preserved by contexts), and divergence sensitivity would not raise obstacle because it does not involve explicit action matching; see \cite{SW01a} for a reference of proof and more discussion.  
\begin{lemma}\label{l:bisi-upto-context}
If \R\ is a bisimulation up-to context, then it holds that $\R\ \subseteq\ \CCSMWB$.
\end{lemma}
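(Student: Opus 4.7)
\medskip
\noindent\textbf{Proof plan.} The natural strategy is the one standard for up-to-context soundness results: enlarge $\R$ to the smallest relation $\S$ that contains $\R$ and is closed under multi-hole contexts and structural congruence, and then show that $\S$ is itself a weak bisimulation; by maximality of $\CCSMWB$ this gives $\R \subseteq \S \subseteq \CCSMWB$. Concretely, I would define
\[
\S \DEF \{(P,Q) \mid \exists\, D[\ve{X}],\, (P_i,Q_i)\in\R,\ P \SE D[\ve{P}],\ Q \SE D[\ve{Q}]\},
\]
where $D[\ve{X}]$ ranges over multi-hole \CCSm\ contexts. Since $\R\subseteq\S$ (take $D[X]\equiv X$), it suffices to prove that $\S$ is a weak bisimulation.

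\medskip
\noindent For action matching, I would suppose $P\SE D[\ve{P}]$ and $P\st{\alpha} P'$, and proceed by induction on the depth of the derivation of $D[\ve{P}]\st{\alpha} P'$, with case analysis on the last SOS rule applied. The easy cases are those where the transition is generated entirely within $D$ (e.g.\ the redex lies inside a prefix or replication of $D$ and does not touch any $P_i$): the identical derivation on the $Q$-side gives $D[\ve{Q}]\st{\alpha} D'[\ve{Q}]$ with $P'\SE D'[\ve{P}]$, keeping the result in $\S$. The nontrivial cases are (i) a single $P_i$ emits $\alpha$, where I invoke the up-to clause on $P_i\R Q_i$ to obtain $P_i'\SE C[P_1']$, $Q_i'\SE C[Q_1']$ with $P_1'\R Q_1'$, and reassemble $P'$ and the corresponding weak move of $Q$ as $D[\ldots,C[P_1'],\ldots]$ and $D[\ldots,C[Q_1'],\ldots]$; and (ii) a synchronization involving one or two of the $P_i$'s, where the up-to clause is applied to each participating pair and the resulting weak derivatives of $Q$ are combined (using that $\wt{}$ and $\wt{\widehat{\alpha}}$ are stable under parallel composition) to match on the $Q$-side, again rewriting the outcome as a context filled with $\R$-related processes. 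Closure of $\SE$ and $\S$ under parallel composition and prefixing is what makes all the reassembling go through.

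\medskip
\noindent For divergence sensitivity, suppose $P\SE D[\ve{P}]$ diverges. Any infinite $\tau$-computation from $P$ either (a) eventually uses only actions internal to $D$, in which case the same derivation works for $D[\ve{Q}]$; or (b) infinitely often exercises some $P_i$, in which case I would argue by a König-style pigeonhole that either some $P_i$ itself has an infinite $\tau$-sequence (so $Q_i$ does too, by divergence sensitivity of $\R$, and lifting back gives divergence of $D[\ve{Q}]$), or the computation produces infinitely many visible actions from the $P_i$'s that get consumed by $D$; in this second subcase the up-to clause lets me translate each visible $P_i$-move into a weak $Q_i$-move, and these can be spliced together inside $D[\ve{Q}]$ to yield an infinite $\tau$-computation on the $Q$-side.

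\medskip
\noindent The main obstacle I expect is precisely this divergence half: the action clause is standard, but showing that divergence transfers along the context closure requires combining divergence sensitivity of $\R$ with a combinatorial argument about the shape of infinite $\tau$-computations of $D[\ve{P}]$, and must use that visible moves of the $P_i$'s can be matched by $Q_i$ only up to some preceding $\tau$-actions, so the matched infinite computation on the $Q$-side has to be assembled piecewise. Once both clauses are verified, $\S$ is a weak bisimulation, hence $\R\subseteq\S\subseteq\CCSMWB$ as required.
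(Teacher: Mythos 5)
Your plan is the standard contextual-closure argument that the paper itself appeals to: the paper gives no detailed proof of this lemma, merely observing that it ``amounts to showing that the weak bisimilarity is contextual'' and deferring to \cite{SW01a}, so your route coincides with the intended one. The only substantive difference is that you treat divergence sensitivity as a genuine proof obligation (splitting the infinite $\tau$-computation of $D[\ve{P}]$ into context-internal steps, hole-internal divergence handled by divergence sensitivity of $\R$, and synchronizations on visible actions of the holes, which are necessarily matched by nonempty weak moves on the $Q$-side), whereas the paper dismisses this half in one clause; your caution is warranted, since this is in fact the delicate part of the argument and the part not covered by the cited reference, which does not consider divergence-sensitive bisimulations.
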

\sepp

In terms of the bisimilarity, actions can be divided into two classes: state-changing and state-preserving. We say that an action $\alpha$ as occurring in $P \st{\alpha} P'$ is state-changing if $P\,\NCCSMWB\, P'$; otherwise it is state-preserving. We distinguish between these two kinds of actions because they are significant for the incoming arguments.

It should be clear that the following implications are true \cite{Mil89,SW01a}. %We note that $\SE$ stands for the (standard) structural congruence, like in \HOCCSm.  
\begin{lemma}\label{l:ccsm_bisi_rels}
It holds that $\SE \;\subsetneq\; \CCSMSB \;\subseteq\; \CCSMWB$.
\end{lemma}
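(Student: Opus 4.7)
The plan is to handle the three containments in order of difficulty. Easiest is $\CCSMSB \subseteq \CCSMWB$: any strong bisimulation is by definition already a weak bisimulation, since $Q \st{\alpha} Q'$ implies the weaker $Q \wt{\widehat{\alpha}} Q'$, and the divergence-sensitivity requirement is imposed identically in both definitions. Hence this inclusion is immediate from the definitions, with no new relation to construct.

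For $\SE \subseteq \CCSMSB$, I would show that $\SE$ itself, viewed as a symmetric relation on \CCSm\ processes, is a strong bisimulation. Divergence sensitivity for $\SE$ is immediate, since structurally congruent processes have the same transition capabilities up to $\SE$ on residuals, so they diverge together. For the action-matching clause I would proceed by induction on the derivation of $P \SE Q$, running through each defining equation---the monoid laws $P \para 0 \SE P$ and $P \para (Q \para R) \SE (P \para Q) \para R$, and the commutative law $P \para Q \SE Q \para P$---together with the congruence rule for $\para$. In each case one exhibits a matching transition of $Q$ with residuals again related by $\SE$, which is a routine commuting-diagram check against the operational rules.

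To show strictness of the first inclusion, $\SE \subsetneq \CCSMSB$, I would exhibit the witness pair $0$ and $!0$. Both are non-divergent and neither admits any transition: the only rule producing a transition from $!P$ requires a transition of $P$, which $0$ does not have. So the symmetric singleton $\{(0,!0),(!0,0)\}$ is vacuously a strong bisimulation, giving $0 \CCSMSB !0$. Yet $0 \not\SE !0$, since the equations defining $\SE$ in \CCSm\ (just monoid and commutativity laws for $\para$) cannot rewrite $!0$ into $0$. An alternative witness, should one prefer something non-trivial, is $!(P \para Q)$ against $!P \para !Q$, whose distributive equivalence is valid under $\CCSMSB$ but is not a rule of $\SE$.

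The only step requiring real effort is the induction establishing that $\SE$ is a strong bisimulation. The main subtlety there is that after firing an action one in general recovers only an $\SE$-equivalent residual, not an identical one; this is precisely why one proves the bisimulation property of $\SE$ itself rather than attempting to reduce to syntactic equality. Everything else follows by direct inspection of the definitions.
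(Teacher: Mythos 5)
Your proof is correct and takes essentially the same route as the paper: both inclusions are observed to be immediate from the definitions (the paper does not even spell out the induction showing $\SE$ is a strong bisimulation), and strictness of the first inclusion is established by a concrete witness. The only difference is the choice of witness --- you use $0$ versus $!0$ (with the distributive law as an alternative), while the paper uses $!a.0$ versus $!a.0 \para !a.0$; both pairs are strongly bisimilar but not structurally congruent under the paper's definition of $\SE$, so either works.
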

Both of the implications of Lemma \ref{l:ccsm_bisi_rels} are immediate from the definitions. 
%For the first implication of that lemma to be strict, we notice that $!a(x).0$ and $!a(x).0 \para !a(x).0$ are strong bisimilar but not structural congruent.
For the first implication of the lemma to be strict, we notice that $!a.0$ and $!a.0 \para !a.0$ are strongly bisimilar but not structurally congruent.

%% !TEX root = ./main.tex

%remark about the relationship between the structural congruence and the strong context bisimilarity

\iftoggle{hidingON}{%
  % use hiding

} {%
  % no hiding
This is a good moment to contrast Lemma \ref{l:ccsm_bisi_rels} with Lemma \ref{l:hoccsm_bisi_rels}.
For Lemma \ref{l:hoccsm_bisi_rels}, the first inclusion of it is not clear to be strict. 
%For the first implication of that lemma, we do not know if it is strict too. 
That being said, we conjecture that $\HOCCSMSB$ coincides with $\SE$.
% \begin{conjecture}\label{conj:hoccs_struc_striin_strong}
% It holds that $\SE \,=\, \HOCCSMSB$.
% \end{conjecture}
%\noindent 
%Jumping ahead a little bit, 
A guiding intuition is that  
the accompanying examples used for Lemma \ref{l:ccsm_bisi_rels} above do not apply to Lemma \ref{l:hoccsm_bisi_rels}. The major reason is that the derived replication in \HOCCSm\ brings some extra moves in order to duplicate a process. 
To be more specific, in the case for \CCSm, %CCS without the restriction operator, 
we observe that $!a.0$ and $!a.0 \para !a.0$ are strong bisimilar but not structural congruent. These two processes can be defined in \HOCCSm, as below. 
\[
\begin{array}{lcl}
P_1 &\DEF& !a.0 \\
 &\SE& \overline{c}Q_{\scriptscriptstyle c,a.0} \para Q_{\scriptscriptstyle c,a.0}, \\ %\qquad\qquad\mbox{ in which }\\
 Q_{\scriptscriptstyle c,a.0} &\DEF& c(X).(\overline{c}X \para X \para a.0) \\\\
P_2 &\DEF& P_1 \para P_1
\end{array}
\]
However, $P_1$ and $P_2$ are not strong context bisimilar, let alone structural congruent. %This becomes clearer when we expand their definition, as below.
One can see that $P_2$ can have two consecutive output on $c$ while $P_1$ cannot match.
%though we we notice that $!a(x).0$ and $!a(x).0 \para !a(x).0$ are strong bisimilar but not structural congruent.
Following this, processes like $\overline{a}[!g].(!g\para !h)$ and $\overline{a}[!h].(!g\para !h)$ are not context bisimilar either (one can safely assume that the replications $!g$ and $!h$ are built using different fresh replicator names).

{More concerning the replication and in light of the example above, the derived replication in \HOCCSm\ does not appear to encode the standard primitive replication operator in a reasonably good way. For that purpose, there seems little hope to achieve a perfect interpretation of the replication in a restriction-free setting. 
}

}%toggle hiding

% (NO TRUE!) The first implication of that lemma is strict too. To see this, consider two processes defined as follows.
% \[
% P_1\DEF !a(X).0 \qquad\qquad P_2\DEF !a(X).0 \para !a(X).0
% \]
% It is strightforward that $P_1$ and $P_2$ are strong bisimilar, because every action ($\tau$ or input on $a$) by $P_2$ can be matched by $P_1$. Yet they are not structural congruent, because no rules of the structural congruence can equate them.

%We look into the conjecture below. 
%negative direction
%\input{appendix_conjecture_negative_direction.tex}
%positive direction
%\input{appendix_conjecture_positive_direction.tex}

%\subsection{Extension: \rc{Does Conjecture \ref{app:conjecture_hopim} hold for first-order restriction-free calculi, like CCS / pi without resutrction?}}

%\nts{TODO: in another job?! MOVE these (related) stuff to the new git repo ``bisi\_w2s" .}

%
%\clearpage
%

%---------------------------
% Local Variables:
% mode: LaTeX
% TeX-master: "main.tex"
% End:

\sepp

Now we define a bisimulation that stands in between the strong and weak bisimilarities. As will be shown, it coincides with the weak bisimilarity and is slightly weaker than the strong bisimilarity (thus so is the weak bisimilarity). 
\begin{definition}\label{d:quasi-strong-bisi}
%TOADAPT(adapt from the definition of weak bisimulation)
A symmetric binary relation \R on \CCSm\ processes is a quasi-strong bisimulation if it  is divergence-sensitive, and whenever $P \R Q$, the following properties hold. 
\begin{itemize}
\item if $P\st{\alpha} P'$ and $\alpha$ is not $\tau$, then $Q\wt{}\st{\alpha} Q'$ and $P'\R Q'$.
\item if $P\st{\tau} P'$, then $Q\st{\tau} Q'$ and $P'\R Q'$.
\end{itemize}
Two processes $P$ and $Q$ are quasi-strongly  bisimilar, notation $P \,\CCSMQSB\, Q$, if there exists some quasi-strong bisimulation \R such that $P\,\R\, Q$.
\end{definition}
For a quasi-strong bisimulation, since internal actions are bisimulated in a strong manner, divergence sensitivity somehow becomes a derived condition, and we include it in the definition for convenience.
By definition, it is straightforward to see that $\CCSMSB \;\subseteq\; \CCSMQSB \,\subseteq\, \CCSMWB$.

\subsection{The relationship between the weak and strong bisimilarities 
%\rc{NEED REWRITING!!!!!!}
in \CCSm}\label{s:rel_w_s_ccs}

We first remark that the approach for Lemma \ref{l:hopi_bisi_ws_coin} does not apply to the case of \CCSm, because the semantics ascertain that $!P\;\CCSMSB\; !P\para P$ (actually this holds in any CCS-like calculus, unless the replication is defined differently). This observation lends confidence to the coincidence between $\CCSMWB$ and $\CCSMSB$. {But this turns out to be wrong.} We will elaborate this in the current section.

% \sepp\sepp
% \fbox{\nts{\large \xxa{RE-RE-RE-ENGAGE FROM \rc{$\clubsuit\clubsuit$ HERE HERE HERE!!!}: 
% %(0) ADD the missing parts (e.g., Defs); 
% }}}
% \sepp\sepp

In \CCSm, the weak bisimilarity is still strictly coarser than the strong bisimilarity. The intrinsic reason is that we still have the replication operator, from which infinite behaviour arises. This renders false the  matching of visible actions in a strong manner (the best we can have is as Proposition \ref{p:tau_case} illustrates).

To see a counterexample, take 
\[
\begin{array}{lcl}
P_1 &\DEF& !c.d \para !\overline{c} \para d \\
P_2 &\DEF& !c.d \para !\overline{c} \para !c
\end{array}
\]
Obviously $P_1 \,\NCCSMSB\, P_2$. However $P_1 \,\CCSMWB\, P_2$, because the action 
\[P_1\;\st{d}\; !c.d \para !\overline{c} \para 0
\] can be simulated by
% \[P_2\;\st{\tau}\; !c.d \para !\overline{c} d\para 0 \para!c \;\st{d}\; !c.d \para !\overline{c} 0\para 0 \para!c
% \] 
\[P_2\;\st{\tau}\; !c.d \para d\para 0 \para !\overline{c}\para !c \;\st{d}\; !c.d \para 0 \para 0 \para !\overline{c} \para !c
\]
Henceforth, every $d$ produced by $P_1$ in simulating the subprocess $!c$ in $P_2$ can be simulated in a similar way.
So we have the following proper inclusion.
\begin{lemma}\label{con:ccs_bisi_ws_coin}
In \CCSm, $\CCSMSB \,\subsetneq\, \CCSMWB$. 
\end{lemma}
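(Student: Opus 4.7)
The first inclusion $\CCSMSB \subseteq \CCSMWB$ is already recorded in Lemma~\ref{l:ccsm_bisi_rels}, so the only task is to establish strictness, and for this I plan to use the two processes $P_1$ and $P_2$ that have just been introduced above the lemma statement.

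For the separation $P_1 \,\NCCSMSB\, P_2$, my plan is to observe that $P_1$ has the immediate derivative $P_1 \st{d} !c.d \para !\overline{c} \para 0$ supplied by its top-level $d$-prefix, whereas each of the three components $!c.d$, $!\overline{c}$, $!c$ of $P_2$ starts with a prefix drawn from $\{c,\overline{c}\}$. Inspecting the transition rules then shows that every single-step derivative of $P_2$ is labelled in $\{c,\overline{c},\tau\}$, so no strong bisimulation can match the $d$-action of $P_1$.

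For the inclusion $P_1 \,\CCSMWB\, P_2$ I would exhibit a concrete weak bisimulation. Writing $d^n$ for the $n$-fold parallel composition of the prefix $d$ (with the convention $d^0 \SE 0$), a brief exploration of the transition graphs suggests that every reachable derivative of $P_1$ is structurally congruent to a process of shape $!c.d \para !\overline{c} \para d^n$ and every reachable derivative of $P_2$ is structurally congruent to one of shape $!c.d \para !\overline{c} \para !c \para d^m$. I would therefore take
\[
\mathcal{R} \;\DEF\; \bigl\{\, (\,!c.d \para !\overline{c} \para d^n,\; !c.d \para !\overline{c} \para !c \para d^m\,) \;:\; n,m \geq 0 \,\bigr\}
\]
symmetrised and closed under $\SE$, so that $(P_1,P_2)\in\mathcal{R}$ with $n=1$ and $m=0$.

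The verification would then reduce to a case analysis on the origin of each transition. The synchronisation of $!c.d$ with $!\overline{c}$ produces a $\tau$ that raises the $d$-count by one and is available on both sides; the extra $\tau$-synchronisation of $!c$ with $!\overline{c}$ on the right can be matched by zero steps on the left since the residual pair is again in $\mathcal{R}$; visible actions on $c$ (from $!c.d$ on either side, or from $!c$ on the right) and on $\overline{c}$ (from $!\overline{c}$) are matched action-for-action; finally, a top-level $d$-input is matched immediately when the opposite $d$-count is positive and otherwise after one preparatory $\tau$, in keeping with the $\wt{\widehat{d}}$ shape of the weak clause. Divergence sensitivity is vacuous, since the subterm $!c.d \para !\overline{c}$ already guarantees an infinite $\tau$-sequence on both sides of every pair in $\mathcal{R}$. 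The hardest point of the verification will be the right-to-left simulation of a $d$-input when $n=0$ and its left-to-right mirror when $m=0$, since these are the only moves that force a genuine $\tau$ before the visible action and hence genuinely exploit weakness; everything else will be routine bookkeeping up to $\SE$.
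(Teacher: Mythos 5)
Your proposal is correct and follows the paper's own argument: the paper uses exactly the same pair $P_1, P_2$, dismisses $P_1 \,\NCCSMSB\, P_2$ as obvious for the reason you give (no immediate $d$-derivative of $P_2$), and sketches the same simulation of $d$-actions via a preparatory $\tau$ from $!c \para !\overline{c}$. Your explicit witness relation $\mathcal{R}$ over the shapes $!c.d \para !\overline{c} \para d^n$ and $!c.d \para !\overline{c} \para !c \para d^m$ merely makes rigorous what the paper leaves informal, and your transfer-condition and divergence checks go through.
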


In the remainder of this section, we prove the follow-up theorem (Theorem \ref{con:ccsm_bisi_ws_coin}), which says that the weak bisimilarity can somehow be %(perfectly) 
approximated by a partially strong bisimilarity, i.e., the quasi-strong bisimilarity. It reveals that in \CCSm, there is truly some stronger characterization of the weak bisimilarity, or in other words, the gap between the strong and weak bisimilarities is diminished to some (arguably) noticeable extent. In a broader sense, this exhibits that the weak bisimilarity can indeed be strengthened in a process model without the restriction operator, at least in the first-order paradigm.

\begin{theorem}\label{con:ccsm_bisi_ws_coin}
Assume $P$ and $Q$ are \CCSm processes. Then $P \,\CCSMWB\, Q$ implies $P \,\CCSMQSB\, Q$. 
\end{theorem}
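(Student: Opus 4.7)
The plan is to show that the weak bisimilarity $\CCSMWB$ is itself a quasi-strong bisimulation; since $\CCSMQSB$ is by definition the largest such relation, this would immediately yield the desired inclusion $\CCSMWB\subseteq\CCSMQSB$. Divergence sensitivity is built into the definition of $\CCSMWB$, so it transfers for free, and only the two action-matching clauses of Definition~\ref{d:quasi-strong-bisi} need to be strengthened: a $\tau$-step of $P$ must be answered by a \emph{single} $\tau$-step of $Q$ (not by the more permissive $\wt{}$ allowed in the weak clause), and a visible step of $P$ must be answered by a sequence $\wt{}\st{\alpha}$ with no trailing internal moves.

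I would start with the visible clause. Given $P\CCSMWB Q$ and $P\st{\alpha} P'$ with $\alpha$ visible, the weak matching supplies a derivation $Q\wt{}Q_1\st{\alpha} Q_2\wt{}Q^*$ with $P'\CCSMWB Q^*$. The task is to argue that the trailing segment $Q_2\wt{}Q^*$ is invisible to $\CCSMWB$, i.e., $P'\CCSMWB Q_2$. I would prove this by induction on the length of the trailing $\tau$-path, supported by a \emph{compression lemma}: if $R\st{\tau}R'$ in \CCSm\ and $R'\CCSMWB S$, then $R\CCSMWB S$ as well. The key justification is that in \CCSm\ every $\tau$ originates from a visible $a/\overline{a}$ synchronization whose two halves cannot be concealed from probing contexts, so the extra $\tau$s at the tail of the weak matching cannot hide behavioural content. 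With this lemma, the trailing path collapses and one obtains $Q\wt{}\st{\alpha}Q_2$ with $P'\CCSMWB Q_2$, as required.

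I would next address the $\tau$ clause. Given $P\CCSMWB Q$ and $P\st{\tau}P'$, the weak matching produces $Q\wt{}Q'$ of length $k\geq 0$ with $P'\CCSMWB Q'$. When $k\geq 1$, I would write $Q\st{\tau}Q_1\wt{}Q'$ and apply the same compression lemma to the residual $\wt{}$ to get $Q_1\CCSMWB P'$, delivering the single-$\tau$ match. When $k=0$, we have $P'\CCSMWB Q\CCSMWB P$, so the $\tau$-step from $P$ is \emph{state-preserving} in the sense of Section~\ref{s:rel_w_s_ccs}; iterating it exhibits an infinite internal computation, making $P$ divergent. Divergence sensitivity of $\CCSMWB$ then forces $Q$ to be divergent as well, so $Q$ admits some $Q\st{\tau}Q_1$; a final invariance argument (again leveraging the fact that no $\tau$ in \CCSm\ can be truly hidden) lets me conclude $Q_1\CCSMWB P'$.

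The main obstacle will be the compression lemma, and in particular its use in the degenerate $k=0$ subcase of the $\tau$ clause. Proving it cleanly seems to require combining (i) the observation that every $\tau$ in \CCSm\ is structurally witnessed by two visible complementary prefixes, (ii) the congruence properties of $\CCSMWB$ so that discriminating contexts can be imported, and (iii) divergence sensitivity to rule out trivial collapses. I expect to either unfold a direct contextual probing argument, or, more neatly, to set up an up-to-context relation and invoke Lemma~\ref{l:bisi-upto-context} to package the reasoning; the remaining paragraphs of the proof would then be essentially bookkeeping.
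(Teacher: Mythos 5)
Your overall strategy coincides with the paper's: both show that $\CCSMWB$ itself satisfies the two clauses of Definition~\ref{d:quasi-strong-bisi}, splitting into the visible case and the $\tau$ case, and both fall back on divergence sensitivity in the degenerate subcase. However, the load-bearing device of your proposal, the ``compression lemma'' ($R\st{\tau}R'$ and $R'\CCSMWB S$ imply $R\CCSMWB S$), is false in \CCSm. Take $R\DEF a\para\overline{a}$ and $R'=S\DEF 0$: then $R\st{\tau}R'$ and $R'\CCSMWB S$, but $R\,\NCCSMWB\, 0$ since $R$ offers the visible action $a$. The intuition you invoke to support it --- that every $\tau$ is witnessed by a visible $a/\overline{a}$ synchronization that cannot be concealed --- actually cuts the other way: precisely because a $\tau$ consumes two visible capabilities, it typically \emph{does} change the weak-bisimilarity class. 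This is the content of Lemma~\ref{l:ccs_tau_state}: for non-divergent $P$, every $\tau$-step satisfies $P\,\NCCSMWB\, P'$. So the trailing $\tau$s in a weak matching cannot be collapsed by a blanket invariance principle, and applying your lemma inductively along the tail would in general produce false identifications.

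What is actually true, and what the paper proves, is a counting statement rather than an invariance statement: any consecutive $\tau$-sequence decomposes into a finite state-changing prefix followed by state-preserving steps (Lemma~\ref{l:ccs_tau_state_div0}, Corollary~\ref{l:state-chg-bisi-prop0}); weakly bisimilar processes exhibit the same number of state-changing $\tau$s in such sequences (Lemma~\ref{l:state-chg-bisi-prop1}); hence a state-changing $\tau$ is matched by exactly one state-changing $\tau$, and any surplus internal moves in a matching transition --- in particular those trailing a visible action --- are forced to be state-preserving (Lemma~\ref{l:state-chg-bisi-prop}, Corollary~\ref{cor:state-chg-bisi-prop1}, Propositions~\ref{p:tau_case} and~\ref{p:visible_case}). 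Your $k=0$ subcase of the $\tau$ clause correctly extracts divergence of $P$ and hence of $Q$, but the ``final invariance argument'' again needs this phase/counting machinery (the synchronized entry into the state-preserving regime, Lemma~\ref{l:ccs_tau_state_div}): the mere existence of some $\tau$-move of $Q$ does not by itself give one landing in the class of $P'$. Replacing the compression lemma by the state-changing/state-preserving analysis is the missing idea; the rest of your outline then goes through essentially as in the paper.
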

This theorem immediately leads to the corollary below.
\begin{corollary}\label{cor:coin-weak-quasi-strong}
In \CCSm, it holds that $\CCSMWB \;=\; \CCSMQSB$.
\end{corollary}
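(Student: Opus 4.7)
The plan is to prove the theorem by showing that $\CCSMWB$ itself satisfies the two clauses of Definition \ref{d:quasi-strong-bisi}, which immediately yields $\CCSMWB \subseteq \CCSMQSB$. Fix a pair $P \CCSMWB Q$. It must be established that (i) for every $P \st{\tau} P'$ there is a single-step $Q \st{\tau} Q'$ with $P' \CCSMWB Q'$; and (ii) for every visible $P \st{\alpha} P'$ there is $Q \wt{}\st{\alpha} Q'$ (no trailing $\tau$'s) with $P' \CCSMWB Q'$. Starting from the weak matching handed over by $\CCSMWB$, the task reduces to pruning the superfluous internal actions on the $Q$-side.

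The decisive leverage is the absence of the restriction operator. Any $\tau$-transition $P \st{\tau} P'$ in $\CCSm$ originates either from a parallel synchronisation between two components on some free name $a$, or from a replicated subprocess firing both $a$ and $\overline{a}$. Because $a$ is globally visible, each of the subcomponents involved can be probed by an external observer placed in parallel. I would construct distinguishing contexts with fresh sentinel channels, roughly of the form $[\cdot] \para !a.e_1 \para !\overline{a}.e_2$ with $e_1,e_2$ fresh, and exploit the congruence of $\CCSMWB$ to force the matching $Q$-derivation to commit to a specific synchronisation event rather than hide it inside a long $\tau$-trail. Any leading or trailing $\tau$-padding on the $Q$-side would surface as additional observable $e_i$-interactions under the context, contradicting $P \CCSMWB Q$. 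Divergence sensitivity, built into both $\CCSMWB$ and $\CCSMQSB$, handles the degenerate cases by forbidding $Q$ from absorbing unbounded silent behaviour; and where convenient, Lemma \ref{l:bisi-upto-context} is invoked to close the candidate relation under contexts and cut down on syntactic bookkeeping.

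The visible-action clause is addressed by a parallel argument: if the weak matching $Q \wt{} \st{\alpha} \wt{} Q'$ had a non-empty trailing $\tau$-sequence, then the intermediate state $Q_1$ reached immediately after $\alpha$ would differ from $P'$ in a way detectable by an observer on $\alpha$'s complement, contradicting $P' \CCSMWB Q_1$ via congruence. The main obstacle, and the source of the most delicate casework, is clause (i): eliminating both leading and trailing silent padding from the $Q$-side simulation of a single $\tau$-step. The ever-present replication operator injects a wealth of spurious internal moves, and the challenge is to show that none of them can be exploited to ``offset'' a genuine synchronisation in $P$. I expect this to be handled by a structural case analysis on how $P \st{\tau} P'$ arises, followed by an iterative application of the sentinel contexts to progressively prune the $Q$-side derivation down to exactly one $\tau$-step matching the synchronisation in $P$.
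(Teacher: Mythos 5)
Your top-level strategy --- show that $\CCSMWB$ itself satisfies the two clauses of Definition \ref{d:quasi-strong-bisi} and combine this with the trivial inclusion $\CCSMQSB \subseteq \CCSMWB$ --- is exactly the shape of the paper's proof (Theorem \ref{con:ccsm_bisi_ws_coin} via Propositions \ref{p:tau_case} and \ref{p:visible_case}), and your observation that without restriction every $\tau$ decomposes into a visible pair $\st{a}\st{\overline{a}}$ is precisely the ``pair-action'' idea used in Lemma \ref{l:state-chg-bisi-prop1}. However, the mechanism you propose for pruning the $Q$-side internal moves --- parallel sentinel contexts plus congruence --- does not do the work. With $C[\cdot] = [\cdot]\para !a.e_1 \para !\overline{a}.e_2$, the process $C[Q]$ can match any move of $C[P]$ that is internal to $P$ by moves internal to $Q$, never touching the sentinels: weak bisimulation never \emph{obliges} a process to synchronise with an added component, so leading or trailing $\tau$-padding inside $Q$ produces no $e_i$-observations and hence no contradiction. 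Moreover, the genuinely hard case is the opposite of the one you target: when the $\tau$ fired by $P$ is \emph{state-preserving} (so $P' \CCSMWB P \CCSMWB Q$ and the weak matching may legitimately consist of \emph{zero} $\tau$'s), you must still exhibit a single $\tau$-step of $Q$ landing in the class of $P'$. This is where the paper needs its real machinery: the state-changing/state-preserving dichotomy, Lemma \ref{l:rep_invar} (replication-induced $\tau$'s preserve the state), Lemma \ref{l:ccs_tau_state_div0} (every process has only a finite state-changing $\tau$-prefix), and the counting argument of Lemma \ref{l:state-chg-bisi-prop1} forcing the two sides to consume state-changing $\tau$'s in lock-step. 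Your appeal to ``divergence sensitivity handles the degenerate cases'' is not a substitute for this.

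The visible-action clause is also aimed at the wrong conclusion. Trailing $\tau$'s after $\alpha$ are not contradictory --- a replicated subprocess can always emit harmless internal moves after $\alpha$ --- and your phrase ``contradicting $P' \CCSMWB Q_1$'' assumes the very relation between $P'$ and the intermediate state $Q_1$ that is to be established. What Proposition \ref{p:visible_case} actually shows is that any trailing $\tau$'s in $Q\wt{}\st{\alpha}Q_2\wt{}Q'$ must be state-preserving, so that $Q_2 \CCSMWB Q' \CCSMWB P'$ and the trailing segment can simply be dropped; the leading segment is likewise shown state-preserving. In short, the skeleton of your plan matches the paper, but the load-bearing steps are missing, and the context-based mechanism proposed in their place would not close the gaps.
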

\sepp

To establish the implication claimed by Theorem \ref{con:ccsm_bisi_ws_coin}, we 
%make some preparation %for  the  (and hence the coincidence). These preparation 
%that 
exploit %deeply 
the structure of \CCSm\ processes, and %meanwhile 
go through multiple analyses %(both syntactic and semantic) 
toward our goal, i.e., the weak bisimilarity is indeed a quasi-strong bisimulation.
\sepp

% \fbox{\tiny \rc{TO REMOVE!}
% \begin{minipage}{10cm}
% \nts{TODO (OLD): Maybe follow the line in Appendix \ref{app:conj_pos_ref}, which is a failure trial for \HOPim, but may be applicable to CCS. TO SEE ...}\\\\
% \nts{\rc{Also} notice the possibility of proving $\CCSMWB \;\subseteq\; \SE$, referring to \cite{Fu15} for the proof of the coincidence between bisimilarity and structural congruence in $\mathbb{C}$ \\
%  (if so, we need to work on some \textbf{`normal form'} of a \CCSm\ process (what does it look like? composition of input/output and replicated input/output?), to show two weakly bisimilar processes share the same normal form). {\small \rc{but} this seems hard in presence of replication}}

% \sepp
% \nts{ BELOW START a trial path ... (noticing the above framed remarks) ...}
% \sepp
% \end{minipage}
% }\sepp

%\sepp

%-----------------------------------------------------------------
%\sepp

% \sepp\sepp
% \fbox{\nts{\large \xxa{CLEAR FROM \rc{$\clubsuit\clubsuit$ HERE HERE HERE!!!}: 
% %(0) ADD the missing parts (e.g., Defs); 
% }}}
% \sepp\sepp

%\SEPALine

% \sepp\sepp
% \fbox{\nts{\large \xxa{RE-RE-RE-ENGAGE FROM \rc{$\clubsuit\clubsuit$ HERE HERE HERE!!!}: 
% %(0) ADD the missing parts (e.g., Defs); 
% }}}
% \sepp\sepp

\iftoggle{hidingON}{%
  % use hiding
} {%
  % no hiding
%As a warm-up, 
As a warm-up and the first observation, we capture a structural property in Lemma \ref{l:ccs_tdiv_struc}. Although this lemma is not directly used afterwards, it provides good intuition and analytical thought helping us to go through the discussion in what follows.

\begin{lemma}\label{l:ccs_tdiv_struc}
\nts{IS this lemma necessary? NOT used elsewhere! \rc{MAybe to remove...TO DECIDE soon!...}}\\
%\item[(1)] 
Assume $P$ is not divergent and makes an infinite number of some visible action $\alpha$ (possibly separated by a number of $\tau$ actions, but no other visible actions), that is, $P\wt{}\st{\alpha}\wt{}\st{\alpha}\wt{} \cdots \wt{}\st{\alpha} \cdots$.
Then $P$ must have a sub-process of the following form (up-to $\CCSMSB$):
\[
!\gamma_1.\gamma_2....\gamma_k.(\alpha.P'\para P'') \para P''' \quad (k\geqslant 0)
\] where
%\sepp\sepp

% \begin{enumerate}
% %possibility of the general form of the subprocess up-to $\CCSMSB$:
% \item %(1)
% $!\gamma_1.\gamma_2....\gamma_k.(\alpha.P'\para P'') \para P'''$ ($k\geqslant 0$), in which $P'''$ and at least one of $P'$ and $P''$ can make the action sequence $\wt{\gamma_1}\wt{\gamma_2}\cdots \wt{\gamma_k}$.
% This possibility may occur in a parallel composition. \nts{So check the case for parallel composition, tanks this into consideration...}
% \item %(2) 
% $!(\gamma_1.\gamma_2....\gamma_k.(\alpha.P'\para P'') \para P''')$ ($k\geqslant 0$), in which $P'''$ can make the action sequence $\wt{\gamma_1}\wt{\gamma_2}\cdots \wt{\gamma_k}$.
% This possibility may occur in a replication. 
% \end{enumerate}
\begin{enumerate}
%possibility of the general form of the subprocess up-to $\CCSMSB$:
\item %(1)
$P'''$ can make the action sequence 
%\cancel{$\wt{\gamma_1}\wt{\gamma_2}\cdots \wt{\gamma_k}$} ~
\ntsvv{$\wt{\overline{\gamma_1}}\wt{\overline{\gamma_2}}\cdots \wt{\overline{\gamma_k}}$}, and moreover, 

% \nts{Should it be the following? }

% $P'''$ can make the action sequence $\wt{\overline{\gamma_1}}\wt{\overline{\gamma_2}}\cdots \wt{\overline{\gamma_k}}$, and moreover,

\item if $P'''$ cannot have infinitely many occurrences of that sequence, then at least one of $P'$ and $P''$ can make the action sequence 
%\cancel{$\wt{\gamma_1}\wt{\gamma_2}\cdots \wt{\gamma_k}$} 
\ntsvv{$\wt{\overline{\gamma_1}}\wt{\overline{\gamma_2}}\cdots \wt{\overline{\gamma_k}}$}.

% \nts{Should it be the following? }

% if $P'''$ cannot have infinitely many occurrences of that sequence, then at least one of $P'$ and $P''$ can make the action sequence $\wt{\overline{\gamma_1}}\wt{\overline{\gamma_2}}\cdots \wt{\overline{\gamma_k}}$.
\end{enumerate}
% \sepp\sepp\sepp
% %$!\gamma_1.\cdots\gamma_k.\alpha.P'\para P''$ ($k\geqslant 0$), \\
% %$!(\gamma_1.\cdots\gamma_k.\alpha.P'\para P'')$ ($k\geqslant 0$), \\
% %$!\gamma_1.\cdots\gamma_k.\alpha.(P'\para P'') \para P''$ ($k\geqslant 0$), \\
% $!\gamma_1.\cdots\gamma_k.(\alpha.P'\para P'') \para P''$ ($k\geqslant 0$), \\
% %or %\nts{?...} $\alpha.!\gamma_1.\cdots\gamma_k.\alpha.P'\para P''$ ($k\geqslant 0$), \\
% or $!\alpha.P'$ (actually a special case of the above? yes, when $k{=}0$ and $P''\SE 0$) \nts{?...\rc{rewrite??...}} \\
% %up-to $\CCSMSB$ (or $\SE$?), 
% up-to $\CCSMWB$ \nts{(or is $\CCSMSB$ or $\SE$ possible? seems no.)} , 
% %for some $P'$ and $P''$ such that both of them are capable of firing actions $\wt{\overline{\gamma_1}}\cdots\wt{\overline{\gamma_k}}$.
% for some $P'$, and $P''$ that is capable of firing actions $\wt{\overline{\gamma_1}}\cdots\wt{\overline{\gamma_k}}$. \fbox{\nts{NEED more CHECK!! } \rc{HERE!}} 
\end{lemma}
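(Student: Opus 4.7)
The approach is a structural induction on $P$, exploiting that a syntactically finite \CCSm\ process can only produce infinitely many visible actions via a replication. First, I would put $P$ into the parallel normal form $P \SE T_1 \para \cdots \para T_n$, where each $T_i$ is $0$, a prefixed process $\gamma.Q$, or a replication $!Q$; such a decomposition is available because \CCSm\ has no restriction and no choice. Since a prefixed $T_i$ can fire each of its prefixes at most once, the prefixed components together contribute only finitely many $\alpha$'s, so at least one of the $T_i$'s must be a replication $!R$ whose spawned copies are responsible for infinitely many $\alpha$'s; collect the remaining $T_j$'s into $P'''$, so that $P \SE\, !R \para P'''$. Non-divergence of $P$ is used here to ensure that these $\alpha$'s are actually fired (rather than perpetually postponed by $\tau$-reductions between them) and hence that the infinite $\alpha$-stream genuinely traverses infinitely many fresh copies of $R$.

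Next I would extract the gatekeeper prefixes $\gamma_1,\ldots,\gamma_k$ from $R$ by a secondary induction on the structure of $R$. When $R \SE \alpha.P'$, take $k=0$ and $P''=0$; when $R \SE \gamma.R'$ with $\gamma \neq \alpha$, set $\gamma_1 = \gamma$ and recurse on $R'$. The delicate case is $R \SE R_1 \para R_2$ where $\alpha$ is reachable only through $R_1$: the structural position of $R_2$ lies \emph{outside} any prefix chain extracted from $R_1$, whereas the target shape places the extra branch $P''$ \emph{inside} that chain. Here the slack ``up to $\CCSMSB$'' in the statement becomes essential: since every copy spawned by $!R$ re-instantiates both $R_1$ and $R_2$, I would argue that $!R \para P'''$ is strongly bisimilar to a process of the shape $!R_0 \para P'''_0$, with $R_0$ a single sequential chain ending in $\alpha.P' \para P''$ and $P'''_0$ absorbing the migrated side branches together with the copies of $R_2$ that accumulate.

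Finally I would verify the two side conditions. Since $P$'s only visible action is $\alpha$, no intermediate $\gamma_i$ fired inside a spawned copy of $R$ may escape, so each $\gamma_i$ must synchronise with an $\overline{\gamma_i}$ supplied by the parallel environment; this yields condition (1), namely that $P'''$ can perform $\wt{\overline{\gamma_1}}\cdots\wt{\overline{\gamma_k}}$. For condition (2), if $P'''$ alone cannot supply these complements infinitely often, then for the infinite stream of $\alpha$'s to continue the required complements must come from the residual $\alpha.P' \para P''$ left after each firing, so $P'$ or $P''$ must itself admit the sequence $\wt{\overline{\gamma_1}}\cdots\wt{\overline{\gamma_k}}$.

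The main obstacle is the parallel subcase in the second step: reshaping $R$ so that its side branches live outside rather than inside the prefix chain while preserving strong bisimilarity. This is exactly where structural congruence alone is insufficient and the ``up to $\CCSMSB$'' tolerance in the lemma buys the needed flexibility. Beyond that, the remaining work is mostly careful bookkeeping around the finite-syntax / infinite-action tension, to confirm that the identified $!R$ really does account for the hypothesised infinite $\alpha$-stream and that the complementary-action analysis cleanly splits between $P'''$ and the residual $P', P''$.
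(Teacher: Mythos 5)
Your proposal follows essentially the same route as the paper's proof: a structural analysis of $P$ that isolates a replication as the only possible source of the infinite $\alpha$-stream, a reshaping of the replicated body via the distributive law $!(A\para B)\,\CCSMSB\, !A\para !B$ (which is exactly where the paper, too, spends its ``up to $\CCSMSB$'' budget), and the same dichotomy for conditions (1) and (2) according to whether $P'''$ can supply the complements $\overline{\gamma_1},\ldots,\overline{\gamma_k}$ infinitely often or the residuals $P',P''$ must take over. The paper organizes this as a single structural induction on $P$ rather than your normal-form-plus-secondary-induction packaging, but the content and the level of rigour are the same.
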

\begin{proof} %\fbox{\nts{NEED REDO and more CHECK!! } \rc{HERE!}}
Assume 
\begin{equation}\label{eq:infi_vis_action}
P\wt{}\st{\alpha}\wt{}\st{\alpha}\wt{} \cdots \wt{}\st{\alpha} \cdots
\end{equation}
Intuitively, $P$ must be of the form $C[!C_1[\alpha.P_1]]$ for some contexts $C$, $C_1$, and $P_1$, where $\alpha$ can be fired right after some interactions within $C$. 
Hereby we  proceed using %transition 
induction on $P$. % \nts{TODO??}.
\begin{itemize}
\item $P$ is $0$. This is impossible for the premise to hold.

\item $P$ is $a.P_1$. %\nts{(Is this enough?? seems yes)}
In this case, $\alpha$ is $a$ and $P\st{a} P_1\wt{}\st{a}\cdots$. By induction hypothesis (ind. hyp. for short henceforth), $P_1$ must have a subprocess of the designated forms. So does $P$.

\item $P$ is $\overline{a}.P_1$. %\nts{(Is this enough?? seems yes)}
In this case, $\alpha$ is $\overline{a}$ and $P\st{\overline{a}} P_1\wt{}\st{\overline{a}}\cdots$. By ind. hyp., $P_1$ must have a subprocess of the designated forms. So does $P$.

\item $P$ is $P_1\para P_2$. %\nts{(Is this enough?? seems yes)}
At least one of $P_1$ and $P_2$ must have an infinite number of some visible action $\alpha$ (possibly separated by a number of $\tau$ actions, as described in the premise of this lemma). By ind. hyp., it must have subprocesses of the designated forms. So does $P$.

There are two cases.
\begin{enumerate}
\item If $P_1$ or $P_2$ alone has the infinite $\alpha$ actions as incurred, then we are done by ind. hyp..

\item %\fbox{\nts{NEED more CHECK!! \rc{HERE!}} }
Otherwise, if that infinite $\alpha$ actions involve some interactions between $P_1$ and $P_2$, i.e., the $\tau$ actions in the sequence (\ref{eq:infi_vis_action}) result from the interactions between $P_1$ and $P_2$. In this case, $P_1$ must be of the form $!\gamma_1.\gamma_2....\gamma_k.(\alpha.P'\para P'')$  ($k\geqslant 0$), and $P_2$ must be able to perform the actions 
%\cancel{$\wt{\gamma_1}\wt{\gamma_2}\cdots \wt{\gamma_k}$}~ 
\ntsvv{$\wt{\overline{\gamma_1}}\wt{\overline{\gamma_2}}\cdots \wt{\overline{\gamma_k}}$}.  
This in turn has two subcases.
\begin{enumerate}
\item If $P_2$ is able to perform 
%\cancel{$\wt{\gamma_1}\wt{\gamma_2}\cdots \wt{\gamma_k}$}~ 
\ntsvv{$\wt{\overline{\gamma_1}}\wt{\overline{\gamma_2}}\cdots \wt{\overline{\gamma_k}}$} infinitely many times, then this ensures the sequence (\ref{eq:infi_vis_action}).
\item Otherwise if $P_2$ is unable to perform \ntsvv{$\wt{\overline{\gamma_1}}\wt{\overline{\gamma_2}}\cdots \wt{\overline{\gamma_k}}$} infinitely many times, then to have the sequence in (\ref{eq:infi_vis_action}), at least one of $P'$ and $P''$ must be capable of doing 
%\cancel{$\wt{\gamma_1}\wt{\gamma_2}\cdots \wt{\gamma_k}$} ~ 
\ntsvv{$\wt{\overline{\gamma_1}}\wt{\overline{\gamma_2}}\cdots \wt{\overline{\gamma_k}}$} (not necessarily infinitely many times though) .
\end{enumerate}
So in summary, $P_1\para P_2$ indeed has the desired form.
\end{enumerate}

\item $P$ is $!P_1$. There are two cases. 
\begin{itemize}
\item If $P_1$ itself has an infinite number of some visible action $\alpha$ (possibly separated by a number of $\tau$ actions). Then by ind. hyp., we are done.

\item %\fbox{\nts{NEED more CHECK!! \rc{HERE!}} }
Otherwise, if $P_1$ does not have any infinite number of some visible action $\alpha$,  % (possibly separated by a number of $\tau$ actions). 
then the infinite $\alpha$ actions must come from the synergy of $P_1$ and the outmost replication operation. 
In general, $P_1$ must take the form $\gamma_1.\cdots\gamma_k.(\alpha.P'\para P'')\para P''$ ($k\geqslant 0$). % up-to $\CCSMSB$ \nts{(or $\SE$?Seems no!)}. 
It satisfies the wanted form, because $!(\gamma_1.\cdots\gamma_k.(\alpha.P'\para P'')\para P'') \,\CCSMSB\, !\gamma_1.\cdots\gamma_k.(\alpha.P'\para P'') \para !P''$.
Notice that when $k$ is zero, it is of the (special) form $!\alpha.P'\para P''$.
% meeting the requirement. When $k$ is non-zero,  
%\nts{Is this ok and enough?? maybe yes nw.}
\end{itemize}

\end{itemize}
\end{proof}
}%hiding toggle

%\sepp

%-----------------------------------------------------------------
%\SEPALine
\sepp

The following lemma claims that every $\tau$ action changes the interactional capability, i.e., the `state' of a process,  except for those with infinite actions. We recall that a process $P$ diverges if it has an infinite sequence of $\tau$ actions, i.e., $P\st{\tau}\cdots\st{\tau}\cdots$, and we say that a process $P$ has an infinite number of visible action $\alpha$ ($\alpha$ is not $\tau$) if $P\wt{}\st{\alpha}\wt{}\st{\alpha}\cdots \wt{}\st{\alpha}\cdots$, among which a noticeable special case is $P\st{\alpha}\st{\alpha}\cdots \st{\alpha}\cdots$. % when $P$ must take the form $\alpha.P'\para P''$ (by a simple induction on $P$). 

% \sepp\sepp
% \fbox{\nts{\large \xxa{RE-RE-RE-ENGAGE FROM \rc{$\clubsuit\clubsuit$ HERE HERE HERE!!!}: 
% %(0) ADD the missing parts (e.g., Defs); 
% }}}
% \sepp\sepp

\begin{lemma}\label{l:ccs_tau_state}
%If $P$ is not capable of an infite number of action (visible or silent) and $P\st{\tau} P'$, then $P \,\NCCSMWB\, P'$.
If $P$ is not divergent and $P\st{\tau} P'$, then $P \,\NCCSMWB\, P'$.
\end{lemma}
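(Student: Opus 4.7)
The plan is to argue by contradiction: suppose $P$ is non-divergent, $P \st{\tau} P'$, and $P \CCSMWB P'$, and derive that $P$ must diverge---the contradiction. Intuitively, since $\CCSm$ lacks restriction, every $\tau$ should leave a visible trace in the external behaviour unless the replication machinery is rich enough to let $P$ and $P'$ look the same, but such richness forces divergence.

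First I would unfold the derivation of $P \st{\tau} P'$. Every internal move in $\CCSm$ arises from the synchronisation of two complementary prefixes $a.R_{1}$ and $\overline{a}.R_{2}$ somewhere inside $P$. Using $\SE$ I would normalise $P \SE \prod_{i} T_{i}$ as a parallel composition of threads, each of shape $0$, $\alpha.Q$ or $!Q$, and pinpoint the threads responsible for the two synchronising prefixes. The main dichotomy is whether these prefixes both sit underneath a replication operator or not.

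If both prefixes come from replicated threads $!U$ and $!V$---or from the body of a single replicated thread $!W$ whose body simultaneously produces $a$ and $\overline{a}$, or a $\tau$ outright---then all the relevant replications survive into $P'$, and exactly the same synchronisation can be re-fired from $P'$ and from each subsequent residual. This gives $P$ an infinite $\tau$-chain and contradicts non-divergence. Hence at least one of the prefixes must be top-level in $P$, and up to $\SE$ we may write $P \SE \overline{a}.R_{2} \para T$ with $P' \SE R_{2} \para T^{*}$, where $T^{*}$ absorbs whatever the matching $a$-side contributed. In particular, $P$ has the direct visible transition $P \st{\overline{a}} R_{2} \para T$, and the assumption $P \CCSMWB P'$ compels $P'$ to weakly match it by some $P' \wt{\overline{a}} Q$ with $R_{2} \para T \CCSMWB Q$.

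The subtle part will be closing the argument when $T$ itself still offers other, possibly replicated, $\overline{a}$-capabilities, because a naive counting of available $\overline{a}$-prefixes is then inconclusive. To handle this I would run a subsidiary induction on the $\tau$-depth $n(P) := \sup\{k : P \text{ admits a } \tau\text{-chain of length } k\}$, which is finite by K\"onig's lemma under the non-divergence hypothesis (since the $\tau$-successor set of any $\CCSm$-process is finite). The inductive step would combine the decomposition above with the congruence of $\CCSMWB$ and---crucially---with its divergence-sensitivity, which prevents $P'$ from camouflaging the missing prefix behind an infinite $\tau$-cascade. I expect the main difficulty of the whole argument to lie precisely in this bookkeeping of replicated $\overline{a}$-capabilities, since mishandling them would erase the very distinction between $P$ and $P'$ that the lemma claims to preserve.
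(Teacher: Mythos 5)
Your opening moves run parallel to the paper's: you unfold the $\tau$ into two complementary prefixes over some name $a$, and you use non-divergence to rule out the case where the synchronisation can be re-fired forever. But you then stop exactly where the lemma actually has to be proved. You yourself flag the case where $T$ still offers further, possibly replicated, $\overline{a}$-capabilities as ``the subtle part'' and ``the main difficulty'', and the tool you propose for it --- an induction on the $\tau$-depth $n(P)$ --- does not engage with that difficulty: the problem is the supply of \emph{visible} complementary actions, not the length of $\tau$-chains, and you give no inductive step showing how decreasing $\tau$-depth controls the stock of $\overline{a}$-prefixes. As written, the argument is open precisely on the configuration that matters, e.g.\ $P\SE \overline{a}.R_2\para{!\overline{a}.S}\para a.R_1\para\cdots$, where counting $\overline{a}$'s is indeed hopeless.

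The missing idea, which is how the paper closes the case, is to make the finiteness observation \emph{about the pair of complementary actions} rather than about the particular prefix that fired: if $P$ is not divergent, it cannot offer both infinitely many $a$'s and infinitely many $\overline{a}$'s (in the $\tau$-interleaved sense $P\wt{}\st{a}\wt{}\st{a}\cdots$), since two such infinite complementary supplies --- whether in parallel components or intertwined --- would themselves yield an infinite $\tau$-sequence. Hence at least one of the two actions consumed by the $\tau$, say $a$, comes from a finite, non-zero stock that can never be replenished (replenishment would require a replication, making the stock infinite). It is \emph{that} action you must count: $P'$ is short one $a$ relative to $P$, and since the count is finite this difference is visible to $\CCSMWB$, giving $P\,\NCCSMWB\,P'$ directly, with no induction needed. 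In other words, when the $\overline{a}$-side is replicated your bookkeeping should simply switch to the $a$-side, which non-divergence guarantees is finite; your syntactic dichotomy (replicated vs.\ top-level prefix) is a proxy for this semantic dichotomy (infinite vs.\ finite action supply), and only the semantic version lets you finish.
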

\begin{proof}
%\fbox{\nts{NEED more CHECK!!} } 

First, we note that the only possibility of having infinite actions is through replication, since the replication operator is the mere way of producing infinity, by a simple induction. 
We have the following observations that lead to the result of this lemma.  %\fbox{\nts{NEED more CHECK!! \rc{HERE!}} }
%\nts{Some main points:}
\begin{enumerate}
\item \label{l:ccs_tau_state_1}
If $P$ is not divergent, then for any $a$, $P$ cannot have both an infinite number of $\overline{a}$ actions and an infinite number of $a$ actions (notice that only $\tau$ actions occur between each two neighbouring visible actions). 
Assume for a contradiction that $P$ has both an infinite number of $\overline{a}$ actions and an infinite number of $a$ actions. Then there are two possibilities.
\begin{enumerate} 
\item \label{l:ccs_tau_state_1a}  
These two threads, i.e., infinite numbers of $\overline{a}$ actions and $a$ actions respectively, are in parallel composition. That is, $P\SE P_1\para P_2$ in which $P_1$ is capable of an infinite number of $\overline{a}$ actions and $P_2$ is capable of an infinite number of $a$ actions; 
\item \label{l:ccs_tau_state_1b}  
The two thread are intertwined, i.e., $P\wt{}\st{\gamma_1}\wt{}\st{\gamma_2}\cdots \wt{}\st{\gamma_j}\wt{}\cdots$ in which $\gamma_i$ is either $\overline{a}$ or $a$. 
\end{enumerate}
We claim that both case (a) and case (b) would lead to the divergence of $P$, a contradiction. Case (a) is obvious. For case (b), to yield the infinite visible action sequence (separated by $\tau$ actions only), all the actions $\gamma_i$ must be consumable through interactions, i.e., each $\gamma_i$ should go away by taking part in some interaction. This immediately results in divergence. %Moreover thanks to (1), 

\item \label{l:ccs_tau_state_2}
Due to \ref{l:ccs_tau_state_1}, if $P$ is not divergent and can make a $\tau$ that comes from an interaction, say over $a$, between its parallel components, then $P\st{\overline{a}}$ and also $P\st{a}$ but not both of these two actions incur infinity of the same action. This, in turn, means that $P$ either has a finite (non-zero) number of $\overline{a}$ actions or a finite number of $a$ actions. Suppose it is the latter, and the former is similar. Then we conclude that $P$ cannot be bisimilar to $P'$ because $P'$ is short of one action $a$ for such a name $a$. This suffices to obtain the result because once that finite visible action is fired it never comes back.    
\end{enumerate} \vspace*{-.5cm}
\end{proof}

We have a straightforward corollary of Lemma \ref{l:ccs_tau_state}, since $\CCSMSB$ is included in $\CCSMWB$.
%since $\CCSMWB$ is no finer than $\CCSMSB$.
\begin{corollary}\label{cor:state_change}
If $P$ is not divergent and $P\st{\tau} P'$, then $P \,\NCCSMSB\, P'$.
\end{corollary}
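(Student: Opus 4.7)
The plan is to read this off directly from Lemma \ref{l:ccs_tau_state} together with the inclusion recorded in Lemma \ref{l:ccsm_bisi_rels}. Concretely, I would first invoke Lemma \ref{l:ccs_tau_state} on the hypotheses of the corollary: since $P$ is not divergent and $P \st{\tau} P'$, the lemma yields $P \,\NCCSMWB\, P'$, i.e., $P$ and $P'$ are not weakly bisimilar.

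Next I would appeal to the containment $\CCSMSB \,\subseteq\, \CCSMWB$ from Lemma \ref{l:ccsm_bisi_rels}. Taking the contrapositive of this inclusion, any pair that fails to be weakly bisimilar must fail to be strongly bisimilar. Applying this to the pair $(P,P')$ gives $P \,\NCCSMSB\, P'$, which is exactly the claim.

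There is no real obstacle to overcome here: the whole content of the corollary lies in Lemma \ref{l:ccs_tau_state}, and the step from weak non-bisimilarity to strong non-bisimilarity is a one-line contrapositive. The only minor point worth flagging in the write-up is that divergence sensitivity of $\CCSMSB$ is inherited from the fact that strong bisimulations are weak bisimulations, so the non-divergence assumption propagates correctly and does not need to be re-examined. Accordingly, I would present the proof as a short two-line derivation rather than expanding it into a standalone argument.
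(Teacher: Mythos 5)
Your proposal is correct and matches the paper's own argument exactly: the corollary is derived from Lemma \ref{l:ccs_tau_state} via the inclusion $\CCSMSB \,\subseteq\, \CCSMWB$ of Lemma \ref{l:ccsm_bisi_rels}, read contrapositively. Nothing further is needed.
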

\sepp
Another corollary also follows straight away.
\begin{corollary}\label{cor:state_change2}
If $P\st{\tau} P'$ and $P \,\CCSMSB\, P'$ (or $P \,\CCSMWB\, P'$), then $P$ is divergent.
\end{corollary}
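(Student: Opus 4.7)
The plan is to obtain this corollary as a direct contrapositive of the two preceding results. Specifically, Lemma \ref{l:ccs_tau_state} tells us that for any non-divergent $P$, every $\tau$-transition $P\st{\tau} P'$ leads to a process that is not weakly bisimilar to $P$, and Corollary \ref{cor:state_change} gives the analogous statement for strong bisimilarity. So I would simply contrapose these: assume $P\st{\tau} P'$ together with $P \,\CCSMWB\, P'$ (resp. $P \,\CCSMSB\, P'$), and suppose for contradiction that $P$ is not divergent; then Lemma \ref{l:ccs_tau_state} (resp. Corollary \ref{cor:state_change}) yields $P \,\NCCSMWB\, P'$ (resp. $P \,\NCCSMSB\, P'$), contradicting the hypothesis.

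Since $\CCSMSB \,\subseteq\, \CCSMWB$ by Lemma \ref{l:ccsm_bisi_rels}, one can observe that the strong-bisimilarity case actually reduces to the weak-bisimilarity case: if $P \,\CCSMSB\, P'$ then $P \,\CCSMWB\, P'$, so applying the contrapositive of Lemma \ref{l:ccs_tau_state} alone suffices for both parts of the statement. Nothing else needs to be shown, and no new constructions or case analyses are required.

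There is no real obstacle here; the only thing to be careful about is stating the argument for both bisimilarities even though one subsumes the other, to match the phrasing ``$P \,\CCSMSB\, P'$ (or $P \,\CCSMWB\, P'$)'' in the corollary. The proof can therefore be presented in just a couple of lines, essentially noting that Corollary \ref{cor:state_change2} is the contrapositive of Corollary \ref{cor:state_change} / Lemma \ref{l:ccs_tau_state}.
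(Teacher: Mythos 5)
Your proof is correct and matches the paper's intent exactly: the paper presents Corollary \ref{cor:state_change2} as following ``straight away'' as the contrapositive of Lemma \ref{l:ccs_tau_state} (and Corollary \ref{cor:state_change}), which is precisely your argument, including the observation that the strong case reduces to the weak one via $\CCSMSB \,\subseteq\, \CCSMWB$. Nothing further is needed.
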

\sepp

The following lemma describes that $\tau$ actions resulting from a replication in a process do not change the state of that process. 

\begin{lemma}\label{l:rep_invar}
%For any $P'$ s.t. $!P\wt{} \st{\alpha_1} \wt{} \st{\alpha_2} \wt{} \cdots \wt{} \st{\alpha_k}\wt{} P'$, it holds that $!P \,\CCSMWB\, P'$. 
%In particular, 
Whenever $!P \wt{} P'$, it holds that $!P \,\CCSMWB\, P'$.
\end{lemma}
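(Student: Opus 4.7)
The plan is to construct a weak bisimulation up-to context (in the sense of Definition~\ref{def:up-to-context-ccs}) that contains the pair $(!P, P')$, and then to invoke Lemma~\ref{l:bisi-upto-context}. First, by a straightforward induction on the length of the $\tau$-derivation $!P \wt{} P'$, I would observe that every $\tau$-derivative of $!P$ has, up to $\SE$, the form $!P \para R$ for some residual $R$, because each SOS rule firing on a term with a top-level replication preserves that replication as a residual. Hence I write $P' \SE !P \para R$ with $!P \wt{} !P \para R$.

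Next, I would define the symmetric relation
\[
\R \;=\; \{(!P \para R_1,\, !P \para R_2) \,:\, !P \wt{} !P \para R_1 \text{ and } !P \wt{} !P \para R_2\},
\]
which contains the target pair $(!P, P')$ through $R_1 \SE 0$. Divergence sensitivity of $\R$ follows by noting that $!P \para R_i$ diverges iff $!P$ itself does: the forward direction is immediate because $!P$ is a parallel component; the converse is seen by prepending the derivation $!P \wt{} !P \para R_i$ to any infinite $\tau$-sequence out of $!P \para R_i$, producing an infinite $\tau$-sequence out of $!P$. Consequently the two sides of any $\R$-pair diverge in lockstep.

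The bisimulation-up-to-context clause is then verified by case analysis on a transition $!P \para R_1 \st{\alpha} T_1$. Case (i): the action is fired by the $!P$ component alone, so $T_1 \SE X \para !P \para R_1$ with $!P \st{\alpha} X \para !P$; match by $!P \para R_2 \st{\alpha} X \para !P \para R_2$ and decompose via the context $C[\cdot] = X \para [\cdot]$ over the $\R$-pair $(!P \para R_1, !P \para R_2)$. Case (ii): the action is fired by $R_1$ alone with $R_1 \st{\alpha} R_1'$ and $T_1 \SE !P \para R_1'$; if $\alpha = \tau$, then $!P \wt{} !P \para R_1 \st{\tau} !P \para R_1'$, so the pair $(!P \para R_1', !P \para R_2)$ is still in $\R$ and $!P \para R_2$ matches with the idle move; if $\alpha$ is visible, match by $!P \para R_2 \wt{} !P \para R_1 \para R_2 \st{\alpha} !P \para R_1' \para R_2$ (lifting $!P \wt{} !P \para R_1$ under the parallel context $R_2$), and decompose $T_1$ and $T_2$ via the context $C[\cdot] = [\cdot] \para R_1'$ over the $\R$-pair $(!P, !P \para R_2)$. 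Case (iii): the action is a synchronisation between $!P$ and $R_1$; combine (i) and (ii) by replaying $!P \wt{} !P \para R_1$ under the context $R_2$, firing the same $\tau$-synchronisation, and decomposing via $C[\cdot] = X \para [\cdot] \para R_1''$ over the $\R$-pair $(!P, !P \para R_2)$.

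The main obstacle is case (ii) with visible $\alpha$: a naive attempt to have $!P$ by itself simulate a visible action of a multi-copy residual $R_1$ is hopeless, since $R_1$ may be poised for actions that $P$ alone cannot reach from its start through $\tau$-steps only. The up-to-context machinery, together with the hypothesis $!P \wt{} !P \para R_1$, resolves this by first expanding $!P \para R_2$ with a parallel copy of $R_1$ via $\tau$'s and only then firing $\alpha$, reducing the matching to the $\R$-relatedness of the $!P$-kernel $(!P, !P \para R_2)$. Once all clauses of Definition~\ref{def:up-to-context-ccs} are checked, Lemma~\ref{l:bisi-upto-context} yields $\R \subseteq \CCSMWB$, and in particular $!P \CCSMWB P'$.
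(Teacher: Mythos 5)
Your proof is correct and follows essentially the same route as the paper's: both arguments build a weak bisimulation up-to context around the observation that every $\tau$-derivative of $!P$ has the shape $!P \para R$ (up to $\SE$), with the replicated side matching an action of the derivative by first replaying the $\tau$-derivation and then mirroring the action, and the residual being stripped off via the up-to-context clause. The only difference is presentational: your candidate relation is the symmetric set of all pairs of $\tau$-derivatives of $!P$, whereas the paper uses $\{(!P,P')\}\cup\CCSMWB$ and treats the single-step case before the general one.
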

\begin{proof}%[Proof of Claim \ref{l:rep_invar}]
%Suppose $!P\wt{} P'$, we show $!P \,\CCSMWB\, P'$ by induction on the length of $\wt{}$, i.e., the number of $\tau$ actions in it. Obviously, the result holds when the length is zero. %Now suppose the result holds for length $n$, we need to show it also holds for $n+1$. To see this, \nts{TODO...}

%(2) use induction on the length of $\wt{}$, to show that $!P \CCSMWB P'$ for every $P'$ s.t. $!P \wt{} P'$.  (\nts{MAKE this property into a lemma?})

We first look at a particular case, and the general one can be proven in essentially %more or less 
the same way.

%\nts{TODO: elaborate and/or rewrite (the bisimulation building part) ...}

\begin{enumerate}
\item %1) 
In particular, we first show the result for the case when the length of the $\tau$ action by $!P$ is $1$, i.e., $!P\st{\tau} P'$. \\
From the semantics, the $\tau$ action by $!P$ must be one of the following cases. %scenarios.
% \[
% \begin{array}{lll}
% !P \st{\tau} P_1\para !P \;\DEF P_1' &\qquad \mbox{ due to }&\quad P \st{\tau} P_1 \qquad \mbox{ or } \\
% !P \st{\tau} P_1\para P_2\para !P \;\DEF P_1'' &\qquad \mbox{ due to }&\quad P \st{a} P_1 \mbox{ and } P\st{\overline{a}} P_2
% \end{array}
% \] 

\begin{enumerate}
\item $!P \st{\tau} P_1\para !P$ and $P_1' \DEF P_1\para !P$ due to $P \st{\tau} P_1$. \\
We show that the following relation \R is a weak bisimulation up-to context (Definition \ref{def:up-to-context-ccs}; see also \cite{SW01a}), so that $!P \CCSMWB P_1'$. 
\[
\R \DEF \{(!P, P_1')\} \,\cup\, \CCSMWB
\]
Assume $!P \,\R\, P_1'$. We have two simulation scenarios.
\begin{enumerate} 
\item %Now we have $!P \CCSMWB P_1'$ because everything $P_1'$ can do can be simulated by $!P$ by first making the $\tau$ action and do whatever $P_1'$ does. 
Suppose $P_1'\st{\alpha} P_1''$. Then $!P$ simulates by $!P \st{\tau} P_1' \st{\alpha} P_1''$, and $P_1'' \,\R\, P_1''$ because $P_1'' \,\CCSMWB\, P_1''$. In the spirit of ``up-to context", one simply sets the context to be $[\cdot]$.

\item Conversely, suppose $!P \st{\alpha} P_2\para !P\DEF P_2'$ (when $\alpha$ is $\tau$, we may assume $P_1$ is not $P_2$ because otherwise the simulation is trivial). %  due to $P \st{\alpha} P_2$ in which $P_1$ is not $P_2$. 
Then $P_1'$ simulates by 
\[\begin{array}{lcl}
P_1' &\SE& P_1\para !P \st{\alpha} P_1\para P_2\para !P \\
 &\SE& P_2\para P_1\para !P \\
 &\SE& P_2 \para P_1' 
\end{array}
\] The simulation now continues by taking advantage of the ``up-to context". By setting $C\DEF P_2\para [\cdot]$, we have
\[
\begin{array}{l}
P_2' \SE P_2\para !P \SE C[!P] \\\\
P_1\para P_2\para !P \SE P_2 \para P_1' \SE C[P_1']
\end{array}
\] in which $!P \,\R\, P_1'$. So we are done.
\end{enumerate}

\item $!P \st{\tau} P_1\para P_2\para !P \;\DEF P_1''$ due to $P \st{a} P_1$ and $P\st{\overline{a}} P_2$. \\
We show that the following relation $\R'$ is a weak bisimulation up-to context, which implies that $!P \CCSMWB P_1''$.
\[
\R' \DEF \{(!P, P_1'')\} \,\cup\, \CCSMWB
\]
Assume $!P \,\R'\, P_1''$. We have two simulation scenarios.
\begin{enumerate} 
\item %Now we have $!P \CCSMWB P_1'$ because everything $P_1'$ can do can be simulated by $!P$ by first making the $\tau$ action and do whatever $P_1'$ does. 
Suppose $P_1''\st{\alpha} P_1'''$. Then $!P$ simulates by $!P \st{\tau} P_1'' \st{\alpha} P_1'''$, and $P_1''' \,\R'\, P_1'''$ follows due to $P_1''' \,\CCSMWB\, P_1'''$. One can set the context to be $[\cdot]$ to comply with the requirement of ``up-to context".

\item Conversely, suppose $!P \st{\alpha} P_3\para !P\DEF P_3'$ (we may assume $P_3$ is not $P_1\para P_2$ when $\alpha$ is $\tau$ because otherwise the simulation is trivial). %  due to $P \st{\alpha} P_2$ in which $P_1$ is not $P_2$. 
Then $P_1''$ simulates by 
\[
\begin{array}{lcl}
P_1'' &\SE& P_1\para P_2\para !P \st{\alpha} P_1\para P_2\para P_3\para !P \\
 &\SE& P_3\para P_1\para P_2\para !P \\
 &\SE& P_3 \para P_1'' 
\end{array}
\] Taking advantage of the ``up-to context" and setting $C'\DEF P_3\para [\cdot]$, we have
\[
\begin{array}{l}
P_3' \SE P_3\para !P \SE C'[!P] \\\\
P_1\para P_2\para P_3\para !P \SE P_3 \para P_1'' \SE C'[P_1'']
\end{array}
\] in which $!P \,\R\, P_1''$. So we are done.
\end{enumerate}

\end{enumerate}

\item %2) 
% Suppose the result is true for $1$ through $n$. That is, for $!P \st{\tau}_k P'$ ($k\leqslant n$), it holds that $!P \,\CCSMWB\, P'$. 
% We now show the result for $n+1$. Suppose $!P \st{\tau}_{n+1} P'$. It can be rewritten as
% \[
% !P \;\st{\tau}_{n}\; P'' \;\st{\tau}\; P'  \qquad \mbox{ for some } P''
% \]
% By induction hypothesis, $!P\,\CCSMWB\, P''$. Then the action $P'' \;\st{\tau}\; P'$ can be simulated 

Now we show the lemma in its general case, i.e., $!P \wt{} P'$. We remark that by a routine (transition) induction, it can be shown that $P'$ must be of the form $P'' \para !P$ for some $P''$ (up-to $\SE$). We can obtain $!P \,\CCSMWB\, P'$ by showing the following relation $\R_1$ to be a weak bisimulation up-to context.
\[
\R_1 \DEF \{(!P, P')\} \,\cup\, \CCSMWB
\]
To achieve this and complete the proof, we make the arguments in the way exactly as case 1 above in showing $\R$ or $\R'$ to be a weak bisimulation up-to context. { The only difference is that there may be more than one $\tau$ in the weak transitions by $!P$, but the form of the derived process $P'$ does not change at all and the arguments stay put. %as we just remarked.
}  
%Hence the proof of the lemma is now completed.
%\nts{TODO: expand... (see if can go through)}
\end{enumerate}\vspace*{-.5cm}
\end{proof}

%ABOUT UP-TO CONTEXT??

\noindent\textit{Remark}~~ It is noteworthy that Lemma \ref{l:rep_invar} is also true in the standard CCS (with restriction), as the proof does not rely on the absence of the restriction operation. However, if $!P$ has some visible actions, then its state is not preserved any longer. That is, for $P'$ such that $!P\wt{} \st{\alpha_1} \wt{} \st{\alpha_2} \wt{} \cdots \wt{} \st{\alpha_k}\wt{} P'$ in which $\alpha_i$ ($i{=}1,...,k$) is visible (i.e., not $\tau$), it does not necessarily hold that $!P \,\CCSMWB\, P'$. The reason is that the action by $!P$ %$!P\st{\alpha_i} P'$
may reveal some action $!P$ cannot match without first doing another visible action. It is not hard to contrive a counterexample, say $P'$ can do some action that $!P$ cannot do. For instance, let $P\DEF !a.c$. Then $!P \st{a} c \para !P \equiv P'$, and $P'\st{c}$ whereas $!P$ cannot.
%Amend Claim11:
%(1) remove the case for visible action, because it does not necessarily hold that $!P\st{\alpha} P'$ in which $\alpha$ is visible and $!P\CCSMWB P'$ (it is easy to contribution be a counterexample, say $P'$ can do some action $!P$ cannot). But this is true when $!P$ makes $\tau$. 

%\sepp

%-----------------------------------------------------------------
%\SEPALine
\sepp

We continue to examine the state-preserving/state-changing $\tau$ actions in the coming up lemmas.
Again by state-changing, we mean that if $P \st{\alpha} P'$, then $P\,\NCCSMWB\, P'$ (more often than not, $\alpha$ here is $\tau$, though it appears to still make sense if it is visible); otherwise it is state-preserving.

% \sepp\sepp
% \fbox{\nts{\large \xxa{RE-RE-RE-ENGAGE FROM \rc{$\clubsuit\clubsuit$ HERE HERE HERE!!!}: 
% %(0) ADD the missing parts (e.g., Defs); 
% }}}
% \sepp\sepp

\begin{lemma}\label{l:ccs_tau_state_div0}
%If $P$ is not capable of an infite number of action (visible or silent) and $P\st{\tau} P'$, then $P \,\NCCSMWB\, P'$.
Assume $P$ is a \CCSm\ process. Then 
%\begin{enumerate}
%\item 
there exists $k\geqslant 0$ and $P'$ such that $P\st{\tau}_k\; P'$, and $P'\,\CCSMWB\, P''$ for any $P''$ such that $P' \wt{} P''$. %\nts{(Think or xonfirm: on a universal path or an existential path?)}; % if $P'$ is on the divergent path.
\end{lemma}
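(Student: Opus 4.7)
The plan is to proceed by structural induction on $P$, exploiting Lemma \ref{l:rep_invar} to handle the infinite-behaviour cases. For the base cases where $P$ has no $\tau$ at the top level --- when $P \SE 0$, $P \SE a.Q$, or $P \SE \overline{a}.Q$ --- we take $k = 0$ and $P' = P$, and the conclusion is immediate since $P' \wt{} P''$ forces $P'' = P'$. For $P \SE !Q$, Lemma \ref{l:rep_invar} directly gives $!Q \CCSMWB P''$ for every $P''$ with $!Q \wt{} P''$, so again $k = 0$ suffices.

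The substantive case is parallel composition $P \SE Q \para R$. By the induction hypothesis applied to $Q$ and $R$ we obtain stable $Q', R'$ with $Q \st{\tau}_{k_Q} Q'$ and $R \st{\tau}_{k_R} R'$, so $P \st{\tau}_{k_Q + k_R} Q' \para R'$. A $\tau$ transition of $Q' \para R'$ originating internally to $Q'$ (respectively $R'$) is state-preserving by stability of that component and congruence of $\CCSMWB$; the only new concern is state-changing $\tau$ transitions arising from interactions between $Q'$ and $R'$. I would handle these by iteratively following any such state-changing interaction $Q' \para R' \st{\tau} Q'' \para R''$ and re-stabilising, relying on a secondary well-founded argument for termination.

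The main obstacle is precisely this termination argument. Each state-changing interaction consumes at least one non-replicated prefix, yet a naive syntactic count of such prefixes is not monotone: an interaction involving a replication may unfold it and release new non-replicated prefixes into the process. Lemma \ref{l:rep_invar} is the crucial relief, guaranteeing that such unfoldings are state-preserving, so a meaningful measure can be defined modulo $\CCSMWB$ (at the level of bisimilarity classes rather than syntactic representatives). Combined with Lemma \ref{l:ccs_tau_state} on the structural shape of state-changing $\tau$s, this yields a well-founded measure that strictly decreases under each state-changing $\tau$ and is bounded below by zero. Once the iteration stabilises, transitivity of $\CCSMWB$ applied along the remaining (purely state-preserving) $\tau$-descendants delivers the stable $P'$ that the lemma requires.
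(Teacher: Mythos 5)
Your skeleton matches the paper's: structural induction on $P$, trivial base cases, $k=0$ for $!Q$ via Lemma \ref{l:rep_invar}, and the induction hypothesis applied to both components of a parallel composition. The divergence is in how the crucial parallel case is closed, and there your argument has a genuine gap: the termination of the ``iteratively follow a state-changing interaction and re-stabilise'' loop is exactly the content of the lemma for $Q\para R$, and you never actually construct the well-founded measure you appeal to. You correctly observe that the naive count of unguarded non-replicated prefixes is not monotone (a state-changing interaction such as $!a.b \para \overline{a} \st{\tau} (b\para !a.b)\para 0$ consumes one such prefix but releases another), but the proposed repair --- ``a measure defined modulo $\CCSMWB$'' --- is left entirely abstract, and the obvious candidate (the number of state-changing $\tau$'s still available from the current process) is precisely the quantity whose finiteness is being proved, so invoking it would be circular. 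Lemma \ref{l:ccs_tau_state} cannot supply the missing structure either: it is not a statement about the shape of state-changing $\tau$'s, only that every $\tau$ of a non-divergent process is state-changing.

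The paper closes the parallel case by a different, more concrete device: a case analysis on which of the two components is divergent. When a component is non-divergent it can engage only finitely many internal actions outright, which bounds both its own $\tau$'s and the non-repeatable interactions it can feed; a component that becomes divergent only after finitely many interactions is reduced to the previous case; and in the remaining case, where neither component diverges but their composition does, the divergence must be produced by an interaction whose two visible halves can recur forever, which makes the resulting $\tau$'s state-preserving rather than state-changing. That is the finiteness argument your proposal needs and does not supply. If you want to keep your iterative formulation, you should ground its termination in the non-divergence of the components (hence finitely many consumable, non-repeatable interactions) rather than in an unspecified measure on $\CCSMWB$-classes.
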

\begin{proof}
Before going ahead, we provide some observation. If $P$ is not divergent, then the result obviously hold, because we can consume all the finite $\tau$ actions $P$ can fire to reach a state that can make no more $\tau$. That state satisfies the claim of the lemma (in a void way).
Now assume $P$ is divergent.
%For (1), 
\bctrmd{Intuitively}, the number $k$ is referring to those $\tau$'s that are not introduced by the replication. Consuming these $\tau$ actions leads $P$ to a state ready for starting the divergent path. 
By ``the divergent path'', we mean that the process can make the same $\tau$ over and again, without changing the states in absence of the restriction operation. 

Now we make the formal proof by induction on the structure of $P$.
%\fbox{\nts{TODO... need making precise?? More detailed discussion.}}

\begin{itemize}
% \item $P$ is $0$. Trivial.
% \item $P$ is $a.P$. Trivial.
% \item $P$ is $\overline{a}.P$. Trivial.
\item The cases when $P$ is $0$, $a.P$, or $\overline{a}.P$ are trivial.
\item $P$ is $P_1\para P_2$. There are several subcases.

\begin{enumerate}
\item $P_1$ is divergent while $P_2$ is not. \bctrmd{(For example, $P_1\DEF !a\para !\overline{a} \para \overline{c}$, $P_2\DEF c$. In here and what follows, we may use examples simply for more of an illustration, and they are not meant to be part of the proof anyhow.}) %\nts{More details: TODO?} 
% or indirectly divergent (\nts{TODO: need re-organize... }). 

In this case, the claim of the lemma holds for $P_1$ by induction hypothesis. That is, there exists $k_1\geqslant 0$ and $P_1'$ such that $P_1\st{\tau}_{k_1}\; P_1'$, and $P_1'\,\CCSMWB\, P_1''$ for any $P_1''$ such that $P_1' \wt{} P_1''$. Now since $P_2$ is not divergent, we can expire all the possible $\tau$ actions concerning $P_2$, including those by $P_2$ alone or from finite interactions between $P_2$ and $P_1$. By ``finite interactions'' we mean that the interactions are composed by (i.e., result from) visible actions that cannot be repeated for infinitely many times. In addition, notice that by means of interaction, we only eliminate (i.e., consume) those finite $\tau$ actions from inside $P_2$ or between $P_1$ and $P_2$. After this $P_2$ may still have infinite visible actions and  have infinite $\tau$'s with $P_1$, but this does not matter because they do not change the state of $P$, as there is an infinite repository of these actions (so the capacity of $P$ remains unchanged). %, enforcing the state to stay put.
 \ntstrmd{(OK?seems SO! CHECK!!)}
 Suppose this expiration operation constitutes $k_1'$ ($k_1'\geqslant 0$) $\tau$ actions. Then we know that $P\st{\tau}_{k_1+k_1'}\; P'$, and $P'\,\CCSMWB\, P''$ for any $P''$ such that $P' \wt{} P''$.

%Distinguish the case $P_1$ is directly divergent or indirectly divergent (i.e., it becomes divergent after having some finite interaction with $P_2$). In either case, the claim of the lemma holds. The former case is clear by induction hypothesis. The latter case, suppose after some finite interaction with $P_2$, $P_1,P_2$ become $P_1',P_2'$ respectively, where $P_1'$ is divergent. Then $P \wt{} P_1' \para P_2'$, and the result holds by ind.hyp. on $P_1'$.
%\nts{TODO more ??...}

\item $P_2$ is divergent while $P_1$ is not. Similar to the previous case.

\item Both $P_1$ and $P_2$ are divergent. \bctrmd{(For example, $P_1\DEF !a\para !\overline{a} \para \overline{c}$, $P_2\DEF c\para !b\para !\overline{b}$.)} %\nts{TODO}

By induction hypothesis, we know that there exists $k_i\geqslant 0$ ($i=1,2$) and $P_i'$ such that $P_i\st{\tau}_{k_i}\; P_i'$, and $P_i'\,\CCSMWB\, P_i''$ for any $P_i''$ such that $P_i' \wt{} P_i''$. Similar to case 1, we can remove those finite, say $k_3\geqslant 0$, $\tau$ actions between $P_1$ and $P_2$. Then we know that $P\st{\tau}_{k_1+k_2+k_3}\; P'$, and $P'\,\CCSMWB\, P''$ for any $P''$ such that $P' \wt{} P''$.

\item Neither of $P_1$ and $P_2$ is divergent, but $P$ is divergent. 
We break into two subcases.
\begin{enumerate}
\item 
We first separate a subcase that $P_1$ is indirectly divergent (the case for $P_2$ is similar). \\
\bctrmd{(For example, $P_1\DEF \overline{c}.(!a\para !\overline{a})$, $P_2\DEF c$)}\\ %
That is, $P_1$ becomes divergent after having some finite interaction with $P_2$. Suppose after some finite interaction with $P_2$, $P_1$ and $P_2$ become $P_1'$ and $P_2'$ respectively, where $P_1'$ is divergent. Then $P \wt{} P_1' \para P_2'$, and the result holds by proceeding as in case 1.

\item 
Excluding the indirectly divergent cases as in (a), we may suppose $P\equiv P_1\para P_2 \st{\tau}_{k} P_1'\para P_2'$ in which $k\geqslant 0$ and neither $P_1'$ nor $P_2'$ can do $\tau$ action. \\
\bctrmd{(For example, $P_1\DEF !a\para \overline{c}$, $P_2\DEF c\para !\overline{a}$)} \\ %
That is, we can deplete all the finite interactions in/between $P_1$ and $P_2$ 
%with the meaning of ``finite interaction" % 
as in case 1.
Now since $P$ is divergent, we have two observations: first, there must be some interaction between $P_1'$ and $P_2'$; second, the visible actions comprising the interaction, and thus the interaction itself, can recur forever. Thus suppose, in particular, that $P_1' \wt{\alpha} $ and $P_2' \wt{\bar{\alpha}} $, and these actions run infinitely. 
%The observations are not hard to verify, because otherwise $P$ would not be divergent. 
Then $P_1' \para P_2'$ is exactly the $P'$ as needed, because it holds that $P_1' \para P_2' \;\st{\tau}_{k'}\; P_1''\para P_2'' \,\CCSMWB\, P_1' \para P_2'$ for any $k'\geqslant 0$, % for any $P_1'',P_2''$ 
due to the infinite repository of the same actions. 
\ntstrmd{(OK?seems SO! CHECK!!)}  

\end{enumerate}
%\nts{TODO more analysis ?? maybe...}
\end{enumerate}

% \sepp
% \fbox{\nts{NEED more CHECK!! \rc{HERE!}} }
% \sepp

\item $P$ is $!P_1$. Suppose $!P_1$ can do some $\tau$ (the result trivially holds otherwise). Then taking $k$ as zero would be sufficient, by Lemma \ref{l:rep_invar}. That is, %we take $k$ to be $0$, and 
all the incoming $\tau$ actions do not change the state of $!P_1$.
\end{itemize}
Now the proof is completed.
\end{proof}

%-----------------------------------------------------------------
%\SEPALine
\sepp

% \sepp\sepp
% \fbox{\nts{\large \xxa{RE-RE-RE-ENGAGE FROM \rc{$\clubsuit\clubsuit$ HERE HERE HERE!!!}: 
% %(0) ADD the missing parts (e.g., Defs); 
% }}}
% \sepp\sepp

% \sepp\sepp
% \fbox{\nts{\large \xxa{CLEAR FROM \rc{$\clubsuit\clubsuit$ HERE HERE HERE!!!}: 
% %(0) ADD the missing parts (e.g., Defs); 
% }}}
% \sepp\sepp

By virtue of Lemma \ref{l:ccs_tau_state_div0}, the number of such $\tau$ actions as state-changing  is finite in a sense. 
That is to say, a \CCSm\ process %, from its start, 
can only make finite state-changing $\tau$ actions in a sequence, without any intertwining visible actions. 

\begin{corollary}%[\nts{TODO: Merge into Lemma \ref{l:ccs_tau_state_div}}]
\label{l:state-chg-bisi-prop0}  
%A \CCSm\ process %, from its start, 
%can only make finite state-changing $\tau$ actions in a sequence, without any interwining visible actions. 

Suppose $P$ can make a (possibly infinite) sequence of $\tau$ actions, then only a finite initial %preceding 
segment of it is state-changing. This $\tau$ sequence can be written as $P\st{\tau} P_1 \st{\tau} P_2 \cdots \st{\tau} P_k \st{\tau} \cdots$, where $P$ and $P_1$ through $P_k$ are state-changing and those afterwards (if any)  are state-preserving.
\end{corollary}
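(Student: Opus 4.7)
The plan is to derive the corollary directly from Lemma \ref{l:ccs_tau_state_div0}, splitting according to whether the given $\tau$-sequence is finite or infinite. If the sequence is finite, the conclusion is immediate: only finitely many $\tau$'s occur, so only finitely many can possibly be state-changing. The interesting case is an infinite sequence, which forces $P$ to be divergent.

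In the divergent case, Lemma \ref{l:ccs_tau_state_div0} supplies some $k \geqslant 0$ and a \emph{stable} process $P'$ with $P \st{\tau}_k P'$ such that every $\tau$-descendant of $P'$ is weakly bisimilar to $P'$. The goal is to transfer this bound $k$ from the lemma's own chosen witness path to the arbitrary sequence $P \st{\tau} P_1 \st{\tau} P_2 \cdots$ under consideration. The structural argument mirrors the inductive proof of Lemma \ref{l:ccs_tau_state_div0}: each state-changing $\tau$ must consume an interactional resource that is \emph{not} under a replication (a finite prefix, or a finite cross-interaction between a divergent and a non-divergent component), whereas every $\tau$ whose source lies under a replication is state-preserving by Lemma \ref{l:rep_invar}. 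Since the total inventory of such non-replicated resources in $P$ is finite and independent of the order in which $\tau$-transitions are fired, the given sequence too can host only finitely many state-changing $\tau$'s before it is forced into the stable regime.

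Once a prefix $P \st{\tau} P_1 \st{\tau} \cdots \st{\tau} P_k$ has been traversed into the stable regime, every subsequent $P_j$ (for $j \geqslant k$) is a $\tau$-descendant of $P_k$ and thus $P_k \,\CCSMWB\, P_j$ by the stability property. By transitivity and symmetry of $\CCSMWB$, every adjacent pair $P_j \st{\tau} P_{j+1}$ with $j \geqslant k$ satisfies $P_j \,\CCSMWB\, P_{j+1}$, which is exactly the definition of a state-preserving $\tau$. This yields the desired decomposition into a finite state-changing prefix and an infinite state-preserving tail.

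The main obstacle is the structural justification that the bound supplied by Lemma \ref{l:ccs_tau_state_div0}, which is extracted along a specific witness path, also bounds the state-changing prefix along \emph{any} maximal $\tau$-path. The delicate point is to show that firing a replication-induced $\tau$ early does not create extra state-changing opportunities later, and conversely that deferring a finite-source $\tau$ cannot turn it into a state-preserving one. An induction on the structure of $P$, parallel to the proof of Lemma \ref{l:ccs_tau_state_div0} and leaning on Lemma \ref{l:rep_invar} to neutralize replication-originated $\tau$'s at each inductive step, should handle this uniformly.
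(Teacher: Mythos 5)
Your proposal follows the same route as the paper: the paper's entire proof of this corollary is the single sentence that it ``is a consequence of Lemma \ref{l:ccs_tau_state_div0}; one simply chooses the minimal $k$ as designated by that lemma.'' So citing Lemma \ref{l:ccs_tau_state_div0} and taking its bound as the length of the state-changing prefix is exactly what the paper does. Where you go further is in explicitly flagging the obstacle that the paper silently steps over: Lemma \ref{l:ccs_tau_state_div0} only asserts the \emph{existence} of one $\tau$-path $P \st{\tau}_k P'$ landing in a stable process, whereas the corollary quantifies over an \emph{arbitrary} (possibly infinite) $\tau$-sequence, so the bound must be transferred from the lemma's witness path to every maximal path, and one must also rule out interleavings in which a state-preserving $\tau$ (e.g.\ one originating from a replication, per Lemma \ref{l:rep_invar}) is fired before a state-changing one, which would contradict the corollary's claim that the state-changing steps form exactly an initial segment. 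You correctly identify this as the delicate point, and your second half (once in the stable regime, $P_k \,\CCSMWB\, P_j$ for all $j \geqslant k$, hence every later adjacent pair is state-preserving by transitivity and symmetry) is sound. However, your resolution of the delicate point is only a sketch (``an induction on the structure of $P$ \dots should handle this uniformly''), so as written your argument is not more complete than the paper's one-liner --- it is the same argument with the missing step named rather than supplied. To actually close it you would need the order-independence claim as a lemma in its own right, or else weaken the corollary's reading to ``the sequence \emph{can be chosen/written} so that the state-changing steps come first,'' which is arguably what the paper intends.
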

\begin{proof}
%\fbox{\nts{TODO: expannd and analyze ... \rc{HERE!}} }

%maybe it is a corollary of what we have proven (previously lemma11(1)?; now lemma 13?). seems Yes, just choose the minimal $k$ as designated by the previously lemma11(1) (now lemma 13? up-to change). 

This is a consequence of what we have proven in Lemma \ref{l:ccs_tau_state_div0}. One simply chooses the minimal $k$ as designated by that lemma.
\end{proof}

\noindent\textit{Remark.}~
%We remark about Lemma \ref{l:state-chg-bisi-prop0}. 
In general, it is not true that a process can only have finite state-changing $\tau$ actions (not necessarily from the start and with-in a sequence). For example, the process $!a.(b\para \overline{b}.c)$ can have infinite state-changing $\tau$ actions. 
However, the $\tau$ actions in a row without any in-between visible actions can only have finite state-changing (consecutive) $\tau$ actions as stated in the foregoing corollary. 

\sepp\sepp

Furthermore, a $\tau$ action that changes the state of a process, 
which may not come from a replication, 
should be simulated also by a state-changing $\tau$. 
We capture these ideas in the following two lemmas.

% \sepp\sepp
% \fbox{\nts{\large \xxa{RE-RE-RE-ENGAGE FROM \rc{$\clubsuit\clubsuit$ HERE HERE HERE!!!}: 
% %(0) ADD the missing parts (e.g., Defs); 
% }}}
% \sepp\sepp

\begin{lemma}\label{l:state-chg-bisi-prop1}
Assume $P\,\CCSMWB\, Q$. Then in the $\tau$ sequences that $P$ and $Q$ can make respectively (as described in Corollary \ref{l:state-chg-bisi-prop0}), the numbers of state-changing $\tau$ actions are equal.
\end{lemma}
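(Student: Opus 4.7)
The plan is to set up a correspondence between the state-changing $\tau$ actions in $P$'s and $Q$'s sequences, driven by the weak bisimilarity matching. The starting point is the following matching principle: if $P \CCSMWB Q$ and $P \st{\tau} P'$ is state-changing (so $P \,\NCCSMWB\, P'$), then by weak bisimulation there exists $Q'$ with $Q \wt{} Q'$ and $P' \CCSMWB Q'$, and moreover $Q \,\NCCSMWB\, Q'$. The last claim uses transitivity: if instead $Q \CCSMWB Q'$, then $P \CCSMWB Q \CCSMWB Q' \CCSMWB P'$ would yield $P \CCSMWB P'$, against the hypothesis. Hence the weak path $Q \wt{} Q'$ must traverse at least one state-changing $\tau$ of $Q$, and a symmetric statement holds upon swapping $P$ and $Q$.

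Let $k$ and $m$ denote the numbers of state-changing $\tau$'s in the $P$- and $Q$-sequences as described in Corollary \ref{l:state-chg-bisi-prop0}. I would prove $k = m$ by induction on $k + m$. In the base case $k = 0$, every $\wt{}$-successor of $P$ is bisimilar to $P$; if $m \geq 1$, then the first state-changing step $Q \st{\tau} Q_1$ is simulated by some $P \wt{} P'$ with $Q_1 \CCSMWB P'$, and stability forces $Q_1 \CCSMWB P' \CCSMWB P \CCSMWB Q$, contradicting $Q \,\NCCSMWB\, Q_1$. In the inductive step, I take the first state-changing move $P \st{\tau} P_1$ and invoke the matching principle to obtain $Q \wt{} Q_1$ (with $P_1 \CCSMWB Q_1$) that crosses at least one state-changing $\tau$ of $Q$. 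Applying the induction hypothesis to the bisimilar pair $(P_1, Q_1)$ together with the symmetric direction should close the counting.

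The main obstacle will be guaranteeing the \emph{tightness} of the correspondence, that is, that a single state-changing $\tau$ of $P$ is not simulated by two or more state-changing $\tau$'s of $Q$, and vice versa, since the matching principle alone gives only a one-sided lower bound on the number of state-changing $\tau$'s in the simulating path. I plan to overcome this by running the symmetric argument in parallel to derive both $k \leq m$ and $m \leq k$, exploiting the absence of summation in \CCSm\ (yielding confluence-like behaviour of $\tau$-reduction) together with the replication-invariance from Lemma \ref{l:rep_invar}, so that the state-preserving $\tau$ steps that accompany the weak matching cannot inflate the count. A fallback reformulation is to define $n(P)$ as the minimum number of state-changing $\tau$'s in any $\tau$-path from $P$ to a $\tau$-stable state (well-defined by Lemma \ref{l:ccs_tau_state_div0}), and to prove $n(P) = n(Q)$ whenever $P \CCSMWB Q$ directly via the same matching principle; this phrasing makes the equality an invariant of the bisimilarity class and sidesteps any ambiguity about which specific sequence is chosen.
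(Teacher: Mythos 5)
Your ``matching principle'' is sound (and is essentially what the paper uses elsewhere, e.g.\ in Lemma \ref{l:state-chg-bisi-prop}): a state-changing $P\st{\tau}P'$ cannot be answered by a purely state-preserving path of $Q$, by transitivity of $\CCSMWB$. But the gap you yourself flag --- tightness, i.e.\ ruling out that one state-changing $\tau$ of $P$ is answered by \emph{two or more} state-changing $\tau$'s of $Q$ --- is precisely the content of the lemma, and neither of your proposed repairs closes it. Running the argument symmetrically only yields ``at least one'' in each direction, not $k\le m$ and $m\le k$; the appeal to confluence of $\tau$-reduction is asserted rather than proven; and the induction on $k+m$ presupposes that the number of state-changing $\tau$'s is an invariant of a process (so that $Q_1$'s count is exactly $m-j$), which is itself an instance of the lemma with $Q=P$ --- note that a process such as $a\para\overline{a}\para\, !b\para\, !\overline{b}$ has maximal $\tau$-sequences with different numbers of state-changing steps, so this is not automatic. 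The fallback via the minimum $n(P)$ changes the statement and still leaves the direction $n(Q)\le n(P)$ open, since the path produced by the bisimulation game when answering $P$'s witnessing sequence need not realize $Q$'s minimum.

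The paper closes the gap by a mechanism your proof never invokes, namely the absence of restriction. Since every $\tau$ in \CCSm\ is an interaction on a \emph{free} name, it can be decomposed into its two complementary visible halves, $P\st{a}\st{\overline{a}}P'$ (a ``pair-action''), and these halves are observable, hence subject to the exact (one-for-one, up to state-preserving $\tau$'s) matching discipline that weak bisimulation imposes on visible actions. A state-changing $\tau$ corresponds to a pair-action that cannot be repeated forever, a state-preserving one to an infinitely repeatable pair-action; if $Q$ had more state-changing $\tau$'s than $P$, it would expose extra non-repeating visible pair-actions at a stage where $P$ has only infinitely repeatable ones left, breaking $\CCSMWB$. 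Your argument is purely abstract and would apply verbatim to CCS \emph{with} restriction, where $\tau$'s constituents are hidden --- a sign that, as it stands, it cannot carry the full weight of the lemma. To repair the proposal you would need to import this pair-action observation (or independently prove the confluence/path-independence property you allude to).
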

\begin{proof}

By Corollary \ref{l:state-chg-bisi-prop0}, we can assume the $\tau$ sequences by $P$ and $Q$ are respectively 
\[
\begin{array}{l}
P\st{\tau} P_1 \st{\tau} P_2 \cdots \st{\tau} P_k \st{\tau} \cdots \\
Q\st{\tau} Q_1 \st{\tau} Q_2 \cdots \st{\tau} Q_{k'} \st{\tau} \cdots
\end{array}
\] where $P$ (respectively $Q$) and $P_1$ (respectively $Q_1$) through $P_k$ (respectively $Q_{k'}$) are state-changing and those afterwards (if any) are state-preserving. We need to show $k=k'$. Assume to the contrary that $k\neq k'$, say $k < k'$ (the case $k > k'$ is similar).

We then show that this would break the bisimulation between $P$ and $Q$. 
We observe that every $\tau$ action is composed of two complementary visible actions (since there is no restriction operation, every interaction is based on a visible name), so one can make these two visible actions in a sequential manner to reach the same state.
% (this sequence can be somehow considered atomic; though this is only imaginary, it is harmless to think this way informally). %and not formaly done, it is ) 
 Say $P\st{\tau} P'$ through interaction over $a$, then from the operational semantics we must have $P\st{a}\st{\overline{a}} P'$, which we call a pair-action.
Moreover, because the $\tau$ considered here is state-changing, these complementary visible actions must not be repeated forever (otherwise they would lead to repeated $\tau$ actions, making it state-preserving). So $P$ can reach $P_k$ in finite such pair-actions.
Since $P\,\CCSMWB\, Q$, it must be the case that $Q$ consumes exactly the same number of pair actions (of visible actions), and this corresponds to $k$ state-changing $\tau$ actions. That is, $Q$ must evolve to $Q_k$ in order to keep pace with $P$ so that the bisimulation can be maintained. Yet now as assumed,  $Q$ can make more state-changing $\tau$ actions, i.e., more pair-actions correspondingly. These (non-repeating) pair actions cannot be bisimulated by $P$ since $P$ at that stage only has state-preserving $\tau$ actions {(and consequently repeating pair actions)}. 
\ntstrmd{(last sentence OK?seems SO!)} 
%So we discern a contradiction.
So we have a contradiction.
% \sepp
% \fbox{\nts{NEED more CHECK!! \rc{HERE the last sentence!}} }
% \sepp
%\fbox{\nts{TODO: expannd and analyze with CHECK ... \rc{HERE!}} }
\end{proof}

%Below comes another lemma on internal actions that are state-changing.

% \sepp\sepp
% \fbox{\nts{\large \xxa{RE-RE-RE-ENGAGE FROM \rc{$\clubsuit\clubsuit$ HERE HERE HERE!!!}: 
% %(0) ADD the missing parts (e.g., Defs); 
% }}}
% \sepp\sepp

%It might be helpful to have some \bctrm{intuition} first: 
Lemma \ref{l:state-chg-bisi-prop1} helps to develop the following intuition: 
in order to simulate a state-changing $\tau$, one must do at least one $\tau$ so as to change its state; moreover, it cannot do more than one state-changing $\tau$ because it has only finite such $\tau$ actions in the (bisimulating) sequence of internal actions, even though it can do a couple of state-preserving $\tau$ actions; making such a state-changing $\tau$ is sufficient for the bisimulation.

Hence we have the follow-up lemma on how to match internal actions that are state-changing.

\begin{lemma}%[\nts{TODO: Merge into Lemma \ref{l:ccs_tau_state_div}}]
\label{l:state-chg-bisi-prop}
Suppose $P\,\CCSMWB\, Q$. If $P\st{\tau} P'$ in which the $\tau$ is state-changing, then $Q\st{\tau} Q' \,\CCSMWB\, P'$ in which the $\tau$ is state-changing as well.
\end{lemma}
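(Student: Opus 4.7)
My plan is first to obtain a weak simulating sequence from $Q$, then to use the preceding lemmas to locate the unique state-changing step inside it, and finally to compress that sequence into a single direct $\tau$ step from $Q$.

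Applying the definition of $\CCSMWB$ to $P \,\CCSMWB\, Q$ and $P \st{\tau} P'$ yields some $Q^{*}$ with $Q \wt{} Q^{*}$ and $P' \,\CCSMWB\, Q^{*}$. The state-changing hypothesis $P \,\NCCSMWB\, P'$, combined with $P \,\CCSMWB\, Q$ and $P' \,\CCSMWB\, Q^{*}$, forces $Q \,\NCCSMWB\, Q^{*}$, so the weak transition has at least one step: write it as $Q = Q_0 \st{\tau} Q_1 \st{\tau} \cdots \st{\tau} Q_n = Q^{*}$ with $n \geqslant 1$. I would then invoke Lemma~\ref{l:state-chg-bisi-prop1}, via the initial-segment characterisation in Corollary~\ref{l:state-chg-bisi-prop0}, to conclude that exactly one step $Q_j \st{\tau} Q_{j+1}$ in this sequence is state-changing, matching the single state-changing $\tau$ in $P \st{\tau} P'$. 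All other steps being state-preserving yields $Q \,\CCSMWB\, Q_j$ and $Q_{j+1} \,\CCSMWB\, Q^{*} \,\CCSMWB\, P'$ by transitivity along the $\CCSMWB$ chain.

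The decisive step is to collapse this into a single $\tau$ from $Q$. If $j = 0$, the first step is already state-changing and $Q_1 \,\CCSMWB\, P'$, so we are done. When $j > 0$, I would exploit the structural content of Lemma~\ref{l:rep_invar} together with the replication-based classification implicit in Lemma~\ref{l:ccs_tau_state_div0}: each state-preserving step $Q_i \st{\tau} Q_{i+1}$ arises from an unfolding of a replication inside $Q_i$, merely exposing an additional copy of some subprocess without changing the interaction capacity of the whole. Consequently the state-changing interaction $Q_j \st{\tau} Q_{j+1}$ acts on components that were already available in $Q$, and $Q$ can perform it in one application of the operational rules by bundling any needed unfolding into the replication rules $!P \st{\gamma} P' \para\, !P$ or $!P \st{\tau} P' \para P'' \para\, !P$ themselves. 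This produces $Q \st{\tau} Q'$ with $Q' \,\CCSMWB\, Q_{j+1} \,\CCSMWB\, P'$, the bisimilarity being verified by a weak bisimulation up-to context argument (Definition~\ref{def:up-to-context-ccs}, Lemma~\ref{l:bisi-upto-context}) so as to avoid explicit relation juggling. Finally, $Q \,\CCSMWB\, P$, $Q' \,\CCSMWB\, P'$, and $P \,\NCCSMWB\, P'$ together force $Q \,\NCCSMWB\, Q'$, so the obtained $\tau$ is indeed state-changing.

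The main obstacle I expect is precisely this compression step when $j > 0$: making the claim that state-preserving prefixes can be absorbed fully rigorous requires a case analysis on the semantic derivation of $Q_j \st{\tau} Q_{j+1}$ relative to the replication-unfolding history along $Q_0 \st{\tau} \cdots \st{\tau} Q_j$. As a fallback, I would run a well-founded induction on $j$, applying weak bisimulation to $Q \,\CCSMWB\, Q_j$ to obtain a strictly shorter simulating sequence (using that any new simulation still contains exactly one state-changing step, by Lemma~\ref{l:state-chg-bisi-prop1}), and closing under up-to-context to iterate until the state-changing step reaches position~$0$.
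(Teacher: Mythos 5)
Your first two paragraphs reproduce the paper's argument almost exactly: obtain the weak simulating transition $Q \wt{} Q^{*} \,\CCSMWB\, P'$, observe it must be non-empty because the $\tau$ fired by $P$ is state-changing, and then use Lemma~\ref{l:state-chg-bisi-prop1} together with Corollary~\ref{l:state-chg-bisi-prop0} to pin down exactly one state-changing $\tau$ inside $Q$'s sequence. The divergence is in how the two arguments finish. The paper reads Corollary~\ref{l:state-chg-bisi-prop0} as saying that in any $\tau$-sequence the state-changing steps form an \emph{initial} segment; since your sequence contains exactly one such step, that step is forced to be the very first one, i.e.\ $j=0$ in your notation, and the lemma follows immediately with $Q'=Q_1$. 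You invoke the same corollary to count the state-changing steps but do not use its positional content, and so you open a case $j>0$ that, on the paper's reading of its own corollary, cannot occur.

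The compression argument you then build for $j>0$ is the one place where your proposal has a genuine gap rather than just redundancy. The claim that every state-preserving $\tau$ ``arises from an unfolding of a replication inside $Q_i$, merely exposing an additional copy of some subprocess'' is not established by Lemma~\ref{l:rep_invar} or Lemma~\ref{l:ccs_tau_state_div0}: a state-preserving $\tau$ can also be an interaction \emph{between} two distinct replicated components (e.g.\ between $!a$ and $!\overline{a}$), which consumes prefixes rather than exposing copies, and nothing you cite guarantees that the redexes of the later state-changing interaction $Q_j \st{\tau} Q_{j+1}$ are already unguarded in $Q$ itself. So the ``bundling'' step producing $Q \st{\tau} Q'$ in one application of the rules is asserted, not proved, and your fallback induction on $j$ inherits the same problem (shortening the simulating sequence still requires showing the state-changing step can be moved to the front). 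The fix is simply to use Corollary~\ref{l:state-chg-bisi-prop0} in full: it places all state-changing $\tau$'s at the head of the sequence, which eliminates the case $j>0$ and with it the entire third paragraph of your argument.
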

\begin{proof}
% \sepp
% \fbox{\nts{NEED more CHECK!! \rc{HERE!}} }
% \sepp

%\fbox{\nts{TODO: expannd and analyze ... \rc{HERE!}} }
%\nts{TODO: expannd and analyze ...}
% Suppose $P\st{\tau} P'$ in which the $\tau$ is state-changing, since $P\,\CCSMWB\, Q$, we know that $Q\wt{} Q_1$ for some $Q_1$ and $P' \,\CCSMWB\, Q_1$. Because the $\tau$ by $P$ is state-changing, it must originate from two complementary visible actions, say $a$ and $\overline{a}$, that cannot be simulated by $P'$. Thus $Q$ must not do nothing, so by Corollary \ref{l:state-chg-bisi-prop0}, we can rewrite the simulation by $Q$ as $Q \st{\tau} Q_2 \wt{} Q_1 \,\CCSMWB\, P'$ in which $Q$ to $Q_2$ is state-changing.  %(\nts{Can we? Elaborate?}). 

% Now if the $\wt{}$ by $Q_2$ invloves only state-preserving $\tau$ actions, then we are done. If not, it then at least consumes one more state-changing $\tau$ action in the sequence. In this case, since $Q$ can only make finite state-changing $\tau$ action sequence (by Corollary \ref{l:state-chg-bisi-prop0}) whose length is the same as $P$ (by Lemma \ref{l:state-chg-bisi-prop1}), $Q$ will lose the pace with $P$, i.e., the step-wise state-changing bisimulation, and thus fail to bisimulate it eventually. Therefore, it must be the case that the $\tau$ actions from $Q_2$ to $Q_1$ are state-preserving, i.e., $Q_2 \,\CCSMWB\, Q_1 \,\CCSMWB\, P'$. That is, we have $Q\st{\tau} Q_2 \,\CCSMWB\, P'$, and $Q_2$ is the $Q'$ we want. \nts{(this paragraph OK?seems SO!)}

Suppose $P\st{\tau} P'$ in which the $\tau$ is state-changing, since $P\,\CCSMWB\, Q$, we know that $Q\wt{} Q_1$ for some $Q_1$ and $P' \,\CCSMWB\, Q_1$. Because the $\tau$ by $P$ is state-changing, it must originate from two complementary visible actions, say $a$ and $\overline{a}$, that cannot be simulated by $P'$ anyhow. Thus $Q$ must not do nothing, so we can rewrite the simulation by $Q$ as $Q \st{\tau} Q_2 \wt{} Q_1 \,\CCSMWB\, P'$. We claim that in this simulation, the $\tau$ in $Q \st{\tau} Q_2$ is state-changing, and the $\tau$ actions in $Q_2 \wt{} Q_1$ are state-preserving. Then the result of the lemma follows. %(\nts{Can we? Elaborate?}). 

To see why $Q$ is forced to simulate with exactly one state-changing $\tau$, we note that $Q$ can only make finite state-changing $\tau$ action sequence (by Corollary \ref{l:state-chg-bisi-prop0}) whose length is the same as that of $P$'s (by Lemma \ref{l:state-chg-bisi-prop1}). If the simulation consumes any more or fewer state-changing $\tau$ actions in the sequence, $Q$ will eventually lose the pace with $P$ and fail to bisimulate it. Thus $P$ and $Q$ are %necessarily 
obliged to engage a step-wise state-changing bisimulation, as they only hold a finite number %repository 
of such actions. 

Therefore, it must be the case that the $\tau$ in $Q \st{\tau} Q_2$ is state-changing, and the $\tau$ actions from $Q_2$ to $Q_1$ are state-preserving, i.e., $Q_2 \,\CCSMWB\, Q_1 \,\CCSMWB\, P'$. To recap, now we have $Q\st{\tau} Q_2 \,\CCSMWB\, P'$, and $Q_2$ is the $Q'$ we seek in the statement of the lemma. \ntstrmd{(this paragraph OK?seems SO!)}
\end{proof}

A corollary out of Lemma \ref{l:state-chg-bisi-prop1} and Lemma \ref{l:state-chg-bisi-prop} is that a state-preserving $\tau$ must be simulated by state-preserving one(s), since it cannot be bisimulated by a state-changing $\tau$ action.
\begin{corollary}\label{cor:state-chg-bisi-prop1}
Suppose $P\,\CCSMWB\, Q$. If $P\st{\tau} P'$ in which the $\tau$ is state-preserving, then $Q\st{\tau} Q' \,\CCSMWB\, P'$ in which the $\tau$ is state-preserving as well.
\end{corollary}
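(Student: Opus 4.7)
The plan is to combine the divergence-sensitivity machinery (Corollary \ref{cor:state_change2}) with the two companion matching lemmas (Lemmas \ref{l:state-chg-bisi-prop1} and \ref{l:state-chg-bisi-prop}) in order to extract a single state-preserving $\tau$-step from $Q$ that lands on a process weakly bisimilar to $P'$.

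First, I would note that because the $\tau$ from $P$ to $P'$ is state-preserving we have $P\CCSMWB P'$, and Corollary \ref{cor:state_change2} immediately yields that $P$ is divergent. Divergence-sensitivity of $\CCSMWB$ then transports divergence to $Q$, so $Q$ is divergent and, in particular, admits at least one internal action.

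Next, I would apply the weak-bisimulation clause to the move $P\st{\tau} P'$, obtaining some weak matching $Q\wt{} Q_0$ with $P'\CCSMWB Q_0$. To satisfy the statement of the corollary I need a \emph{single} $\tau$-step from $Q$, not an arbitrary weak sequence. To get one, I would extend $P\st{\tau} P'$ by any divergent continuation (which exists, since $P'\CCSMWB P$ and $P$ is divergent) to form a concrete $\tau$-sequence of $P$. By Corollary \ref{l:state-chg-bisi-prop0}, the state-changing prefix of this sequence has length $0$, since the very first $\tau$ is already state-preserving. Lemma \ref{l:state-chg-bisi-prop1} then forces every matching $\tau$-sequence of $Q$ to have a state-changing prefix of length $0$ as well; consequently the first step $Q\st{\tau} Q'$ of such a sequence is state-preserving, giving $Q\CCSMWB Q'$.

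Finally, I would close by transitivity: from $P\CCSMWB Q$ and $P\CCSMWB P'$ one gets $Q\CCSMWB P'$, and combining with $Q\CCSMWB Q'$ delivers $Q'\CCSMWB P'$, so the chosen $\tau$-step meets both requirements. The main obstacle I anticipate is justifying carefully that Lemma \ref{l:state-chg-bisi-prop1} can be applied to the particular pair of sequences I need here; should a direct appeal be insufficient, I would argue by contradiction using Lemma \ref{l:state-chg-bisi-prop} in the direction $Q\CCSMWB P$, showing that if $Q$'s only possible responses started with a state-changing $\tau$, then $P$'s move $P\st{\tau}P'$ would also have to be matched by a state-changing $\tau$, contradicting the state-preserving hypothesis.
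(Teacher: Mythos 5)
Your overall route is the same as the paper's: the paper derives this corollary in a single sentence from Lemmas \ref{l:state-chg-bisi-prop1} and \ref{l:state-chg-bisi-prop} (``a state-preserving $\tau$ cannot be bisimulated by a state-changing one''), and you invoke exactly those lemmas together with divergence-sensitivity. Two of your additions are genuinely valuable and go beyond what the paper writes down: the observation that $P\,\CCSMWB\, P'$ plus Corollary \ref{cor:state_change2} forces $P$, hence $Q$, to be divergent --- which is what guarantees that $Q$ can produce a \emph{single} $\tau$-step at all, a case the paper silently skips when $Q$'s weak response is empty --- and the closing transitivity chain $Q'\,\CCSMWB\, Q\,\CCSMWB\, P\,\CCSMWB\, P'$, which shows that \emph{any} one-step response landing on something bisimilar to $P'$ is automatically state-preserving, so the whole burden is to produce such a step.

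The genuine gap is the one you yourself flag, and it does not close by ``direct appeal''. Lemma \ref{l:state-chg-bisi-prop1} compares the canonical decompositions of Corollary \ref{l:state-chg-bisi-prop0}, i.e.\ the minimal $k$ of Lemma \ref{l:ccs_tau_state_div0}; it is not a statement about arbitrary $\tau$-sequences. A process can have a state-preserving first $\tau$ and still have canonical $k>0$: for $P=a\para\overline{a}\para !b\para !\overline{b}$ the interaction on $b$ is state-preserving, yet $k=1$ because the interaction on $a$ must still be consumed before a stable state is reached, and a state-preserving step can perfectly well be followed by a state-changing one. So ``my chosen sequence has state-changing prefix of length $0$'' does not, via Lemma \ref{l:state-chg-bisi-prop1}, force every (or even some particular) response sequence of $Q$ to begin with a state-preserving $\tau$; read as a statement about arbitrary sequences the lemma would yield $0=1$ on this very example. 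What you actually need is only that $Q$ possesses \emph{some} state-preserving $\tau$-step; a sounder way to get it is to note that a response $Q\st{\tau} Q_1\wt{} Q_0\,\CCSMWB\, P'\,\CCSMWB\, Q$ whose first step were state-changing could be iterated (using Lemma \ref{l:state-chg-bisi-prop} to transport the state-changing step to $Q_0\,\CCSMWB\, Q$) into a single $\tau$-sequence with infinitely many state-changing steps, contradicting Corollary \ref{l:state-chg-bisi-prop0}; in the empty-response case, divergence supplies the sequence to which the same argument applies. Your fallback sketch points in this direction but as written uses Lemma \ref{l:state-chg-bisi-prop} backwards: that lemma governs how a state-changing move of one side is matched by the other, not how a state-preserving move is matched.
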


%-----------------------------------------------------------------
%\SEPALine
\sepp

The upcoming lemma somehow generalizes the results in the foregoing lemmas.

%We continue to examine the state-preserving $\tau$ actions in the coming up lemma.
\begin{lemma}\label{l:ccs_tau_state_div}
%If $P$ is not capable of an infite number of action (visible or silent) and $P\st{\tau} P'$, then $P \,\NCCSMWB\, P'$.
%Assume $P$ is divergent. Then 
Assume $P$ and $Q$ are \CCSm\ processes, and $k$ is as described in Lemma \ref{l:ccs_tau_state_div0} for $P$. If $P\CCSMWB Q$, then
\begin{enumerate}
%\item there exists $k\geqslant 0$ and $P'$ s.t. $P\st{\tau}_k\; P'$, and $P'\,\CCSMWB\, P''$ for any $P''$ s.t. $P' \wt{} P''$. %\nts{(Think or xonfirm: on a universal path or an existential path?)}; % if $P'$ is on the divergent path.
\item $Q\st{\tau}_k\; Q'$, and $Q'\,\CCSMWB\, Q''$ for any $Q''$ such that $Q' \wt{} Q''$;
\item there is a minimal $k$ satisfying both (1) for $Q$ and the property as described in Lemma \ref{l:ccs_tau_state_div0} for $P$; i.e., $P$ and $Q$ make the same number of state-changing $\tau$ action sequence before (synchronously) entering (state-preserving) divergence (if any).
\end{enumerate}
\end{lemma}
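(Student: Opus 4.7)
The plan is to mirror, step by step, the state-changing initial $\tau$-segment of $P$ using the matching lemmas already established, and then to exploit the minimality of $k$ to conclude that any $\tau$-action leaving $Q'$ is state-preserving. First, I invoke Lemma~\ref{l:ccs_tau_state_div0} to obtain a sequence $P \st{\tau} P_1 \st{\tau} \cdots \st{\tau} P_k = P'$ satisfying the stated state-preservation property, choosing $k$ to be the minimal such. By construction every transition in this finite prefix is state-changing (otherwise we could trim and contradict minimality). Then, by induction on $i = 0, 1, \ldots, k$, I build a matching sequence $Q = Q_0 \st{\tau} Q_1 \st{\tau} \cdots \st{\tau} Q_k$ with $P_i \CCSMWB Q_i$ and each $Q_{i-1} \st{\tau} Q_i$ state-changing; the inductive step is just Lemma~\ref{l:state-chg-bisi-prop} applied to the state-changing transition $P_{i-1} \st{\tau} P_i$ using the inductive hypothesis $P_{i-1} \CCSMWB Q_{i-1}$. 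Setting $Q' \DEF Q_k$ yields $Q \st{\tau}_k Q'$ with $P' \CCSMWB Q'$.

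To complete part~(1), I show that every $\tau$-transition from $Q'$ is state-preserving, so that every $Q''$ reachable from $Q'$ via $\wt{}$ satisfies $Q' \CCSMWB Q''$ by transitivity of $\CCSMWB$ and the definition of state-preserving. Because $P' \CCSMWB Q'$ and $P'$ admits no state-changing $\tau$-action (by the minimal choice of $k$), Lemma~\ref{l:state-chg-bisi-prop1} forces the state-changing initial segment of every $\tau$-sequence from $Q'$ to have length zero; equivalently, a hypothetical state-changing $Q' \st{\tau} \widetilde{Q}$ would, by Lemma~\ref{l:state-chg-bisi-prop}, require a state-changing match from $P'$, which does not exist. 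Iterating this observation along $Q' \wt{} Q''$ together with transitivity yields $Q' \CCSMWB Q''$ as needed.

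Part~(2) then follows directly from Lemma~\ref{l:state-chg-bisi-prop1}: the lengths of the state-changing initial segments of $P$ and $Q$ coincide, so the minimal $k$ witnessing Lemma~\ref{l:ccs_tau_state_div0} for $P$ is also the minimal value that witnesses~(1) for $Q$, and the step-wise construction above delivers the synchronous reading advertised in the statement. The main obstacle I anticipate is justifying that the iterative matching is genuinely one-$\tau$-for-one-$\tau$ rather than permitting $Q$ to interleave extra state-preserving $\tau$'s that could throw off the count; this is precisely what Lemma~\ref{l:state-chg-bisi-prop} precludes, because $Q$ possesses only finitely many state-changing $\tau$-actions and must spend exactly one per state-changing step of $P$, so the obstacle dissolves once one uses that lemma (rather than the weaker weak-bisimulation clause $\wt{\hat{\tau}}$) at each inductive step.
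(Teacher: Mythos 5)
Your proposal is correct and rests on the same machinery as the paper's proof, namely the one-for-one matching of state-changing $\tau$ actions guaranteed by Lemma~\ref{l:state-chg-bisi-prop} and Lemma~\ref{l:state-chg-bisi-prop1} together with Lemma~\ref{l:ccs_tau_state_div0}. The only (inessential) difference is presentational: you build the matching sequence $Q\st{\tau}_k Q'$ directly by induction and then verify that $Q'$ admits only state-preserving $\tau$'s, whereas the paper obtains $k'$ for $Q$ from Lemma~\ref{l:ccs_tau_state_div0} and derives a contradiction from $k\neq k'$ via transitivity of $\CCSMWB$.
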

\begin{proof}
It should be clear that (2) follows from (1) and Lemma \ref{l:ccs_tau_state_div0}.

% \sepp
% \fbox{\nts{NEED more CHECK!! \rc{HERE!}} }
% \sepp

% \sepp\sepp
% \fbox{\nts{\large \xxa{RE-RE-RE-ENGAGE FROM \rc{$\clubsuit\clubsuit$ HERE HERE HERE!!!}: 
% %(0) ADD the missing parts (e.g., Defs); 
% }}}
% \sepp\sepp

For (1), one has to make sure that $Q$ has the same behavioural pattern. To see this, we remember that the bisimulation here requires synergistic divergence, so 
%the divergence of $Q$ is somewhat tied to that of $P$, i.e., 
$Q$ is divergent too whenever $P$ is.
We also notice that $Q$ has the property due to Lemma \ref{l:ccs_tau_state_div0}, i.e., there exists $k'\geqslant 0$ and $Q'$ such that $Q\st{\tau}_{k'}\; Q'$, and $Q'\,\CCSMWB\, Q''$ for any $Q''$ such that $Q' \wt{} Q''$. 
%We need to show that $k=k'$ (actually $k\geqslant k'$ and $k'\geqslant k$). Intuitively, if $k \neq k'$, then we can derive that $P$ and $Q$ are not bisimilar. \nts{how??}  

For the sake of convenience and w.l.o.g., we assume that $k$ and $k'$ are both the minimal such integers respectively (the result for `not minimal' follows from this case).  
%We need to show that $k\geqslant k'$. 
We need to show that $k = k'$. 
Intuitively, if $k < k'$ or $k > k'$, then we can derive that $P$ and $Q$ are not bisimilar. 
Assume to the contrary that $k < k'$ (the case $k > k'$ is similar). This means somehow $Q$ becomes state-preserving, say divergent, later than $P$. 
In light of Corollary \ref{cor:state_change}, starting from $P\CCSMWB Q$ and \bctrmd{before entering the state-preserving stage (diverging if any)}, both $P$ and $Q$ change their states stepwise in each `bisimulating' action. %stage. ~~~ 
%\nts{what now?? }  
%{\normalsize
Specifically, % / Intuitively, 
if $k < k'$, consider the following two sequences. 
\[
\begin{array}{l}
P\equiv P_0 \st{\tau} P_1 \st{\tau} P_2 \cdots \st{\tau} P_k \st{\tau} P_{k+1} \\\\
Q\equiv Q_0 \st{\tau} Q_1 \st{\tau} Q_2 \cdots \st{\tau} Q_k \st{\tau} Q_{k+1} \st{\tau} \cdots \st{\tau} Q_{k'} 
\end{array}
\]

We have %several observations.
\begin{enumerate}
\item[(1)] $P_i\NCCSMWB P_{i+1}$ ($i=0,...,k-1$). Before jumping into state-preserving stage (i.e., divergence), % if any), 
every $\tau$ changes the state of the process initiating from $P$ 
(recall that we choose the minimal $k$) %satisfying the first property (1) of this lemma).
%\fbox{\nts{NEED more CHECK of this observation!! \rc{HERE!}} }

\item[(2)] $P_k\CCSMWB P_{k+1}$. Once entering the phase of state-preserving, % (divergence), 
no $\tau$ actions changes the state.
%\fbox{\nts{NEED more CHECK of this observation!! \rc{HERE!}} }

\item[(3)] $Q_i\NCCSMWB Q_{i+1}$ ($i=0,...,k'-1$). For a reason similar to  (1). 
%\fbox{\nts{NEED more CHECK of this observation!! \rc{HERE!}} }

\item[(4)] $P_i\CCSMWB Q_i$ ($i=0,...,k$). This, % is by arrangement, cna 
can be ensured by Lemma \ref{l:state-chg-bisi-prop} and Lemma \ref{l:state-chg-bisi-prop1}, since the $\tau$ actions (until $P_k$ and $Q_k$) change the states.
%This follows from $P\CCSMWB Q$, because otherwise $P$ would fail to bisimulate $Q$ \nts{???right???NEED more CHECK here!?!?!}). \fbox{\nts{NEED more CHECK of this observation!! \rc{HERE!}} }

\end{enumerate}

Now these observations lead to a contradiction, i.e., we have $P_k \CCSMWB P_{k+1} \CCSMWB Q_{k+1}$ since $P_k$ must simulate $Q_k$'s $\tau$, but then $Q_k \CCSMWB Q_{k+1}$, which is contradictory. 
This scenario is depicted as in Fig. \ref{fig:proof4lemma21}, in which the place where the contradiction emerges is highlighted with the bold font. That is, the transitive path $Q_k\,\CCSMWB\, P_k \,\CCSMWB\, P_{k+1} \,\CCSMWB\, Q_{k+1}$ yields a contradiction with the edge connecting $Q_k$ and $Q_{k+1}$, i.e., $Q_k \,\NCCSMWB\, Q_{k+1}$.

%}%normalize
%\fbox{\nts{TODO...more check/expanding ??}} %... (use Corollary \ref{cor:state_change} to prove (2)?)}}\\
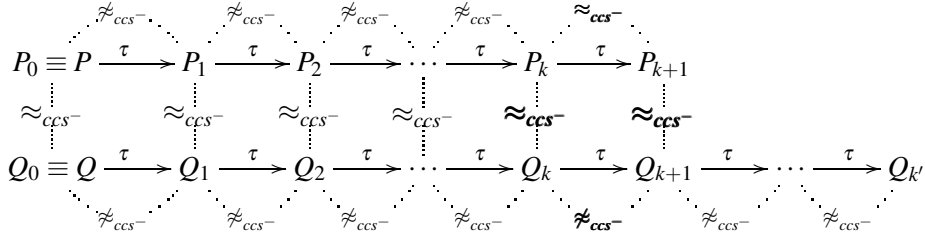
\begin{figure*}[!htb]
\centering
  % \subfigure[]{
  %   \includegraphics[width=6cm,height=2cm]{photo.eps}}
  % \subfigure[]{
  %   \includegraphics[width=6cm,height=2cm]{photo.eps}}
\[\hspace*{1cm} %NEED: \usepackage{etex}
\xymatrix@C=25pt{
P_0\equiv P \ar@/^1.6pc/@{.}[0,1]|{\NCCSMWB} \ar@{.}[d]|-{\displaystyle \CCSMWB} \ar[r]^{\tau} 
& P_1 \ar@/^1.6pc/@{.}[0,1]|{\NCCSMWB} \ar[r]^{\tau} \ar@{.}[d]|-{\displaystyle \CCSMWB} & P_2 \ar@/^1.6pc/@{.}[0,1]|{\NCCSMWB} \ar[r]^{\tau} \ar@{.}[d]|-{\displaystyle \CCSMWB} &  \cdots \ar@/^1.6pc/@{.}[0,1]|{\NCCSMWB} \ar[r]^{\tau} \ar@{.}[d]|-{\displaystyle \CCSMWB} &  P_k \ar@/^1.6pc/@{.}[0,1]|{\pmb{\CCSMWB}} \ar[r]^{\tau} \ar@{.}[d]|-{\displaystyle \pmb{\CCSMWB}} & P_{k+1} \ar@{.}[d]|-{\displaystyle \pmb{\CCSMWB}} & & & \\
%r\ar@{}[r]|-{\displaystyle \leq} \ar[d]^{\widehat{\mu}} & \app { \encom N} r \ar@{=>}[d]^{\widehat{\mu}} \\
%S \ar@{}[r]|-{\displaystyle \geq}& C_1[\encom{M_1}] & &  & C_1[\encom{N_1}]\ar@{}[r]|-{\displaystyle \leq} & T
Q_0\equiv Q \ar@/_1.6pc/@{.}[0,1]|{\NCCSMWB} \ar[r]^{\tau}
& Q_1 \ar@/_1.6pc/@{.}[0,1]|{\NCCSMWB} \ar[r]^{\tau} & Q_2 \ar@/_1.6pc/@{.}[0,1]|{\NCCSMWB} \ar[r]^{\tau} &  \cdots \ar@/_1.6pc/@{.}[0,1]|{\NCCSMWB} \ar[r]^{\tau} &  Q_k \ar@/_1.6pc/@{.}[0,1]|{\pmb{\NCCSMWB}} \ar[r]^{\tau} & Q_{k+1} \ar@/_1.6pc/@{.}[0,1]|{\NCCSMWB} \ar[r]^{\tau} & \cdots \ar@/_1.6pc/@{.}[0,1]|{\NCCSMWB} \ar[r]^{\tau} &  Q_{k'}\\
}
\]  
  \caption{Figure for the proof of Lemma \ref{l:ccs_tau_state_div}.} \label{fig:proof4lemma21}
\end{figure*}
\end{proof}

%-----------------------------------------------------------------
%\SEPALine
\sepp

Next comes a crucial property for analyzing the matching of $\tau$ actions in the sense of the weak bisimulation equality. %\nts{Are we moving too fast?}

% \sepp\sepp
% \fbox{\nts{\large \xxa{RE-RE-RE-ENGAGE FROM \rc{$\clubsuit\clubsuit$ HERE HERE HERE!!!}: 
% %(0) ADD the missing parts (e.g., Defs); 
% }}}
% \sepp\sepp

\begin{proposition}\label{p:tau_case}
Assume $P\CCSMWB Q$ and $P\st{\tau} P'$, then $Q \,\st{\tau} \CCSMWB\, P'$.
\end{proposition}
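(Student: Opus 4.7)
The plan is to derive this proposition directly from the preceding case analysis of $\tau$ actions. Recall the standing dichotomy introduced in Section \ref{s:ccsm}: for any transition $P \st{\tau} P'$, either the action is state-changing, meaning $P \,\NCCSMWB\, P'$, or it is state-preserving, meaning $P \,\CCSMWB\, P'$. So I would begin the proof by splitting on this disjunction.

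In the state-changing case, Lemma \ref{l:state-chg-bisi-prop} applies literally: since $P \CCSMWB Q$ and $P \st{\tau} P'$ with the $\tau$ state-changing, there exists $Q'$ with $Q \st{\tau} Q' \CCSMWB P'$ (and moreover the matching $\tau$ is itself state-changing). In the state-preserving case, Corollary \ref{cor:state-chg-bisi-prop1} applies in exactly the same direct way, yielding $Q'$ with $Q \st{\tau} Q' \CCSMWB P'$ (with the matching $\tau$ also state-preserving). Conjoining the two cases gives the required $Q \st{\tau} \CCSMWB P'$, which is precisely the statement of the proposition.

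I do not expect any genuine obstacle here, because all the substantive work has been absorbed into the earlier lemmas. The crucial technical point, namely that $Q$ cannot simulate $P$'s $\tau$ by doing zero or more than one $\tau$, was already established through Corollary \ref{l:state-chg-bisi-prop0}, Lemma \ref{l:state-chg-bisi-prop1}, Lemma \ref{l:state-chg-bisi-prop}, and Corollary \ref{cor:state-chg-bisi-prop1}. In particular, the finiteness of the state-changing segment of $\tau$-sequences (Corollary \ref{l:state-chg-bisi-prop0}), together with the synchronization of its length across bisimilar processes (Lemma \ref{l:state-chg-bisi-prop1}), forces a stepwise match in the state-changing regime; and in the state-preserving regime the same one-step match is imposed because a state-preserving $\tau$ cannot be answered by a state-changing one without destroying bisimilarity. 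The proof proper is therefore just the clean case split described above, concluding with a single sentence unifying the two conclusions into the form $Q \,\st{\tau} \CCSMWB\, P'$.
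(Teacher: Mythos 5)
Your proof is correct and rests on the same two pillars as the paper's own argument: Lemma \ref{l:state-chg-bisi-prop} for the state-changing case and Corollary \ref{cor:state-chg-bisi-prop1} for the state-preserving case. The only real difference is the case decomposition. You split directly on the exhaustive dichotomy of whether the $\tau$ in $P\st{\tau} P'$ is state-changing or state-preserving, and each branch is then handed verbatim to the appropriate result. The paper instead splits first on whether $P$ is divergent: it uses Lemma \ref{l:ccs_tau_state} to conclude that in the non-divergent case the $\tau$ must be state-changing, and Lemma \ref{l:ccs_tau_state_div} to organize the divergent case into the subcases $k=0$ and $k\neq 0$ --- but every one of those subcases is ultimately discharged by exactly the two results you invoke. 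Since you treat both halves of the state-changing/state-preserving dichotomy anyway, the divergence analysis is logically redundant for this particular proposition, and your version is a strict streamlining; what the paper's longer route buys is only the side information about which kind of $\tau$ can actually arise in each divergence regime, which the statement being proved does not require.
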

\begin{proof}
By the assumption, we know that $P\st{\tau} P'$ implies $Q \,\wt{} \CCSMWB\, P'$. We need to more precisely analyze the matching weak transition by $Q$ (in absence of the restriction operation), with the help of 
%Lemma \ref{l:ccs_tau_state} through Lemma \ref{l:ccs_tau_state_div} and the accompanying corollaries.
the lemmas thus far and the accompanying corollaries.

The analysis separates the cases whether $P$ is divergent or not. We first tackle the case when $P$ is not divergent. By Lemma \ref{l:ccs_tau_state}, we know that $P \,\NCCSMWB\, P'$, i.e., that $\tau$ is state-changing. Then By Lemma \ref{l:state-chg-bisi-prop}, we have $Q\st{\tau} Q' \,\CCSMWB\, P'$ in which $\tau$ is state-changing.

% \sepp
% \fbox{\nts{NEED more CHECK!! or STRENGTHEN using Lemmas \ref{l:ccs_tau_state}---\ref{l:ccs_tau_state_div}~~ \rc{HERE!}} }
% \sepp

% \sepp
% \fbox{\nts{NEED more CHECK!! \rc{HERE!}} }
% \sepp

% A straightfoward observation is that $Q$ must not simulate $P$ by null move, otherwise we have $P \,\CCSMWB\, Q \,\CCSMWB\, P'$, contradicting Lemma \ref{l:ccs_tau_state}. Therefore, it must be the case that $Q\st{\tau} Q' \wt{} Q'' \CCSMWB P'$. We now claim that $Q'\CCSMWB Q''$, thus completing the proof. Actually, we prove something stronger, i.e., $Q'\CCSMSB Q''$.  
% \nts{......}

Now we consider the case when $P$ is divergent. Since $P\CCSMWB Q$, we know that $Q$ diverges too. 
Assume $k$ is as decided by %Lemmas \ref{l:ccs_tau_state_div0} and 
Lemma \ref{l:ccs_tau_state_div}. That is, $P$ and $Q$ make the same number %$k$ 
of state-changing $\tau$ action sequence before (simultaneously) entering divergence (in which $\tau$ is state-preserving). 
There are two subcases.
\begin{enumerate}
\item[(1)] If $k$ equals $0$, then the $\tau$ in $P \st{\tau} P'$ must be state-preserving. By Corollary \ref{cor:state-chg-bisi-prop1}, $Q\st{\tau} Q' \,\CCSMWB\, P'$ in which $\tau$ is state-preserving.

\item[(2)] If $k$ is not $0$, then the $\tau$ in $P \st{\tau} P'$ can be state-changing or state-preserving (i.e., from the divergence after $k$). In the former, the result follows as above in the case $P$ is not divergent. In the latter, we conclude as in (1).

%\item[] \ntstrmd{(the above two-case argument OK?seems SO!)}
\end{enumerate}
\vspace*{-.5cm}
%\rc{In light of that Lemma, if $k$ equals $0$ in that lemma, then $Q\st{\tau} \CCSMWB Q'$ and therefore $Q'\CCSMWB P'$ because $P\CCSMWB P'$.}
%\fbox{\nts{TO CEHCK \rc{HERE}...}}

% Otherwise if $k$ is not zero, then by resorting to (2) of Lemma \ref{l:ccs_tau_state_div}, we know the result also holds (actually the proof of (2)(on top of (1) and Lemma \ref{l:ccs_tau_state_div0}) of that lemma entails the result here). 
% This is because at the stage before entering divergence (i.e., having exactly $k$ state-switching $\tau$ actions), each $P$'s $\tau$ must be bisimulated by one state-switching $\tau$ of $Q$, otherwise they would not be bisimilar after consuming all $k$ number of $\tau$ actions (reaching the ready-for-diverging state). So we are done.
% \fbox{\nts{TO CEHCK \rc{HERE}...(notice mutual simulation??)}}
% %\nts{(??)}~\nts{more CHECK ...(notice mutual simulation??)}%\fbox{\nts{CHECK/DO: mutual simulation??} }
\end{proof}

%\sepp
%-----------------------------------------------------------------
%\SEPALine
\sepp

Having done analyzing the $\tau$ action in a bisimulation, we finally deal with the situation for visible actions.
\begin{proposition}\label{p:visible_case}
% Assume $P\CCSMWB Q$ and $P\st{\alpha} P'$ in which $\alpha$ is not $\tau$, then 
% %\cancel{$Q \,\st{\alpha} \CCSMWB\, P'$} ~~~ 
% $Q \,\wt{} Q_1\st{\alpha} Q_2 \CCSMWB\, P'$, in which $Q \CCSMWB Q_1$, for some $Q_1, Q_2$.
Assume $P\CCSMWB Q$ and $P\st{\alpha} P'$ in which $\alpha$ is not $\tau$, then 
$Q \,\wt{} Q_1\st{\alpha} Q_2 \CCSMWB\, P'$ for some $Q_1, Q_2$, in which $Q \CCSMWB Q_1$ and the $\tau$ actions in $Q \,\wt{} Q_1$ are state-preserving.
\end{proposition}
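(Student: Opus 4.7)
The plan is to refine the standard weak-bisimulation matching of $P \st{\alpha} P'$ using the tight control over $\tau$-matching provided by Proposition \ref{p:tau_case} together with the state-change analysis from the preceding lemmas.

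By $P \CCSMWB Q$ and $P \st{\alpha} P'$, there exists $Q^*$ with $Q \wt{\alpha} Q^*$ and $P' \CCSMWB Q^*$; decompose this into $Q \st{\tau}_{n}\, Q_1^{\circ} \st{\alpha} Q_2^{\circ} \st{\tau}_{m}\, Q^*$ for some $n,m \geqslant 0$. The objective is to show this matching can be chosen so that $Q \CCSMWB Q_1^{\circ}$ (all $n$ leading $\tau$'s are state-preserving) and $Q_2^{\circ} \CCSMWB P'$ (all $m$ trailing $\tau$'s are absorbed into the bisimilarity).

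For the leading $\tau$'s I would proceed by an extremal argument: among all matchings $Q \wt{\alpha} Q'$ with $P'\CCSMWB Q'$, pick one that minimises the number of state-changing $\tau$'s before the $\alpha$-action. If this minimum were strictly positive, then applying Proposition \ref{p:tau_case} in the $Q\!\to\!P$ direction to the first such state-changing $\tau$ would force $P$ to engage in a matching state-changing $\tau$; yet $P$ fires $\alpha$ directly, without any $\tau$. Combined with Corollary \ref{l:state-chg-bisi-prop0} (finiteness of state-changing $\tau$-prefixes) and Lemma \ref{l:state-chg-bisi-prop1} (the matched count of state-changing $\tau$'s between $P$ and $Q$), one can then exhibit a strictly smaller matching, contradicting minimality. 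The chosen matching therefore has only state-preserving leading $\tau$'s, and iterated use of Corollary \ref{cor:state-chg-bisi-prop1} gives $Q \CCSMWB Q_1^{\circ}$.

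For the trailing $\tau$'s I would apply the symmetric reasoning to $P' \CCSMWB Q^*$: each $\tau$ in $Q_2^{\circ} \st{\tau}_{m}\, Q^*$ is either state-preserving, in which case Lemma \ref{l:rep_invar} together with Corollary \ref{cor:state-chg-bisi-prop1} keeps the bisimilarity class intact, or state-changing, in which case Proposition \ref{p:tau_case} would demand a matching $\tau$ from $P'$ that can be reallocated before the $\alpha$ (using the same extremal argument). Thus one may take $Q_2 = Q_2^{\circ}$ and conclude $Q_2 \CCSMWB P'$.

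The main obstacle is making the extremal/elimination step fully rigorous, namely showing that any state-changing $\tau$ placed before (respectively after) the $\alpha$ in the matching is genuinely redundant or reallocatable. This crucially relies on the absence of the restriction operator: every $\tau$ in \CCSm\ arises from a synchronization on a visible channel, so the bisimulation ``sees'' each state-changing $\tau$ and forces a precise counterpart on the other side. Handling this step cleanly might require auxiliary induction on $n+m$, and possibly bisimulation up-to context (Definition \ref{def:up-to-context-ccs}) to justify the re-routing of $\tau$'s relative to the $\alpha$-action.
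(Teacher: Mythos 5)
Your proposal follows essentially the same route as the paper: decompose the matching weak transition as $Q \wt{} Q_1 \st{\alpha} Q_2 \wt{} Q'$ and then invoke the state-changing-$\tau$ counting machinery (Lemmas \ref{l:state-chg-bisi-prop1}, \ref{l:state-chg-bisi-prop} and Corollary \ref{cor:state-chg-bisi-prop1}) to conclude that any state-changing $\tau$ before or after the $\alpha$ would leave $Q'$ with a mismatched budget of state-changing $\tau$'s relative to $P'$, so all leading and trailing $\tau$'s are state-preserving. Your extremal/minimality wrapper is just a more explicit packaging of that same counting argument, and the rigor concern you flag at the elimination step is present, equally unresolved, in the paper's own one-line justification of it.
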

\begin{proof}
By the premise, $Q$ must simulate by $Q\wt{} Q_1 \st{\alpha} Q_2 \wt{} Q' \,\CCSMWB\, P'$ for some $Q_1, Q_2$, and $Q'$. 
%Particularly, we need to show that $Q_2$ is forced to have no internal actions in bisimulating $P$. 
%Otherwise the bisimulation between $P$ and $Q$ breaks down. %We proceed by contraposition. 
% \sepp
% \fbox{\nts{NEED more ANALYSIS / CHECK!! \rc{HERE!}} }
% \sepp
%\bc{(Intuitively)} 
In this simulation, by Lemmas \ref{l:state-chg-bisi-prop1}, \ref{l:state-chg-bisi-prop} and Corollary \ref{cor:state-chg-bisi-prop1}, %it can be inferred that   
%$Q\wt{} Q_1 \st{\alpha} Q_2 \wt{} Q'$, 
the $\tau$ sequence between $Q$ and $Q_1$, and the $\tau$ sequence between $Q_2$ and $Q'$, contain no state-changing $\tau$ actions. If assumed otherwise, $Q'$ would be unable to match $P'$ due to being short of enough state-changing $\tau$ actions as compared with those of $P'$. So the internal action sequences in the bisimulation must be state-preserving.  
Consequently, we have $Q\wt{}\st{\alpha} Q_2 \,\CCSMWB\, P'$, as needed. 
% \fbox{
% \begin{minipage}{12cm}
% \rc{\Large STOP!STOP!STOP! WRONG here!}
% \nts{
% Say $P_1\DEF !c.d \para !\overline{c} \para d$ and $P_2\DEF !c.d \para !\overline{c} \para !c$. Obviously $P_1 \,\NCCSMSB\, P_2$. However $P_1 \,\CCSMWB\, P_2$, because the action 
% \[P_1\,\st{d}\, !c.d \para !\overline{c}
% \] can be simulated by
% \[P_2\,\st{\tau}\, !c.d \para !\overline{c} d\para 0 \para!c \,\st{d}\, !c.d \para !\overline{c} 0\para 0 \para!c
% \] Henceforth, every $d$ produced by $P_1$ in simulating the subprocess $!c$ in $P_2$ can be simulated in a similar way.
% } 
% \end{minipage}
% }
\end{proof}
In general, the result stated in Proposition \ref{p:visible_case} turns out to be the best we can do to strengthen the simulation of a visible action, as opposed to the counterexample  that distinguishes the strong and weak bisimilarities (see Lemma \ref{con:ccs_bisi_ws_coin}). %((slightly maybe)). 

Nonetheless, %as a matter of fact, 
Proposition \ref{p:visible_case} can be refined further if one simply focuses on non-divergent processes. In that case, we end up with %somewhat jump onto 
the strong bisimilarity, as Corollary \ref{cor:visible_case} reveals. 
With this corollary in position, it makes sense to speculate that, if the replication operator were to be eliminated from \CCSm\ (though this makes the calculus less interesting), then the weak bisimilarity would flat onto the strong bisimilarity. 
\begin{corollary}\label{cor:visible_case}
Assume $P$ is not divergent. If $P\CCSMWB Q$ and $P\st{\alpha} P'$ where $\alpha$ is not $\tau$, then $Q \,\st{\alpha} \CCSMWB\, P'$.
\end{corollary}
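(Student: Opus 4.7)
My plan is to derive this corollary as a short refinement of Proposition \ref{p:visible_case}, using non-divergence to eliminate the possibility of any preceding $\tau$-actions in $Q$'s simulation.

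First, I would invoke Proposition \ref{p:visible_case} directly on the hypothesis $P\CCSMWB Q$ and $P\st{\alpha} P'$ (with $\alpha\neq\tau$). This yields $Q_1,Q_2$ with
\[
Q \wt{} Q_1 \st{\alpha} Q_2 \CCSMWB P',
\]
where the $\tau$-transitions composing $Q\wt{} Q_1$ are all state-preserving, and moreover $Q\CCSMWB Q_1$. The entire task is then to argue that this initial $\tau$-segment $Q\wt{} Q_1$ is in fact empty, i.e., $Q = Q_1$.

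The crucial observation is that divergence sensitivity of $\CCSMWB$ propagates non-divergence: since $P$ is not divergent and $P\CCSMWB Q$, the process $Q$ cannot be divergent either. Lemma \ref{l:ccs_tau_state} then tells us that every $\tau$-transition emanating from $Q$ (or from any descendant of $Q$ reached by $\tau$'s) is state-changing, i.e., takes $Q$ into a process that is not weakly bisimilar to $Q$. But Proposition \ref{p:visible_case} already guarantees that all $\tau$-steps in $Q\wt{} Q_1$ are state-preserving. The only way both conditions can simultaneously hold is for the $\tau$-sequence to contain zero steps; otherwise the first $\tau$ in that sequence would be both state-changing (by non-divergence and Lemma \ref{l:ccs_tau_state}) and state-preserving, a contradiction.

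Therefore $Q = Q_1$ and the conclusion becomes $Q\st{\alpha} Q_2 \CCSMWB P'$, as required. I do not anticipate any genuine obstacle: the argument is essentially a one-line consequence of divergence sensitivity combined with the fact that, in the non-divergent case, Lemma \ref{l:ccs_tau_state} collapses the distinction between state-preserving and state-changing $\tau$'s by ruling out the former entirely. The only mildly delicate point is making sure to apply Lemma \ref{l:ccs_tau_state} to the intermediate processes along $Q\wt{} Q_1$ (not just to $Q$ itself), but this is immediate since any descendant of a non-divergent process along $\tau$'s is still non-divergent.
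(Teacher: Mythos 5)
Your proof is correct and follows essentially the same route as the paper: apply Proposition \ref{p:visible_case}, use divergence-sensitivity to transfer non-divergence from $P$ to $Q$, and then invoke Lemma \ref{l:ccs_tau_state} to force every $\tau$ in $Q\wt{}Q_1$ to be state-changing, contradicting the state-preserving conclusion unless the sequence is empty. Your explicit remark about applying Lemma \ref{l:ccs_tau_state} to the intermediate processes along $Q\wt{}Q_1$ is a small but welcome clarification that the paper leaves implicit.
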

\begin{proof}

Since $P$ is not divergent, neither is $Q$. 
By Proposition \ref{p:visible_case}, we know that $Q\wt{} Q_1 \st{\alpha} Q_2 \,\CCSMWB\, P'$, where $Q\wt{} Q_1$ has only state-preserving $\tau$ actions. However, $Q$ is not divergent. This means that the $\tau$ actions in $Q\wt{} Q_1$ can only be state-changing. This is a contradiction, which leads to the only possibility that $Q\wt{} Q_1$ contains zero $\tau$ actions, i.e., $Q_1$ is $Q$. Thus we are done.
\end{proof}

Now, Proposition \ref{p:tau_case} and Proposition \ref{p:visible_case} amount to Theorem \ref{con:ccsm_bisi_ws_coin}, the main result.

\begin{proof}[Proof of Theorem \ref{con:ccsm_bisi_ws_coin}]
We define 
\[\R \DEF \{(P,Q) \,|\, P \,\CCSMWB\, Q\}.
\]
We show that \R is a 
%~~~ \cancel{strong bisimulation} ~~~ 
quasi-strong bisimulation. Assume $P\,\R\, Q$. We have two cases.
\begin{itemize}
\item $P\st{\alpha} P'$ in which $\alpha$ is not $\tau$. %\nts{ ... }
By Proposition \ref{p:visible_case}, 
%~~~\cancel{$Q \,\st{\alpha} Q' \CCSMWB\, P'$} ~~~ 
$Q \,\wt{}\st{\alpha} Q' \CCSMWB\, P'$. 
So we have $P'\,\R\, Q'$.

\item $P\st{\tau} P'$. %\nts{ ... }
By Proposition \ref{p:tau_case}, $Q \,\st{\tau} Q' \CCSMWB\, P'$. So we have $P'\,\R\, Q'$.
\end{itemize}\vspace*{-.5cm}
\end{proof}

%\sepp

%\paragraph*{\textit{Discussion}}~ 
\subsection{Further results and discussion} 
%
%We make some discussion about the result in this section.

%\item[] 
We argue that Corollary \ref{cor:coin-weak-quasi-strong} also holds for the branching bisimilarity \cite{GW89a,GW96} (in place of the weak bisimilarity). %The technical routine is more or less the same. 
This, in turn, would lead to the coincidence between the weak bisimilarity and the branching bisimilarity. It is an intriguing and practicable work that we now solidify.

First of all, it is helpful to exploit further Theorem \ref{con:ccsm_bisi_ws_coin}, particularly the relationship with the branching bisimilarity \cite{GW89a,GW96}, a well-known equivalence relation on processes that preserves the branching structure of processes. The definition of branching bisimulation is as follows. %To be consistent with the current setting, we impose divergence-sensitiveness on the bisimulation.
\begin{definition}\label{d:branching-bisi}
A symmetric binary relation \R on \CCSm\ processes is a branching bisimulation if it is divergence-sensitive, and whenever $P \,\R\, Q$, the following properties hold. 
\begin{itemize}
\item if $P\st{\alpha} P'$, then either 
\begin{itemize}
\item $\alpha$ is $\tau$ and $P' \,\R\, Q$; Or
\item $Q\wt{}Q''\st{\alpha} Q'$, $P\,\R\, Q''$ and $P'\,\R\, Q'$.
\end{itemize}
\end{itemize}
Two processes $P$ and $Q$ are branching bisimilar, notation $P \,\CCSMBB\, Q$, if there exists some branching bisimulation \R such that $P\,\R\, Q$.    
\end{definition}
We call $\CCSMBB$ the branching bisimilarity. To be consistent with the current setting, we impose divergence-sensitiveness on the branching bisimulation.   
It is not hard to see that the branching bisimilarity implies the weak bisimilarity, i.e., $\CCSMBB \,\subseteq\, \CCSMWB$.

Here comes an important observation of the proof of Theorem \ref{con:ccsm_bisi_ws_coin}. It uses Proposition \ref{p:visible_case}, which states that the internal actions in $Q \,\wt{}\st{\alpha} Q' \CCSMWB\, P'$ in the first clause of the proof of Theorem \ref{con:ccsm_bisi_ws_coin} are actually state-preserving. Following this observation, we can strengthen the definition of quasi-strong bisimulation without changing any distinguishing power. 
\begin{definition}\label{d:quasi-strong-branching-bisi}
%TOADAPT(adapt from the definition of weak bisimulation)
A symmetric binary relation \R on \CCSm\ processes is a quasi-strong branching bisimulation if it is divergence-sensitive, and whenever $P \,\R\, Q$, the following properties hold. 
\begin{itemize}
\item if $P\st{\alpha} P'$ and $\alpha$ is not $\tau$, then $Q\,\wt{}Q''\st{\alpha} Q'$, $P\,\R\, Q''$ and $P'\,\R\, Q'$.
\item if $P\st{\tau} P'$, then $Q\st{\tau} Q'$ and $P'\R Q'$.
\end{itemize}
Two processes $P$ and $Q$ are quasi-strongly branching bisimilar, notation $P \,\CCSMQSBV\, Q$, if there exists some quasi-strong branching bisimulation \R such that $P\,\R\, Q$.
\end{definition}

Through the same proof routine as that of Theorem \ref{con:ccsm_bisi_ws_coin}, we can infer that $\CCSMQSBV$ also coincides with the weak bisimilarity.
\begin{lemma}\label{l:coin-weak-quasi-strong-v}
In \CCSm, it holds that $\CCSMWB \;=\; \CCSMQSBV$.
\end{lemma}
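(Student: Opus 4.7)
The plan is to establish the two inclusions separately, reusing the machinery already developed for Theorem \ref{con:ccsm_bisi_ws_coin}. The key observation driving the entire argument is that Proposition \ref{p:visible_case} already delivers \emph{more} than what $\CCSMQSB$ requires: it not only witnesses a matching $Q\wt{}Q_1\st{\alpha}Q_2$ with $Q_2 \CCSMWB P'$, but it furthermore asserts $Q \CCSMWB Q_1$ because the leading $\tau$-steps are state-preserving. This extra piece is precisely the witness needed for the branching-style condition $P \,\R\, Q''$ in Definition \ref{d:quasi-strong-branching-bisi}.

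For the easy direction $\CCSMQSBV \,\subseteq\, \CCSMWB$, I would argue directly from the definitions. Any quasi-strong branching bisimulation $\R$ matches a visible $\alpha$ by $\wt{}\st{\alpha}$, which is subsumed by $\wt{\widehat{\alpha}}$, and matches a $\tau$ by $\st{\tau}$, which is subsumed by $\wt{}$; divergence sensitivity is shared between the two definitions. Hence $\R$ is a weak bisimulation, giving the inclusion.

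For the nontrivial direction $\CCSMWB \,\subseteq\, \CCSMQSBV$, I would set $\R \DEF \{(P,Q) \,|\, P\,\CCSMWB\, Q\}$ and verify that $\R$ is a quasi-strong branching bisimulation. Divergence sensitivity of $\R$ is inherited from $\CCSMWB$. Suppose $P \,\R\, Q$ and $P\st{\alpha}P'$. If $\alpha$ is $\tau$, Proposition \ref{p:tau_case} yields $Q\st{\tau}Q'$ with $P'\,\CCSMWB\,Q'$, so $P'\,\R\, Q'$, exactly the second clause of Definition \ref{d:quasi-strong-branching-bisi}. If $\alpha$ is visible, Proposition \ref{p:visible_case} supplies $Q_1, Q_2$ with $Q\wt{}Q_1\st{\alpha}Q_2$, $Q_2\,\CCSMWB\, P'$, and $Q\,\CCSMWB\, Q_1$. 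Taking $Q'' \DEF Q_1$ and $Q' \DEF Q_2$, we obtain $P\,\R\,Q''$ (from $P\,\CCSMWB\,Q\,\CCSMWB\, Q_1$ and transitivity of $\CCSMWB$) and $P'\,\R\, Q'$, which is the first clause of Definition \ref{d:quasi-strong-branching-bisi}.

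I do not expect a genuine obstacle here, as all the substantive work has already been done inside Propositions \ref{p:tau_case} and \ref{p:visible_case}. The only point deserving attention is the subtle strengthening recorded in Proposition \ref{p:visible_case}, namely that the $\tau$-prefix in $Q\wt{}Q_1$ is state-preserving so that $Q \CCSMWB Q_1$; without this, the branching witness $P \,\R\, Q''$ would be missing. Everything else is an immediate repackaging of the proof of Theorem \ref{con:ccsm_bisi_ws_coin}.
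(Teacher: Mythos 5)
Your proposal is correct and follows essentially the same route as the paper: the paper proves this lemma by invoking "the same proof routine as that of Theorem \ref{con:ccsm_bisi_ws_coin}", i.e., taking $\R$ to be $\CCSMWB$ and discharging the two clauses via Propositions \ref{p:tau_case} and \ref{p:visible_case}, with the extra branching witness $P\,\R\,Q''$ supplied by the state-preserving (hence $Q \CCSMWB Q_1$) part of Proposition \ref{p:visible_case} — exactly the observation you single out. Your write-up merely makes explicit the easy converse inclusion and the transitivity step $P \CCSMWB Q \CCSMWB Q_1$, which the paper leaves implicit.
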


Now examining the difference between the quasi-strongly branching bisimulation and the branching bisimulation, it is straightforward to see that both of the clauses of the quasi-strongly branching bisimulation implies that of the branching bisimulation. Hence the following lemma.
\begin{lemma}\label{l:coin-weak-quasi-strong-v-branching}
In \CCSm, it holds that $\CCSMQSBV \;\subseteq\; \CCSMBB$.
\end{lemma}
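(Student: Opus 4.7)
The plan is to prove the inclusion directly by showing that any quasi-strong branching bisimulation is itself a branching bisimulation. Given a quasi-strong branching bisimulation \R\ witnessing $P \CCSMQSBV Q$, I would verify that the two defining clauses of Definition \ref{d:branching-bisi} are met by \R, after which the inclusion at the level of bisimilarities follows immediately. Divergence sensitivity carries over unchanged, since both notions use the same condition.

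For the action-matching clause, the argument splits on whether $\alpha$ is visible or $\tau$. The visible case is immediate: if $P \R Q$ and $P\st{\alpha}P'$ with $\alpha\neq\tau$, then Definition \ref{d:quasi-strong-branching-bisi} supplies $Q\wt{}Q''\st{\alpha}Q'$ with $P\R Q''$ and $P'\R Q'$, which is exactly the second disjunct of the branching clause. No padding is needed.

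The $\tau$ case is the only point that requires a small observation. Definition \ref{d:quasi-strong-branching-bisi} produces $Q\st{\tau}Q'$ with $P'\R Q'$, whereas Definition \ref{d:branching-bisi} would also accept an empty match $P'\R Q$. I would match the branching clause by taking the zero-length weak transition $Q\wt{}Q$ followed by $Q\st{\tau}Q'$, i.e., setting $Q''\DEF Q$ in the second disjunct; then $P\R Q''$ holds by the standing assumption $P\R Q$, and $P'\R Q'$ is given. Thus the branching-bisimulation clause is satisfied without appeal to the first disjunct at all.

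I do not anticipate any genuine obstacle here, since the definitions are arranged so that the quasi-strong branching matching is strictly more demanding than the branching matching: the quasi-strong visible clause is literally the branching visible clause, and the quasi-strong $\tau$ clause instantiates the branching $\tau$ clause by choosing the intermediate process to be $Q$ itself. Consequently \R\ is a branching bisimulation, and taking unions over all such \R\ yields $\CCSMQSBV \subseteq \CCSMBB$.
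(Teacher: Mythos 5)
Your proof is correct and follows essentially the same route as the paper: the visible clause transfers verbatim, and the $\tau$-clause of the quasi-strong branching bisimulation is turned into the branching clause by padding with the empty weak transition, i.e., reading $Q\st{\tau}Q'$ as $Q\wt{}Q\st{\tau}Q'$ with $Q''\DEF Q$, exactly as the paper does.
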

\begin{proof}
To show that $\CCSMQSBV$ is a branching bisimulation, we focus on the $\tau$-clause, since it is the only distinct part. 
In particular, the second clause of the quasi-strongly branching bisimulation implies that of the branching bisimulation because we can rewrite $Q\st{\tau} Q'$ as $Q\,\wt{}Q\st{\tau} Q'$.
\end{proof}

The lemmas above lead to the follow-up corollary.
\begin{corollary}\label{l:coin-weak-branching}
In \CCSm, it holds that $\CCSMWB \;=\; \CCSMBB$.
\end{corollary}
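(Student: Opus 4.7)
The plan is to simply chain together the two preparatory results established immediately before the corollary, together with the ``easy direction'' noted in the prose right after Definition~\ref{d:branching-bisi}. Concretely, Lemma~\ref{l:coin-weak-quasi-strong-v} gives $\CCSMWB \;=\; \CCSMQSBV$, and Lemma~\ref{l:coin-weak-quasi-strong-v-branching} gives $\CCSMQSBV \;\subseteq\; \CCSMBB$; composing these two yields one inclusion $\CCSMWB \;\subseteq\; \CCSMBB$.

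For the reverse inclusion I would appeal to the essentially definitional observation $\CCSMBB \;\subseteq\; \CCSMWB$ already recorded in the paper. The argument is routine: any branching bisimulation \R is in particular a weak bisimulation, because in the nontrivial branching clause the transition $Q \wt{} Q'' \st{\alpha} Q'$ is by definition a weak transition $Q \wt{\widehat{\alpha}} Q'$ matching $P \st{\alpha} P'$, and the trivial $\tau$-clause is subsumed by the weak-bisimulation matching $Q \wt{} Q$; divergence-sensitivity is imposed on both notions here, so it is preserved.

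Putting the two directions together gives the chain
\[
\CCSMWB \;=\; \CCSMQSBV \;\subseteq\; \CCSMBB \;\subseteq\; \CCSMWB,
\]
which forces equality throughout, proving the corollary.

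There is no genuine obstacle in this final step; the substance of the result is already carried by Lemma~\ref{l:coin-weak-quasi-strong-v} (which itself rests on the adaptation of Theorem~\ref{con:ccsm_bisi_ws_coin} to the quasi-strong branching setting) and by Lemma~\ref{l:coin-weak-quasi-strong-v-branching}. If anything deserves a second thought, it is confirming that the bridging step in Lemma~\ref{l:coin-weak-quasi-strong-v-branching} is compatible with the branching clause's extra requirement $P \,\R\, Q''$: this is precisely where Proposition~\ref{p:visible_case} pays off, since it ensures that the internal prefix $Q \wt{} Q_1$ preceding a matching visible action consists only of state-preserving $\tau$'s, whence $Q \,\CCSMWB\, Q_1$, which supplies the intermediate relatedness demanded by branching bisimulation.
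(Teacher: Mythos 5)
Your proposal is correct and follows essentially the same route as the paper: the paper's own proof likewise chains Lemma~\ref{l:coin-weak-quasi-strong-v-branching} (giving $\CCSMQSBV \subseteq \CCSMBB \subseteq \CCSMWB$) with Lemma~\ref{l:coin-weak-quasi-strong-v} to force equality throughout. Your extra remarks on why $\CCSMBB \subseteq \CCSMWB$ holds and on the role of Proposition~\ref{p:visible_case} are sound but not needed beyond what the cited lemmas already provide.
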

\begin{proof}
By Lemma \ref{l:coin-weak-quasi-strong-v-branching}, we have $\CCSMQSBV \;\subseteq\; \CCSMBB \,\subseteq\, \CCSMWB$. Then the equality follows by Lemma \ref{l:coin-weak-quasi-strong-v}.
\end{proof}

To conclude, all the discussion so far boils down to the next theorem.
\begin{theorem}\label{t:all-in-all-coin}
In \CCSm, it holds that $\CCSMWB \;=\; \CCSMQSB \;=\; \CCSMQSBV \;=\; \CCSMBB$.
\end{theorem}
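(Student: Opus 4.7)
The plan is to simply assemble the chain of equalities from the results immediately preceding the theorem, since Theorem \ref{t:all-in-all-coin} is essentially a summary statement. Concretely, I would first cite Corollary \ref{cor:coin-weak-quasi-strong} for $\CCSMWB = \CCSMQSB$, then Lemma \ref{l:coin-weak-quasi-strong-v} for $\CCSMWB = \CCSMQSBV$, and finally Corollary \ref{l:coin-weak-branching} (itself built from Lemma \ref{l:coin-weak-quasi-strong-v-branching}) for $\CCSMWB = \CCSMBB$. Stringing these three equalities together yields the four-way coincidence stated in the theorem.

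Since no new bisimulation argument is required, I would present the proof in a couple of sentences, making the dependency explicit so the reader can trace the underlying reasoning. The non-trivial content is distributed across the earlier development: Theorem \ref{con:ccsm_bisi_ws_coin} handled the collapse of $\CCSMWB$ onto $\CCSMQSB$ via the $\tau$-analysis (Proposition \ref{p:tau_case}) and visible-action analysis (Proposition \ref{p:visible_case}); Lemma \ref{l:coin-weak-quasi-strong-v} reused the same proof template after observing that the $\wt{}$ prefix in Proposition \ref{p:visible_case} consists only of state-preserving $\tau$ actions, which lets one insert the ``$P \,\R\, Q''$'' clause required by the branching style; and the inclusion $\CCSMQSBV \subseteq \CCSMBB$ then drops down to $\CCSMBB \subseteq \CCSMWB$ (which is immediate from definitions) to close the loop.

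There is essentially no obstacle at this stage — the only thing to verify is that the inclusions line up in a consistent direction, namely
\[
\CCSMQSBV \;\subseteq\; \CCSMBB \;\subseteq\; \CCSMWB \;=\; \CCSMQSBV,
\]
so that all three in the middle are forced to be equal, and similarly $\CCSMSB \subseteq \CCSMQSB \subseteq \CCSMWB = \CCSMQSB$ pins down the quasi-strong variant. If anything subtle arises, it is in ensuring that divergence sensitivity, which is baked into each of the four relations, is preserved through every inclusion; but since every previously proved lemma has already handled divergence explicitly, this goes through with no extra work. The proof is therefore essentially a one-line assembly of prior results, and I would write it as such.
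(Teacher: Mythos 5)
Your proposal is correct and matches the paper exactly: Theorem \ref{t:all-in-all-coin} is indeed a summary statement assembled from Corollary \ref{cor:coin-weak-quasi-strong}, Lemma \ref{l:coin-weak-quasi-strong-v}, and Corollary \ref{l:coin-weak-branching}, with the inclusion chain $\CCSMQSBV \subseteq \CCSMBB \subseteq \CCSMWB = \CCSMQSBV$ closing the loop just as you describe. No gaps.
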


\sepp

We make some more remarks before ending this section.
% \sepp\sepp
% \fbox{\nts{\large \xxa{RE-RE-RE-ENGAGE FROM \rc{$\clubsuit\clubsuit$ HERE HERE HERE!!!}: 
% %(0) ADD the missing parts (e.g., Defs); 
% }}}
% \sepp\sepp
%\begin{itemize}
%\item[] 
As mentioned, in spirit of Corollary \ref{cor:visible_case} and Proposition \ref{p:visible_case}, if \CCSm\ is further deprived of the replication, we believe that the weak bisimilarity would fall onto the strong bisimilarity. This is virtually not hard to verify, by means of going through all the analysis above but ignoring those parts concerning the replication.  We do not extend the discussion into details here, as a calculus with neither restriction nor replication appears not very interesting. 
%Although this appears a bit unsurprising from intuition, 
Nevertheless, it still sheds light on the essential gap between the weak and strong bisimulation equalities. It might be intriguing to see if one can find a subcalculus of CCS in which the weak and strong bisimilarities coincide, e.g., with a very special form of replication.
Also to this point, the analysis and result on quasi-strong bisimilarity may offer some potential tool for analyzing finite-state processes in a broader field. % (?).

%These 

%\item[] 
Conceivably, we can coin the quasi-strong bisimilarity in the higher-order model, in roughly the same vein as that of \CCSm. %One might think that in the higher-order setting, it is likely to obtain 
It is worthwhile to investigate whether the same coincidence exists between the quasi-strong bisimilarity and the weak bisimilarity. Due to the difference in communication machinery, the technical approach might be strikingly different.

\sepp

\iftoggle{hidingON}{%
  % use hiding
} {%
  % no hiding
{\tiny \rc{INCORRECT arguments here! TO REMOVE (or with little chance TO AMEND)!!}
One might think that in the higher-order setting, it is likely to obtain the same coincidence exists between the quasi-strong bisimilarity and the weak bisimilarity. 
Unfortunately this is not as expected. We provide a counterexample below.
\begin{itemize}
% \item A counterexample in \HOCCSm\ for the non-coincidence between weak bisimilarity and quasi-strong bisimilarity.
% \[P\DEF Q_d\para \overline{d}Q_d \qquad
% \mbox{ in which }
% Q_d\DEF d(X).e.(X\para \overline{d}X)
% \]
% Now $P\st{\tau} e.(Q_d\para \overline{d}Q_d) \DEF P_1$. 
% We have $P \,\HOMWB\, P_1$, but $P \,\NHOMQSB\, P_1$, because $P_1$ can make a $\tau$ whereas $P$ cannot. 
\item In \HOCCSm, a counterexample articulating the non-coincidence between the weak bisimilarity and the quasi-strong bisimilarity is as follows. Let $\HOCCSMQSB$ denote the quasi-strong bisimilarity for \HOCCSm, which is defined as Definition \ref{d:quasi-strong-bisi} except that for output the simulation is closed in the manner as the $(*)$ equation in Definition \ref{context-bisimulation}. Now we define the following process.
\[P\DEF Q_d\para \overline{d}Q_d \qquad
\mbox{ in which }\quad
Q_d\DEF d(X).e.(X\para \overline{d}X)
\]
Then $P\st{\tau} e.(Q_d\para \overline{d}Q_d) \DEF P_1$. 
We have \rc{\xcancel{$P \,\HOCCSMWB\, P_1$, but $P \,\NHOCCSMQSB\, P_1$}} (\rc{HERE $P$ and $P_1$ are NOT weakly bisimilar, because $P$ can make an output on $d$ while $P_1$ cannot.}), because $P$ can make a $\tau$ whereas $P_1$ cannot. 
\item The separation result between \HOCCSm\ and \CCSm\ concerning the existence of the quasi-strong bisimilarity, with the accompanying examples and discussion, inherently exhibits the essential difference between higher-order and first-order communications, even in the setting without the restriction operation. 
\end{itemize}
}
%\item 
}%hiding toggle

%\end{itemize}

%---------------------------
% Local Variables:
% mode: LaTeX
% TeX-master: "main.tex"
% End:
%\clearpage
%\input{part_pim.tex}\clearpage %this part to be moved, for future work. (renamed to "part_pim_NOTusedNOW.tex")
%% !TEX root = ./main.tex

\section{Conclusion}\label{s:conclusion}

% \sepp\sepp
% \fbox{\nts{\large \xxa{RE-RE-RE-ENGAGE FROM \rc{$\clubsuit\clubsuit$ HERE HERE HERE!!!}: 
% %(0) ADD the missing parts (e.g., Defs); 
% }}}
% \sepp\sepp

This paper has been focusing on the relationship between the strong and the weak bisimilarities, in process models from which the restriction operator is removed. We have presented a few observations about such relationship in both a first-order model (\CCSm) and a higher-order one (\HOCCSm). Basically, it is shown invariant in both models that the weak bisimilarity remains strictly weaker than the strong bisimilarity, even without the capacity of hiding information. Essentially, this is a consequence of the replication operation, though the situation is a bit different in \CCSm\ and \HOCCSm, because the replication is primitive in the former but derivable in the latter. Anyhow, what we have shown illustrates that the replication operation somehow can also `hide' information, but in a sharply different way, i.e., it offers plenty of the same processes (and actions). Though slightly beyond expectation, we can still succeed in reducing the distance between the strong and weak bisimilarities in \CCSm. That is, we show that in \CCSm, the strong bisimilarity can be approached by the so-called quasi-strong bisimilarity. Formally, this bisimilarity intensifies the weak bisimilarity in two respects: requiring strong bisimulation for silent actions and relinquishing trailing silent actions in matching a visible action. As it appears, the quasi-strong bisimilarity tightens up the weak bisimilarity, toward the strong bisimilarity. To this end, a key result is that the quasi-strong bisimilarity coincides with the weak bisimilarity. 
This coincidence conveys that the weak bisimilarity can be reinforced by demanding more than its original requirement about simulation, and thus becomes quasi-strong, while maintaining its original discriminating power. Moreover, it reveals that the absence of the restriction operator does not cause zero effect, and we can hopefully take advantage of this to make the weak bisimilarity more tractable. 
%Yet in \HOCCSm, this is again not the case, i.e., quasi-strong bisimilarity sits right inside weak bisimilarity. 
We have also discussed the relationship between the quasi-strong bisimilarity and the branching bisimilarity. As a significant spinoff, we show that the branching bisimilarity is coincident with the weak bisimilarity.

There are some questions worthy of further investigation. A first one is to prove or disprove the %Conjecture \ref{conj:hoccs_struc_striin_strong}. 
conjecture, aforementioned in Section \ref{s:hoccsm}, that in \HOCCSm\ the strong context bisimilarity collapses onto the structural congruence.
A second one is to seek in \HOCCSm\ a counterpart of the quasi-strong bisimilarity like Definition \ref{d:quasi-strong-bisi} for \CCSm\ (potentially with \emph{novel} idea), and attempt to prove its agreement with the weak context bisimilarity. Intuitively, we believe that the weak bisimilarity can be tightened toward the strong bisimilarity in \HOCCSm. This quest would become even more interesting, especially when taking into account the outcome of %Conjecture \ref{conj:hoccs_struc_striin_strong} 
the conjecture on the coincidence between the strong context bisimilarity and the structural congruence. % in the meanwhile. 
%A third one is to examine the similar results for the branching bisimilarity, in both \CCSm\ and \HOCCSm. 
One more direction is to exploit more (process) models without the restriction operator, e.g., value-passing models or (higher-order) ambient models, for properties that stem from the absence of the operator.

%---------------------------
% Local Variables:
% mode: LaTeX
% TeX-master: "main.tex"
% End:

%\input{bisimulation.tex}
%\input{bisimulation_originalContextVersion.tex}
%\input{bisimulation_varHOBisi1NotDoneYet.tex}
%\input{bisimulation_varHOBisi2NotDoneYet.tex}
%\input{ob_bisimulation.tex}
%\input{expressiveness.tex}
% \input{introduction.tex}
% \input{preliminary.tex}
% \input{encoding.tex}
% \input{normal.tex}
% \input{conclusion.tex}
%----------------------------------------------------------------------------------------------

%----------------------------------------------------------------------------------------------
%\sepp
%\noindent\textbf{Acknowledgements}\;\;
%%This work has been supported by project ANR 12IS02001 PACE and NSF of China (61261130589, 61472239, 61572318). 
%We are grateful to the referees of EXPRESS/SOS 2016 for their useful comments on this article and related work. We also thank Qiang Yin for the helpful discussion. 
%%Alan Schmitt
%\sepp

%-------------------------bibliography------------------------------------------------------------
%\clearpage
%\vspace*{-3mm}
%\nocite{*}
%\bibliographystyle{eptcs} %together with eptcs cls, this bib style will show doi<<<<<!!!!!
\bibliographystyle{eptcs}
\bibliography{process,ambients}
%-----------------------------------------------------------------------------------------------

%\bibliographystyle{unsrt}
%\bibliography{oo}

\end{document}